\newcommand\pN{\mathcal{N}}
\long\def\comment#1{}
\newcommand{\beq}{\begin{equation*}}
\newcommand{\eeq}{\end{equation*}}
\newfont{\bbb}{msbm10 scaled 700}
\newfont{\bb}{msbm10 scaled 1100}
\newcommand{\nv}{{\bf n}}
\newcommand{\vv}{{\bf v}}
\newcommand{\xv}{{\bf x}}
\newcommand{\yv}{{\bf y}}
\newcommand{\Xm}{{\bf X}}
\newcommand{\Ym}{{\bf Y}}
\newcommand{\Nc}{{\cal N}}
\newcommand{\Oc}{{\cal O}}
\newcommand{\thetav}{\hbox{\boldmath$\theta$}}
\newcommand{\Phim}{\hbox{\boldmath$\Phi$}}
\newcommand{\Psim}{\hbox{\boldmath$\Psi$}}
\newcommand{\Thetam}{\hbox{\boldmath$\Theta$}}
\renewcommand{\arg}{{\hbox{arg}}}
\begin{document}

\title{Toward Performance Optimization in IoT-based Next-Gen Wireless Sensor Networks\thanks{This research work was funded in part by the Higher Education Commission of Pakistan under the research grant number 288.67/TG/R$\&$D/HEC/2018/25181.}}
\titlerunning{Toward Performance Optimization in IoT-based Next-Gen WSNs}
%
\author{Muzammil Behzad\inst{1,2} \and
Manal Abdullah\inst{3}\and
Muhammad Talal Hassan\inst{2} \and \\
Yao Ge\inst{4} \and
Mahmood Ashraf Khan\inst{2}}

%

%
\authorrunning{\href{http://muzammilbehzad.com/}{M. Behzad et al.}}
%
\institute{Pukyong National University (PKNU), Busan 48513, South Korea \and
COMSATS University Islamabad (CUI), Islamabad 44000, Pakistan \and
King Abdulaziz University (KAU), Jeddah 21589, Kingdom of Saudi Arabia \and
The Chinese University of Hong Kong (CUHK), Shatin 999077, Hong Kong \\
Email: mbehzad@pukyong.ac.kr, maaabdullah@kau.edu.sa, talal@vcomsats.edu.pk, yge@ee.cuhk.edu.hk, mahmoodashraf@comsats.edu.pk
}

\maketitle              
\setcounter{footnote}{0} 
\begin{abstract}
In this paper, we propose a novel framework for performance optimization in Internet of Things (IoT)-based next-generation wireless sensor networks. In particular, a computationally-convenient system is presented to combat two major research problems in sensor networks. First is the conventionally-tackled resource optimization problem which triggers the drainage of battery at a faster rate within a network. Such drainage promotes inefficient resource usage thereby causing sudden death of the network. The second main bottleneck for such networks is that of data degradation. This is because the nodes in such networks communicate via a wireless channel, where the inevitable presence of noise corrupts the data making it unsuitable for practical applications. Therefore, we present a layer-adaptive method via 3-tier communication mechanism to ensure the efficient use of resources. This is supported with a mathematical coverage model that deals with the formation of coverage holes. We also present a transform-domain based robust algorithm to effectively remove the unwanted components from the data. Our proposed framework offers a handy algorithm that enjoys desirable complexity for real-time applications as shown by the extensive simulation results.

\keywords{Coverage holes \and Denoising \and Energy efficiency \and Energy holes \and Sparse representations \and Wireless sensor networks.}
\end{abstract}
\section{Introduction}
Recent technological-accelerations for surging advancements regarding industrial applications in Internet-of-Things (IoT) based wireless communication have significantly aided major scientific and research platforms, where the main focus is to propose exceptionally elegant and convenient systems in terms of computational cost, design and practical execution. With these tremendous efforts available at hand, the consumer electronics industry have been made confident to manufacture wireless devices with economical value, tiny structure, and the capability of effectively utilizing the in-hand battery resources. Toward this end, sensors-based wireless networks have earned significant importance in the unlimited development of information and communication technologies \cite{7110295}. However, since many of these devices are restricted by the resources available to them at hand, the communication overhead and power consumption are, hence, critical areas of research for analysis, manufacturing and development of such wireless networks in order to achieve efficient management in IoT.

Wireless sensor networks (WSNs) are made up of small, portable and energy-restricted sensor nodes deployed in an observation venue. These nodes carry the baggage of transmitting vital information using wireless radio links. Such information can have the form of multi-dimensional signals, and are of critical importance in many world-wide applications. The development of such networks demands extensive planning strategy along with superior tactical approaches for~its working capabilities. This effective development motivates~the existence of many real-time application scenarios such as environmental control \cite{7887698}, under-water networks \cite{umar2014enhancing}, battlefield surveillance \cite{7528266}, medical and health-care systems \cite{6883890, 10.1007/978-3-662-48145-5_9}, and many more \cite{behzad2014design, sandhu2014mobility, sandhu2014reec, behzad2018m, behzad2018technology}.

\subsection{Underlying Structure of WSNs}
As a function of the underlying transmission mode, WSNs can be categorized by two types of communication mechanisms: 1) direct or single-hop, and 2) multi-hop, as shown in Fig. \ref{fig:fig0}. In the former method, the nodes in a network transmit their data directly to the base station (BS), also known as the sink. This in turn drains-up the battery life of the nodes, hence, resulting in an early death of the network. Therefore, this type of communication is not recommended for efficient and practical approaches. On the other hand, the later approach suggests a much more promising deal. In this method, the nodes are not needed to communicate with the BS directly, and are, therefore, allowed to send their data to BS in multiple steps. This ultimately lessens the burden on each node, and allows the network to remain stable for a longer period of time. 
\begin{figure}[b!]
	\centering
	\includegraphics[width=0.9\linewidth]{./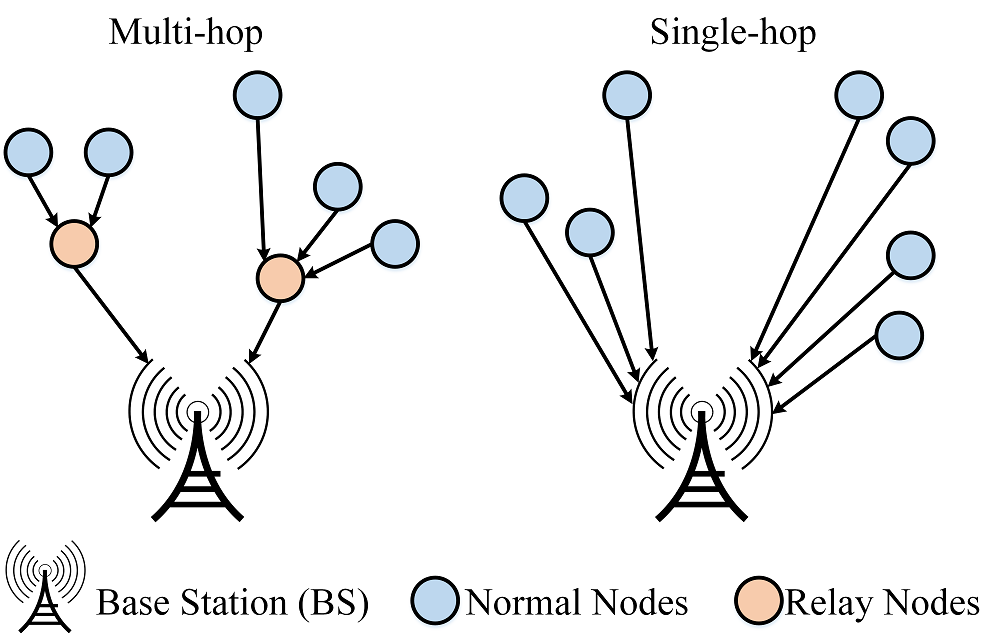}
	\caption{Multi-hop vs. Single-hop Communication Mechanism}
	\label{fig:fig0}
\end{figure}

Similarly, another distribution of WSNs is based on the type of the response that the nodes usually exhibit. Specifically, WSNs can be designed as either proactive or reactive. In a proactive mode, the nodes keep their transmitters continuously active, and periodically transmit the data independent of any parameters. Consequently, such power-hunger transmissions result in an inefficient energy utilization. On the contrary, in the reactive mode, the nodes respond only to the events that, for example, exceeds a certain threshold or when a specific event has been triggered. Since the nodes only respond to drastic changes and keep their transmitters turned-off otherwise, this yield a practically convenient system with an elongated network lifespan.\vspace{0.5cm}

\subsection{Research Developments in WSNs}
The first step in establishing a WSN is the initialization and distribution of sensor nodes around the observation field. Many researchers have advocated normal distribution of the statically deployed nodes as the optimal distribution (e.g., see \cite{7368292}). The deployment of nodes in the network field area is then followed by transmission of the required data. Since these nodes are left unattended, with limited resources at hand, efficient utilization of the available resources becomes a key operation factor to form a vigorous and standalone network.

To tackle this, many protocols recommended clustering of the network area, a pioneer contribution by W. B. Heinzelman \cite{heinzelman2000application}. Fundamentally, clustering divides the field into multiple smaller observation versions thereby making resource management a comparatively convenient task \cite{7016049, saleem2014iddr, 7920983, behzad2017distributed}. However, this requires free and fair election of cluster heads (CHs) in each cluster. These CHs are responsible for data fusion, i.e., they receive data from their respective clusters' normal nodes, and transmit them to the BS.

Traditionally proposed protocols for WSNs focus mainly on performance improvements via effective selection criterion for CHs, the choice of single-hop or multi-hop communication between nodes, and whether the clustering scheme should be static or dynamic. Even though an optimal combination of the above factors yield interesting results, however, impressive results can be achieved by looking into more exciting parts of the problem. Furthermore, another much concerned and barely discussed side of the problem is the degradation due to environment. A common form of such inevitable degradation affecting the data sent over radio links is that of additive white Gaussian noise (AWGN). Several designed protocols discard this important issue and just focus on minimizing the energy consumption by assuming that the data received by nodes have not experienced any noise addition due to the environment.

\subsection{Notations}
\label{notations}
In rest of the paper, we use the following notations. We represent all the vectors used in our work with small case and bold face letters (e.g., $\yv$), all the scalars with small case normal font letters (e.g., $y$). We reserve upper case and bold face letters (e.g., $\Ym$) for matrices. For sets, we use calligraphic notation (e.g., $\Nc$). We use $\yv_i$, $y(j)$ and $\Nc_k$ to denote $i$th column of matrix $\Ym$, $j$th element of vector $\yv$ and a subset of $\Nc$, respectively.

\subsection{Recoveries via Sparse Representations}
Contrary to the traditional Nyquist-Shannon sampling theorem, where one must sample at least double the signal bandwidth, compressive sensing (CS) has emerged out as a new framework for data acquisition and sensor design in an extremely competent way. The basic idea is that if the data signal is sparse in a known basis, a perfect recovery of the signal can be achieved leading to a significant reduction in the number of measurements that need to be stored.

According to CS, the following model can be used to recover an unknown vector $\vv$ from an under-determined system:
\begin{equation}
\xv = \Phim \vv = \Phim \Psim \thetav = \Thetam \thetav,
\end{equation}
where $\xv \in {\mathbb{C}^M}$, $\vv = \Psim \thetav \in {\mathbb{C}^N}$ are observed signal and unknown vector, $\thetav\in {\mathbb{C}^N}$ is unknown sparse signal which, for example, a node will collect, representing projection coefficients of $\vv$ on $\Psim$, $\Thetam = \Phim \Psim$ is an $M \times N$ reconstruction matrix ($M < N$). The measurement matrix $\Thetam$ is designed such that the dominant information of $\thetav$ can be captured into~$\xv$.

Reconstruction algorithms in CS exploit the fact that many signals are genetically sparse, therefore they proceed to minimize ${\ell_0},{\ell_1}$ or ${\ell_2}$-norm over the solution space. Among them, $\ell_1$-norm is the most accepted approach due to its tendency to successfully recover the sparse estimate $\hat {\thetav}$ of $\thetav$ as follows:
\begin{align}\label{opt.CS}
\hat {\thetav} = \mathop {\arg \min }\limits_{\thetav} {\left\| {\thetav} \right\|_{{\ell_1}}}, \quad \text{subject to} \quad {\xv} \approx \Phim \Psim \thetav.
\end{align}
\begin{figure}[t]
	\centering
	\includegraphics[width=0.8\linewidth]{./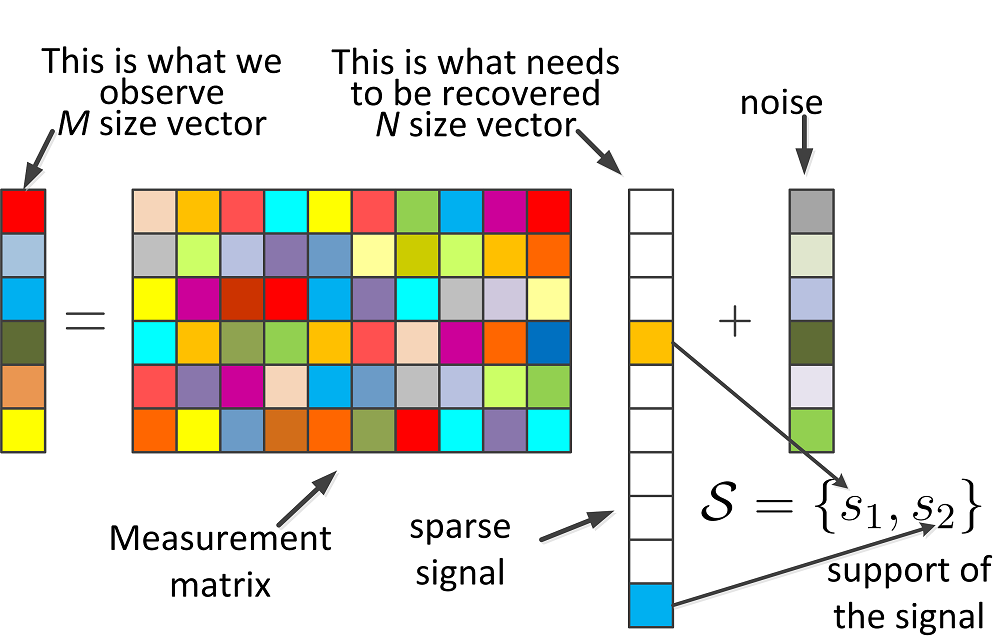}
	\caption{Sparse Model}
	\label{fig:fig1}
\end{figure}

In this regard, however, the inevitable presence of noise in wireless channels is always a challenging task to combat. Consequently, the system is modeled as
\begin{equation}
\yv = \xv + \nv = \Thetam \thetav + \nv,
\end{equation}
where $\yv \in {\mathbb{C}^M}$ is the noisy version of the clean signal $\xv \in {\mathbb{C}^M}$ which is corrupted by the noise vector $\nv \in {\mathbb{C}^M}$ with i.i.d. zero mean Gaussian entries having variance $\sigma_{\nv}^2$, i.e., \boldmath{$\nv(.)$} $\sim \pN(0, \sigma_{\nv}^2\textbf{1})$. A depiction of this model is shown in Fig.~\ref{fig:fig1}.

\subsection{Contribution}
A part of this work has already been published in \cite{behzad2018AINA}, and this is the extended version of our previous work. In \cite{behzad2018AINA}, we introduced a novel framework to tackle two major concerns in WSNs: 1) performance optimization via efficient energy utilization, and 2) combating the unavoidable presence of Gaussian noise, added as a result of multiple communications among the nodes via wireless channel. We proposed a fast and low-cost sparse representations based collaborative system enriched with layer-adaptive 3-tier communication mechanism. This is supported by an effective CHs election method and mathematically convenient coverage model guaranteeing minimization of energy and coverage holes. A computationally desired implementation of our framework is an added benefit that makes it a preferable choice for real-world applications. 

To tackle AWGN, the data is transmitted in spatial-domain form and its sparse estimates are later computed at the receiver end. For a much better denoising, we let the nodes situated at a single-hop to mutually negotiate with each other for better collaboration. The data denoising is further refined by a specially designed averaging filter. 

In this paper, we extend the concept of image denoising in wireless sensor networks. Specifically, we propose to use region growing based efficient denoising mechanism where we divide the entire image into various sub-regions based on their intensities, and apply smoothening filter. Motivated by this, we also extend our current framework for color images where we are especially interested in exploiting inter-channel correlation of each color image. This effective piece of information plays a crucial role in identifying the noisy components, and thereby helps discarding those components. Our proposed protocol lends itself the following salient~features:
\begin{itemize}
	\item[--] The implementation of a mathematically efficient coverage model along with an adaptive CHs election method help avoiding coverage holes to a greater extent.
	\item[--] Our proposed layer-adaptive 3-tier communication system greatly reduces energy holes.
	\item[--] To compute denoised data signals, we compute support-independent sparse estimates which relieves us from finding distribution of the sparse representations first, hence, giving it a support-agnostic nature.
	\item[--] Prior collaboration enjoyed by the nodes for communication yields an effectively significant energy minimization.
	\item[--] The use of a fast sparse recovery technique allows a desired computational complexity of our algorithm.
	\item[--] The use of inter-channel correlation among red, green and blue channels of color images not only makes it a suitable choice for denoising of color images, but it also provides a convenient solution for fast and practical image denoising applications.
\end{itemize}

Rest of the paper is structured as follows: Section \ref{Related_Work} presents an overview of the related work done in this area. We describe our proposed framework in Section \ref{Proposed_Framework}, while the results from various simulations are presented and discussed in Section \ref{Results_Discussions}. Finally, Section \ref{Conclusion} concludes the paper.

\section{Related Work}
\label{Related_Work}
Presently, researchers are fundamentally concentrating on the technologically-enriched tools for performance optimization of network structure, as a result of which the lifetime of WSNs is possible to increase. This possibility provides a roadway for scientists working in this research domain to propose low-cost, energy-efficient and optimized algorithms~\cite{8109448, behzad2017layer}. 

This includes a trend-setter work by W. B. Heinzelman, et al., proposing a multi-hop energy efficient communication protocol for WSNs, namely LEACH \cite{926982}. The objective was to reduce energy dissipation by introducing randomly elected CHs resulting in, however, unbalanced CHs distribution. Nevertheless, the partition of network area into different regions via clustering yielded a significant increase in system lifetime. As a principal competitor, a new reactive protocol, named as TEEN, was proposed by authors of \cite{925197} for event-driven applications. This protocol, even though constrained to temperature based scenarios only, proposed threshold aware transmissions thereby outperforming LEACH in terms of network lifespan.

In comparison with the aforementioned homogeneous WSNs, the authors of \cite{smaragdakis2004sep} and \cite{QING20062230} proposed SEP and DEEC, respectively, introducing heterogeneous versions of the WSNs by allowing a specific set of nodes, defined as advanced nodes, to carry higher initial energy than other normal nodes. SEP used energy based weighted election to appoint CHs in a two-level heterogeneous network ultimately improving network stability. As a stronger contestant to SEP, DEEC deployed multi-level heterogeneity and improvised CHs election measure to attain extended lifespan of the network than SEP.

A. Khadivi, et al., proposed a fault tolerant power aware protocol with static clustering (FTPASC) for WSNs in \cite{1696382}. The network was partitioned into static clusters, and energy load was distributed evenly over high-power nodes, resulting in minimization of power consumption, and increased network lifetime. Another static clustering based sparsity-aware energy efficient clustering (SEEC) protocol is proposed in \cite{SWQ28921546.OW12N}. This protocol used sparsity and density search algorithms to classify sparse and dense regions. A mobile sink is then exploited, specifically in sparse areas, to enhance network lifetime.

As opposed to static clustering, authors of \cite{5136647} presented centralized dynamic clustering (CDC) environment for WSNs. In this protocol, the clusters and number of nodes associated with each cluster remains fixed, and a new CH is chosen in each round of communication between clusters and BS. CDC showed better results than LEACH in terms of communication overhead and latency. In a similar fashion, G.~S.~Tomar, et al., proposed an adaptive dynamic clustering protocol for WSNs in \cite{4809826}, which creates a dynamic system that can change topology architecture as per traffic patterns. Mutual negotiation scheme is used between nodes of different energy levels to form energy efficient clusters. Periodic selection of CHs is done based on different characteristics of nodes. Another work proposed to use the cooperative and dynamic clustering to achieve energy efficiency \cite{7822956}. This framework ensured even distribution of energy, and optimization of number of nodes used for event reporting thereby showing promising results.

D. Jia, et al., tackled the problem of unreasonable CHs selection in clustering algorithms \cite{7365420}. The authors considered dynamic CH selection methods as the best remedy to avoid overlapping coverage regions. Their experimental results showed increased network lifetime than LEACH and DEEC. Another energy efficient cluster based routing protocol, termed as density controlled divide-and-rule (DDR), is proposed in \cite{S2RT4387429.OW12N}. The authors tried to take care of the coverage and energy holes problem in clustering scenarios. They presented density controlled uniform distribution of nodes and optimum selection of CHs in each round to solve this issue. Similarly, a cluster based energy efficient routing protocol (CBER) is proposed in \cite{6531736}. This protocol elects the CHs on the basis of optimal CH distance and nodes' residual energy. CBER reported to outperform LEACH in terms of energy consumption of the network, and its lifetime.

\section{Proposed Framework Design}
\label{Proposed_Framework}
In this section, we provide the readers with detailed understanding of our proposed routing protocol. Here, we broadly discuss the widely accepted radio model for communication among nodes. This is then followed by a comprehensive explanation of our adopted network configuration and its operation details for energy efficiency and denoising of the~data.

\subsection{Wireless Communication Model}
For transmission and reception of required data among sensor nodes via wireless medium, we assume the simple and most commonly used first order radio communication model as given in Fig. \ref{fig:fig2}. In this figure, we present the energy consumed by a node while transmitting and receiving data. We show that a packet of data traveling over radio waves has to combat against degrading factors such as noise, multi-path fading, etc. Thus, we also take into account the $d^2$ losses that almost all chunks of data has to face. This is mathematically explained in terms of the following expressions:
\begin{equation}
E_{Tx}(k,d) =
\begin{cases}
k\times (E_{elec}+ \epsilon_{fs}\times d^2), & d < d_o\\
k\times (E_{elec}+ \epsilon_{mp}\times d^4), & d \geq d_o
\end{cases}
\end{equation}
\begin{equation}
\hspace{-5cm} E_{Rx}(k) = E_{elec}\times k,
\end{equation}
where $d_o$ is a reference distance, $k$ is the number of bits in packet, $d$ is the transmission distance which varies every time for each node, $E_{elec}$ is the energy used for data processing, $\epsilon_{fs}$ and $\epsilon_{mp}$ are channel dependent loss factors\footnote{This is worth noting that over larger distances, such loss factors demand a higher amount of energy yielding sudden death of the network. This is often missed by traditional protocols assuming lossless channel. Therefore, avoiding these power-hungry transmissions significantly optimize resources.}, $E_{Tx}$ is the energy used by a node for transmission, and $E_{Rx}$ is the energy used by a node for data reception. As shown, the $d^r$ losses may change from $\left. d^r \right \vert_{r=2}$ to $\left. d^r \right \vert_{r=4}$ forcing a higher value of $E_{Tx}$. A similar increase is then observed in the $E_{Rx}$ values to process a highly corrupted data when assuming a noisy environment, as in our case. The generally used energy dissipation values  for a radio channel are presented in Table~\ref{tab1}.
\begin{figure}[t]
	\centering
	\includegraphics[width=1\linewidth]{./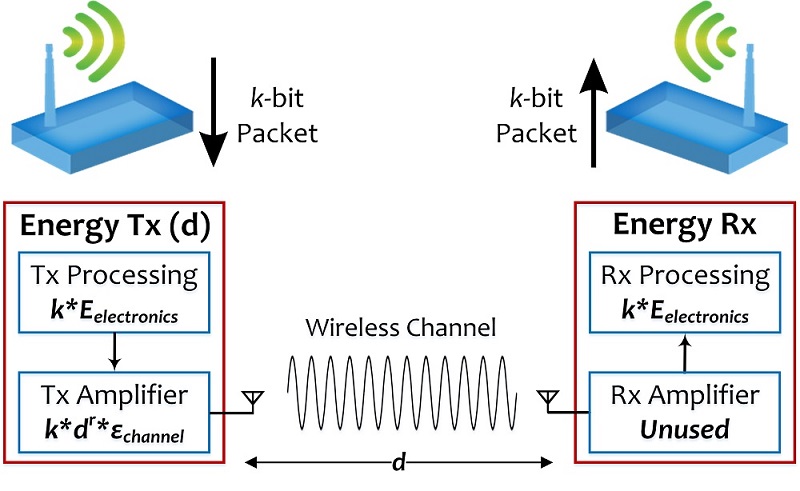}
	\caption{Radio Model}
	\label{fig:fig2}
\end{figure}
\begin{table}[b]
	\caption{\textbf{Energy Dissipation Measurements}}
	\centering
	\renewcommand{\arraystretch}{1}
	\begin{tabular}{|l|l|}
		\hline \hline
		\textbf{Dissipation Source} & \textbf{Amount Absorbed} \\ [0.5ex]
		\hline \hline
		$E_{elec}$ of Rx and Tx & 50 nJ/bit \\
		\hline
		Aggregation Energy  & 5 nJ/bit/signal\\
		\hline
		Tx Amplifier $\epsilon_{fs} \text{for} \left. d^r \right \vert_{r=2}$ & 10 pJ/bit/4m$^2$ \\
		\hline
		Tx Amplifier $\epsilon_{mp} \text{for} \left. d^r \right \vert_{r=4}$ & 0.0013 pJ/bit/m$^4$ \\
		\hline\hline
	\end{tabular}
	\label{tab1}
\end{table}

\subsection{Network Configuration}
\begin{figure*}[t]
	\centering
	\includegraphics[width=1\linewidth]{./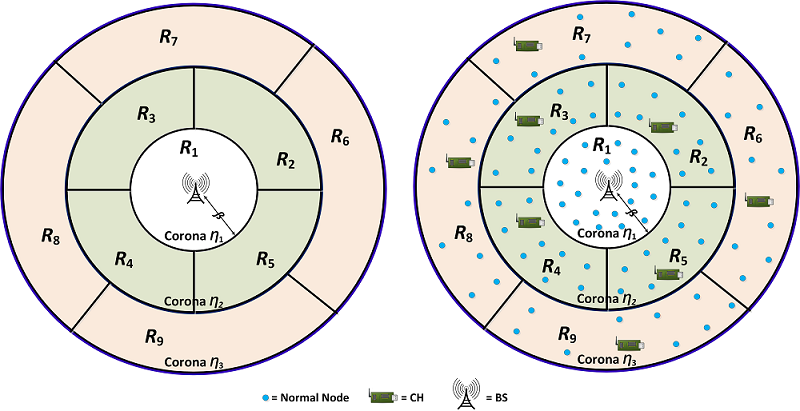}
	\caption{Network Model: (a) Network Configuration, (b) Nodes Deployment}
	\label{fig:fig3}
\end{figure*}
For the configuration model, we use a network consisting of $L$ number of nodes deployed randomly. Unlike the traditional models, we adopt a spherically-oriented field, and propose to use an optimized version of area division via adaptive clustering. For a much better understanding, see the network model shown in Fig.~\ref{fig:fig3}. Here, for the sake of simplicity and understanding, we use the field area $A = \pi150^2\text{ m}^2$, i.e., the diameter $D = 300$, with a total of $L = 100$ deployed nodes. To avoid formation of energy holes, and thus the death of network, we place the BS in the center of the network at coordinates $<i,j> = (0,0)$. This is followed by the clustering of network field into various coronas which are then further classified into different sensing-based reporting regions. The prior computation of number of coronas, represented by $\eta$, is a function of field area $A$, which itself is depending on $D$, and the number of nodes $L$. As a sound approximate, we propose $\eta = D/L$. Hence, in our case we use $\eta$ = 3 coronas, denoted accordingly by $\eta_1, \eta_2 \text{ and } \eta_3$.

Once the $\eta$ number of coronas are formed, the next step is to divide each corona into various sensing regions as shown in the figure. However, for a much better network performance, the distribution is such that each sensing region in the upper level corona $\eta_\alpha$ surrounds two sensing regions in lower level corona $\eta_{\alpha-1}$. This is shown in Fig. \ref{fig:fig3}(a), where for example region $R_7 \text{ in } \eta_3$ covers both $R_2 \text{ and } R_3 \text{ in } \eta_2$, hence, avoiding coverage holes by satisfying the following expression for a general network configuration\footnote{Here, we presented calculations for $A = \pi150^2\text{ m}^2 \text{ and } L = 100$ merely for the ease of understanding. However, for any other small or large scale network configuration, the computations can be done in a similar fashion using the proposed expressions.}:
\begin{equation}
\label{eq:area}
A = \pi(D/2)^2 = \sum_{\alpha=1}^{\eta} A_{\eta_{\alpha}} = \sum_{\alpha} A_{R_{\alpha}}, \hspace{0.365cm}\alpha \in \mathbb{Z}^+,
\end{equation}
where $A_{\eta_{\alpha}}$ and $A_{R_{\alpha}}$ represented area of each corona and sensing region, respectively. This is worth noting that we do not divide the corona surrounding BS further, $\eta_1$ in our case, to avoid unneeded and poor use of available resources. Thus, we can safely write:
\begin{equation}
A_{\eta_{1}} = A_{R_{1}} = \pi\beta^2, \text{ where }\beta \in \mathbb{R}^+.
\end{equation}

\subsection{Nodes Deployment and Layer-Controlled CHs Nomination}
As soon as the network is clustered out into various coronas and sensing regions, the next step is to distribute the nodes randomly over these regions. To optimize resources, a sensible decision is to deploy an equal percentage of nodes over different regions to ensure minimization of coverage holes, and elongation of network lifetime. Therefore, in this scenario, we propose to deploy 20$\%$ of the nodes in region $R_1$ and the rest 80$\%$ of the nodes to be distributed evenly over $R_{2,3,...,8,9}$ regions as shown in Fig. \ref{fig:fig3}(b). This nodes' deployment always depend upon the network field area and number of nodes. Hence, for any other network configuration, an adjusted percentage can be calculated to optimize communication among nodes, and to avoid energy and coverage holes.

Following the deployment of nodes and prior network initialization, the election of CHs is carried out in all $R_{2,3,...,8,9}$ regions. Since the use of CHs in clustering techniques plays an important role to improve network lifespan, effective criterion for CHs election is equally necessary for further improving performance of the network. The most commonly used measures for electing CHs are residual energy and distance from BS. We propose a blend of both to increase the life of each node. Furthermore, we introduce a layering-based election of CHs. This means that the election will take place in lower level coronas $\eta_{\alpha}$ first, and will then move to high level coronas $\eta_{\alpha+1}$ for higher level CHs. The reason to adopt this is the effectiveness noted in CHs election. Thus, in each round, all the nodes are assessed based on their residual energies and top 5$\%$ of the nodes having highest residual energies in their respective regions are shortlisted. These shortlisted nodes then contest against each other where the node with smallest distance to the CHs of both associated regions in lower level corona is elected as CH. The nodes in $\eta_2$ are evaluated in a similar fashion based on the residual energy but having minimum distance with the center of their respective region.

\subsection{Layer-Adaptive 3-Tier Communication Mechanism}
For transfer of data among various nodes, we propose a layer-adaptive 3-tier architecture. Our communication mechanism is enriched with distance-optimized transmissions to avoid wastage of energy. The nodes use a multi-hop scheme instead of directly transmitting the data of interest to BS. In tier-1 phase, all the normal nodes gather data, and send it to the nearest CH. This CH may not necessarily be the same region CH. Here, we allow nodes in $\eta_\alpha$ to transmit the data to CHs of even $\eta_{\alpha-1}$. This is the reason why we distributed the sensing regions in such a way that each region in upper level corona is bordered with two regions in lower level corona. However, the nodes of a region in $\eta_\alpha$ cannot transmit to another region on the same corona, i.e., it must either send data to its own CH in $\eta_\alpha$, or any other nearest CH in the two bordered regions on lower level corona $\eta_{\alpha-1}$ as explained in Fig. \ref{fig:figcommmodel}.
\begin{figure*}[t]
	\centering
	\includegraphics[width=1\linewidth]{./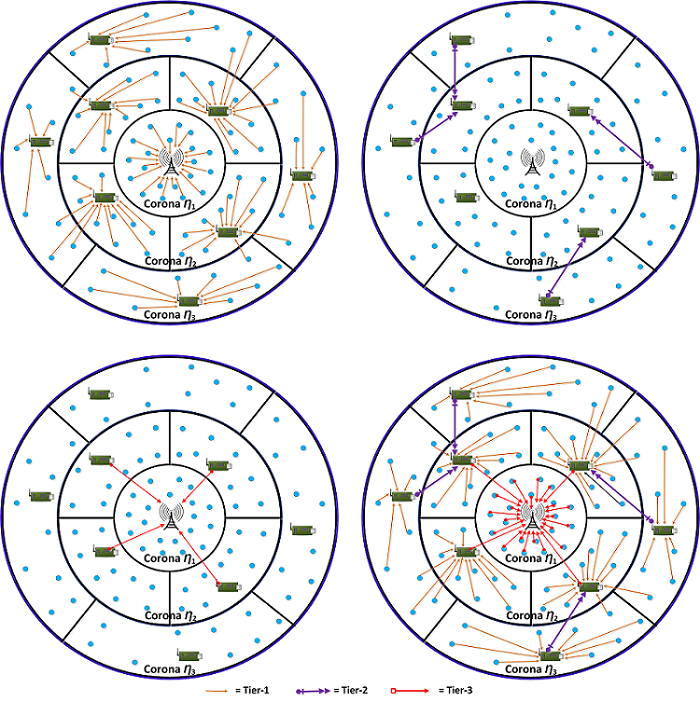}
	\caption{3-Tier Communication Architecture}
	\label{fig:figcommmodel}
\end{figure*}

In the next tier-2 phase of communication, the CHs of $\eta_\alpha$ aggregate their data and then send it to the CHs of $\eta_{\alpha-1}$. Note that even though the CH of $R_3$ is receiving data from CHs of both $R_7 \text{ and } R_8$, this is blessing in disguise. This is because, as shown in the figure, the CHs of both $R_7 \text{ and } R_8$ have not received data from all the nodes in its region, since some nodes find another nearest CH, so these CHs are aggregating and then forwarding a comparatively smaller amount of load thereby not overburdening themselves. Also, the CHs change in each round based on the election criterion, so it ultimately saves energy. Finally in tier-3 phase, all the CHs in lower level coronas send their data to the BS, hence, completing the data transmission process.

\subsection{Coverage Model}
For reduction in coverage holes, we express the coverage scenario of nodes by a mathematical model. All the deployed sensor nodes are represented in set notation as $\kappa = \{\mu_1, \mu_2, \mu_3, ..., \mu_L\}$. The coverage model of one alive node $\mu_\alpha$ belonging to the set $\kappa$ can be expressed as a sphere centered at $<i_\alpha,j_\alpha>$ with radius $h_\alpha$. We let a random variable $\aleph_\alpha$ define an event when a data pixel $<a,b>$ is within the coverage range of any node $\mu_\alpha$. As a result, the equivalent of likelihood of the event $\aleph_\alpha$ to happen, as denoted by $P\{\aleph_\alpha\}$, is represented as $P_{cov}\{a,b,\mu_\alpha\}$. A decomposed version of the above is given as follows:
\begin{multline}
P\{\aleph_\alpha\} = P_{cov}\{a,b,\mu_\alpha\}=\\
\begin{cases}
1, & (a-i_\alpha)^2 + (b-j_\alpha)^2 \le h_\alpha^2\\
0, & \text{otherwise}
\end{cases}
\end{multline}
where the equation translates that a data pixel $<a,b>$ is surrounded by the coverage range of any random node $\mu_\alpha$ if the distance between them is smaller than the threshold radius $h_\alpha$. However, since the event $\aleph_\alpha$ is stochastically independent from others, this means $\left. h_\alpha \text{ and } h_\gamma \text{ are not related} \implies \alpha, \gamma \in [1,L] \right.$ and $\left. \alpha \ne\gamma\right.$. This gives us the following conclusive equations:
\begin{equation}
P\{\overline{\aleph_\alpha}\} = 1-P\{\aleph_\alpha\} = 1-P_{cov}\{a,b,\mu_\alpha\},
\end{equation}
\begin{multline}
P\{\aleph_\alpha \cup \aleph_\gamma \} = \\ 1-P\{\overline{\aleph_\alpha}\cap\overline{\aleph_\gamma}\}= 1-P\{\overline{\aleph_\alpha}\}.P\{\overline{\aleph_\gamma}\},
\end{multline}
where $P\{\overline{\aleph_\alpha}\}$ denotes the statistical complement of $P\{\aleph_\alpha\}$ which means that $\mu_\alpha$ failed to assist data pixel $<a,b>$. Importantly, this data pixel is given coverage if any of the nodes in the set is covering it otherwise a coverage hole would form. Hence, the following expressions denote the probability such that data pixels would be within the coverage range of at least one of the nodes in the set to minimize coverage holes:
\begin{multline}
P_{cov}\{a,b,\kappa\} = P\{\bigcup\limits_{\alpha=1}^{L}\aleph_\alpha\} = 1 - P\{\bigcap\limits_{\alpha=1}^{L}\overline{\aleph_\alpha}\},\\
=  1 - \prod_{\alpha=1}^{L}(1-P_{cov}\{a,b,\mu_\alpha\}).
\end{multline}
For further facilitation, we present the coverage rate as fraction of area under coverage, denoted by $Q$, and the overall area of the observation field as follows:
\begin{equation}
P_{cov}\{\kappa\}=\sum_{\alpha=1}^{L}\sum_{\gamma=1}^{L}\frac{P_{cov}\{a,b,Q\}}{A}
\end{equation}

\subsection{Data Denoising}
After taking care of the energy efficiency, second major problem is retrieving the original data back. This is because the received data is generally degraded by AWGN so it is of no use unless denoised. For this purpose, we propose denoising of the data samples via Bayesian analysis based sparse recovery techniques. To do so, we take into account the data correlation of various adjacent nodes, and use this as an important piece of information for collaboration among nodes. 

We use three stages for CS based sparse recovery technique to denoise the data. In doing so, received data is converted to sparse domain first (e.g., wavelet transform for images data). This is followed by computing similar and correlated data by adjacent nodes, giving them weights based on the similarity extent. Using equivalent sparse representations of data samples, probability of active taps is computed giving us the location of undesired corrupted support locations \cite{7952375}. With the help of correlation information, an averaging based collaborative step is performed to remove the unwanted noisy components as shown via flowchart in Fig \ref{fig:fig4}. Here, we denote the initially denoised image by $\bar{\Xm}_{d}$. 

Finally, we apply a specially developed averaging filter to further smooth out the data as discussed in the later sections. This filter fundamentally works on finding similar data samples, and then averaging those samples to provide a clean estimate of the data. Using a CS based pre-determined dictionary, a reverse transform is applied to give back the denoised data in spatial-domain representation as $\hat{\xv} = \Thetam \hat{\thetav}$.

\subsubsection{Similarity via Distance vs. Correlation:}
\label{Patches_Grouping}
For the similar and correlated data, we first compute samples from the data, for example, overlapping patches or blocks in images. Once the overlapping patches are formed, the next step is to find a certain number of similar patches, for each patch, that would be used during collaboration. The grouping of patches in such a way using a similarity measure has led to a number significant improvements in a wide range of application like signal/image/bio-medical processing, computer vision, machine intelligence, etc. (e.g., see \cite{krause2015shape, 6853394, 7051524, 7351075, 7457822, bahrami2016reconstruction, 7952375, behzad2018image}).	

A number of techniques for similarity based grouping of patches have been proposed in the literature. Some of those include self-organizing maps \cite{VanHulle2012}, vector quantization \cite{1056457}, fuzzy clustering \cite{hoppner1999fuzzy} and a review on these \cite{jain1999data}. The recently developed denoising algorithms use a distance based measure where similarity between different signals are realized in terms of the inverse of the point-wise distance between them. Therefore, a smaller distance between the signals would imply a higher similarity and vice versa. The generally used distance based similarity measure is the Euclidean distance as used by the state-of-the-art denoising image algorithms like NL-means \cite{1467423}, BM3D \cite{4271520}, etc.

However, despite being an effective way of finding similarity, Euclidean distance based similar-intensity grouping has a limitation; it limits the search for number of similar patches. For instance, even though natural images have some similarity in their structure, the number of similar patches vary. Consequently, in an image having a smaller number of similar patches, the collaboration is not that effective thereby disturbing the performance of denoising, especially in case of high noise. This creates a bottleneck specifically for lower resolution images where finding similar-intensity patches becomes a difficult task.

To tackle this case and have a similarity measure that can be used globally even in lower resolution images or images having a smaller number of similar-intensity patches, novel methods are being proposed to find better ways of collaboration by using efficient grouping of similar patches. For example, the authors in \cite{7005524} search the similar patches by using not only a patch itself but the noise too where they propose the concept of noise similarity, while the authors in \cite{7444121} propose sequence-to-sequence similarity (SSS) which is an essential way of preserving the edge information.

In our case, we take care of the aforementioned problem by introducing intensity-invariant grouping. The idea is to stack all the patches that have a similar inherent structure without relying on the intensity values as shown in the Stage 01 of Fig. \ref{fig:fig4}. The correlation coefficient serves as the best tool to be utilized for the said purpose. For two random signals $\yv_k$ and $\yv_i$, the correlation coefficient is given as,
\begin{align}\label{eq:corr_coeff}
r(\yv_\alpha,\yv_\gamma) = \frac{ cov(\yv_\alpha,\yv_\gamma) }{\sigma_{\yv_\alpha}\sigma_{\yv_\gamma}},
\end{align}
where $-1 \le r(\yv_\alpha,\yv_\gamma) \le 1$. A value close to $1$ or $-1$ means larger positive and negative correlation, respectively, while a value close to $0$ means smaller correlation.

\subsubsection{Selection of the Measurement Matrix:}
Since we will be denoising the image patches by using the sparse estimates from collaborative filtering in the transform domain, the use of an appropriate measurement matrix or dictionary also serves as a key step. Generally, the dictionary mainly consist of basis vectors through which any random patch can be represented as a linear combination of the basis elements. In our case, we are representing any patch using the obtained sparse vector and the dictionary as already shown in Fig. \ref{fig:fig1}.

\subsubsection{Decorrelation of the Measurement Matrix:}
Each patch can be written as linear combination of basis elements from the dictionary. The columns of this dictionary are derived from wavelet basis and are normalized to have unit norms. Prior finding support sets of $\widehat{\thetav}_\alpha$ via sparse estimation of patches, we will reduce the correlation between dictionary columns for a robust computational and performance ability. Consequently, we remove weak supports by rejecting highly correlated columns as the information they encode could easily be encoded by other columns which correlate with them. We denote it by the decorrelator operator as follows
\begin{align}\label{eq:dictionary_decorrelation}
	\Thetam = \Gamma_{\tau}(\Thetam')
\end{align}
where $\Gamma_{\tau}(.)$ is the de-correlation operator that removes all the columns of $\Thetam'$ with correlation greater than $\tau$.

\subsubsection{Gaussianity Property:}
This should be noted that the Gaussianity property of the noisy data received and then aggregated at the receiver (e.g., CHs or BS) should remain intact. This is because, even though our proposed Bayesian analysis based denoising algorithm is agnostic to support distribution of the sparse coefficients, it does need the data samples to be corrupted by Gaussian noise collectively. A concise version of this is provided in the following Lemma~\ref{lemma:lemma1} to support the accuracy of our denoising algorithm.
\begin{lemma}
	The aggregated data samples received at either CHs or BS keep the Gaussianity property intact, hence, we can denoise the cumulative version of the AWGN corrupted data.
	\label{lemma:lemma1}
\end{lemma}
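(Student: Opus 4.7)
The plan is to reduce the claim to the well-known closure of the Gaussian family under independent linear combinations, and then to lift that single-step observation to the multi-tier aggregation architecture of Section~\ref{Proposed_Framework} by induction on the tier index.

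First, I would fix notation at a single CH. Suppose the CH receives data from $K$ child nodes, each of which transmits $\yv_k = \xv_k + \nv_k$ with $\nv_k \sim \pN(\zerov, \sigma_\nv^2 \Id)$, and where the noise vectors $\nv_1, \ldots, \nv_K$ are mutually independent (this is the standard AWGN assumption on independent radio links, already adopted in the system model). The aggregation performed at the CH is a linear map $\yv_{\text{agg}} = \sum_{k=1}^{K} a_k \yv_k$ for some deterministic coefficients $a_k$ (for plain summation $a_k = 1$, for averaging $a_k = 1/K$, and for a general data-fusion rule $a_k \in \RR$). Then $\yv_{\text{agg}} = \sum_k a_k \xv_k + \sum_k a_k \nv_k$, and the aggregated noise term $\tilde{\nv} := \sum_k a_k \nv_k$ is what we need to control.

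Next I would establish that $\tilde{\nv}$ is Gaussian. The cleanest route is via characteristic functions: since the $\nv_k$ are independent, the characteristic function of $\tilde{\nv}$ factorizes as
\begin{equation*}
\varphi_{\tilde{\nv}}(\tv) \;=\; \prod_{k=1}^{K} \varphi_{\nv_k}(a_k \tv) \;=\; \prod_{k=1}^{K} \exp\!\left(-\tfrac{1}{2} a_k^2 \sigma_\nv^2 \|\tv\|^2\right) \;=\; \exp\!\left(-\tfrac{1}{2}\Big(\sum_{k=1}^{K} a_k^2\Big)\sigma_\nv^2 \|\tv\|^2\right),
\end{equation*}
which is the characteristic function of $\pN\!\big(\zerov,\, \sigma_\nv^2 \sum_k a_k^2 \, \Id\big)$. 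By L\'evy's uniqueness theorem, $\tilde{\nv}$ is exactly Gaussian with zero mean and covariance $\sigma_\nv^2 (\sum_k a_k^2)\Id$, so the aggregated packet at the CH retains the form ``clean signal plus zero-mean Gaussian noise'' required by the sparse-recovery pipeline of the previous subsection.

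Finally, I would extend this one-hop statement to the full 3-tier architecture by induction on the corona level $\eta_\alpha$. At tier~1 the claim holds by the assumption on the wireless channel. At tier~2, each $\eta_{\alpha-1}$ CH receives packets whose noise components are, by the inductive hypothesis, independent Gaussians (independence across incoming links is preserved because the physical noises injected on distinct radio hops are independent and the fusion coefficients are deterministic), so the single-step argument above applies verbatim with updated variance $\sigma_\nv^2 \sum_k a_k^2$. Iterating once more yields the result at the BS in tier~3.

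The only delicate point, and the one I expect to be the main obstacle, is the independence assumption across inputs to the same CH: if a single noisy packet were routed through two disjoint paths and re-aggregated, the two copies would share a common noise component and $\tilde{\nv}$ would no longer decompose as a sum of independent Gaussians (it would still be Gaussian by joint-normality, but with a non-diagonal covariance that complicates the i.i.d.\ assumption used by the denoiser). I would handle this by appealing to the routing rule of Fig.~\ref{fig:figcommmodel}, which forbids a node from transmitting to two CHs simultaneously, so every noise realization enters the aggregate at most once and the independence required by the characteristic-function computation is preserved throughout the three tiers.
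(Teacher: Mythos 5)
Your proof is correct, but it takes a genuinely different route from the paper's. The paper proves the key closure property only for the two-sample case: it takes two independent standard Gaussian samples $P$ and $Q$, forms $Z=\rho P+\delta Q$, writes the CDF as a double integral of $\varphi(p)\varphi(q)$ over the half-plane $\rho p+\delta q\le z$, and exploits the circular symmetry of $(2\pi)^{-1}\exp(-(p^2+q^2)/2)$ together with a coordinate rotation to identify $F_Z(z)=\Delta\bigl(z/\sqrt{\rho^2+\delta^2}\bigr)$, i.e.\ $Z\sim\pN(0,\rho^2+\delta^2)$; the extension beyond $L=2$ and beyond a single hop is left implicit. You instead work with an arbitrary number $K$ of children, factorize the characteristic function of $\tilde{\nv}=\sum_k a_k\nv_k$ using independence, and invoke L\'evy's uniqueness theorem to conclude Gaussianity with variance $\sigma_\nv^2\sum_k a_k^2$, then propagate the statement through the 3-tier architecture by induction on the corona level. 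Your version is more general (arbitrary fan-in in one step, explicit multi-tier induction) and, importantly, it surfaces the one genuinely delicate hypothesis that the paper never states: that the noise components arriving at a given aggregator must be mutually independent, which you justify by the routing rule that prevents a packet from reaching the same aggregate along two paths. The paper's geometric argument is more elementary --- it needs nothing beyond a change of variables in a double integral --- but it buys only the $L=2$ case and silently assumes the rest. Both establish the same underlying fact, so your proof is a valid and arguably more complete substitute.
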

\begin{proof}
	To show this, we consider two independent Gaussian random data samples $P \text{ and } Q$ sent by nodes $\left.\mu_\alpha \text{ and } \mu_\gamma, \text{ both } \in \kappa\right.$. For data aggregated by CH, we let $\left.Z=\rho P+\delta Q\right.$. Without loss of generality, let $\rho \text{ and }\delta$ be positive real numbers because for $\rho<0$, $P$ would be replaced by $-P$, and then we would write $\mid \rho\mid$ instead of $\rho$. The commutative probability function can be written as:
	\begin{multline}
	F_Z(z)=P\{Z\le z\}=P\{\rho P+\delta Q\le z\}\\
	=\int\int_{\rho P+\delta Q\le z}\varphi(p)\varphi(q)
	dpdq
	\end{multline}
	where $\varphi(.)$ represents the unit Gaussian density function. However, as the integrand $(2\pi)^{-1}\exp(-(p^2+q^2)/2)$ possesses circular symmetry, the numerical property of this integral is a function of length of the origin from $\left.\rho p+\delta q=z\right.$. Consequently using coordinates rotation, we can conclude
	\begin{multline}
	F_Z(z)=\int_{p=-\infty}^{\zeta}\int_{q=-\infty}^{q=\infty}\varphi(p)\varphi(q)dpdq=\Delta(\zeta)
	\end{multline}
	where $\zeta=\frac{z}{\sqrt{\rho^2+\delta^2}},\text{ and }\Delta(.)$ shows standard Gaussian CDF. Hence, the CDF of $Z\vert_{L=2}$ is a zero-mean Gaussian random variable having total variance equal to $\left.\rho^2+\delta^2\right.$.
\end{proof}

\begin{figure}[t]
	\centering
	\includegraphics[width=0.85\linewidth]{./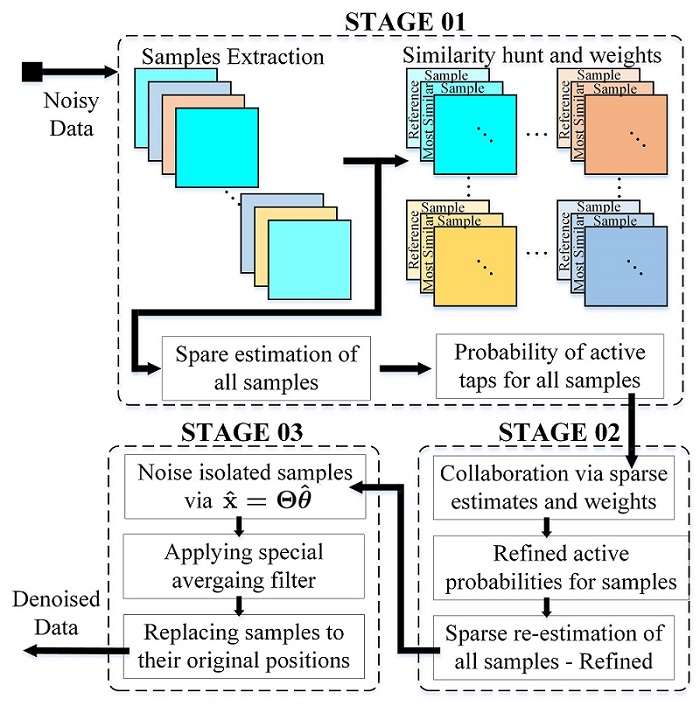}
	\caption{Flowchart of Data Denoising}
	\label{fig:fig4}
\end{figure}

\begin{figure}[t!]
	\centering
	\includegraphics[width=1\linewidth]{./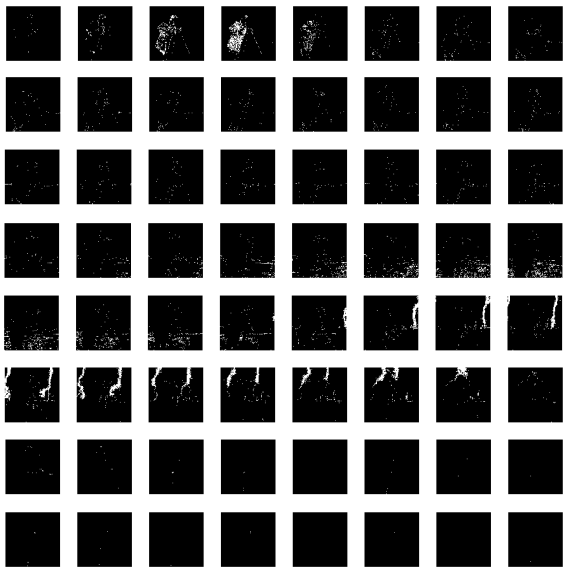}
	\caption{An example of dividing the Cameraman image into 64 different groups/bins (left to right): first row; group 1-8, second row; group 2-16, third row; group 17-24, fourth row; group 25-32, fifth row; group 33-40, sixth row; group 41-48, seventh row; group 49-56, 8th row; group 57-64}
	\label{fig:Reg_Growing}
\end{figure}
\subsection{Region Growing based Smoothening Filter}
As a final step for removing out the noisy components from the image, we perform region growing method on the output image resulted from the previous process. For this image, we store the pixels in different number of bins based on their intensity levels. For instance, we assign group 1 to the pixels that have, for example, intensity range from 0-3, group 2 to pixel intensities from 4-7, and so on. We do this for all the pixels and as a result we create different bins with pixels and their locations stored within those bin groups. We show an example of applying such intensity-leveling on the \textit{Cameraman} image in Fig. \ref{fig:Reg_Growing}. In this figure, we display all the intensity groups/bins as binary images where the white pixels correspond to the pixels of the \textit{Cameraman} image belonging to the relevant group.

For each bin, we apply the region growing algorithm to find the connected pixels within that bin. This means that the local similar intensity pixels are identified first. Afterwards, if the number of connected pixels in each bin exceed a certain threshold, then we replace those connected pixels by their mean. Similarly, we repeat this process for all the bins which ultimately provides us with the region growing based processed image that we denote by $\bar{\Xm}_{r}$. Finally, we get our final denoised image $\bar{\Xm}$ using the weighted average of the image $\bar{\Xm}_{d}$ from denoiser and the region growing processed image $\bar{\Xm}_{r}$ as follows
\begin{align}\label{eq:post_processed}
\bar{\Xm} =  \varrho_1 \bar{\Xm}_{d} + \varrho_2 \bar{\Xm}_{r},
\end{align}
where $\varrho_1$ and $\varrho_2$ are the weights which are a function of the noise variance. 

\subsection{Effective Collaboration via RGB Channels of Color Images}
As opposed to the case of grayscale single channel images, color images having three R, G and B channels that provide a more advanced way through which the patches can collaborate. Since finding similar patches using more effective approaches is the key for such collaboration, the three channels of a color images supply an important piece of information in the form of the channel correlation that can be used to identify similar patches.

To understand this, consider the three R, G and B channels of the standard Mandrill image as shown in Fig. \ref{fig:colorblock} as separate images. Since the additive white Gaussian noise is independent in all three channels of the image, we denoise the color image by denoising each channel separately. This results in formation of rectangular patches for all three channels. To denoise a patch in a specific channel of the observed color image, once the patches are extracted, similar patches are grouped together by taking into account information from both reference channel and the other two channels. 
\begin{figure}[t!]
	\centering
	\includegraphics[width=1\linewidth]{./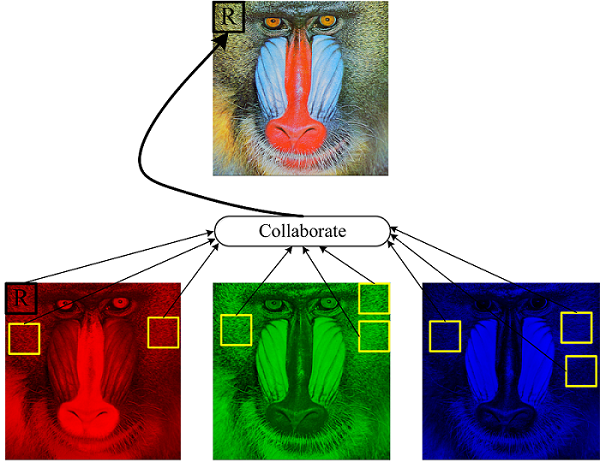}
	\caption{A depiction of collaboration among patches across all three channels}%
	\label{fig:colorblock}%
\end{figure}

For example in Fig. \ref{fig:colorblock}, to denoise the reference patch, denoted by `R', from the red channel, similar patches are grouped together from the red channel firstly. This ensures the identification of patches as similar and gives a set containing the information of similar patch numbers. Using this set from the red channel, the similar patches from other channels, for this specific patch, are also identified. Then, the reference patch in the red channel may collaborate with the patches from all channels. Since the idea is to refine the probabilities of active taps by using the sparse vectors that may share the same support, finding similar patches using all three channels can be very effective. These grouped patches for all channels can then ultimately be used to effectively estimate the sparse vectors that are in turn used to obtain denoised patches. These steps are performed for all the patches in all the three channels which ultimately provide us with a denoised color image.

\section{Computational Complexity}
\label{Computational_Complexity}
The computational complexity of our proposed framework is dominated by that of the sparse recovery algorithm that we use, which fortunately has a low computational complexity when compared to other similar existing algorithms for sparse recovery. With the dimensions of our problem at hand, the complexity for estimating one $\thetav_\alpha$ via the sparse recovery algorithm is of order $\Oc(MN^2\Upsilon)$ where $\Upsilon$ is the expected number of non-zeros that is generally a very small number.

\section{Results and Discussions}
\label{Results_Discussions}
In this section, we compare our proposed scheme with the state-of-the-art and traditional routing protocols such as LEACH \cite{926982}, TEEN \cite{925197}, SEP \cite{smaragdakis2004sep}, DEEC \cite{QING20062230} and DDR \cite{S2RT4387429.OW12N}. We use the values given in Table \ref{tab1}, and our experimentation is divided into two main scenarios: 1) efficient resource utilization, and 2) data denoising. The comparison is carried out over $\left.L = 100, 1000 \text{ and } 10000\right.$ nodes with following metrics: stability and instability period, network lifetime, energy consumption, computational complexity, peak signal-to-noise ratio (PSNR) and structural similarity (SSIM)~index.
\begin{figure}[t!]
	\centering
	\includegraphics[width=1\linewidth]{./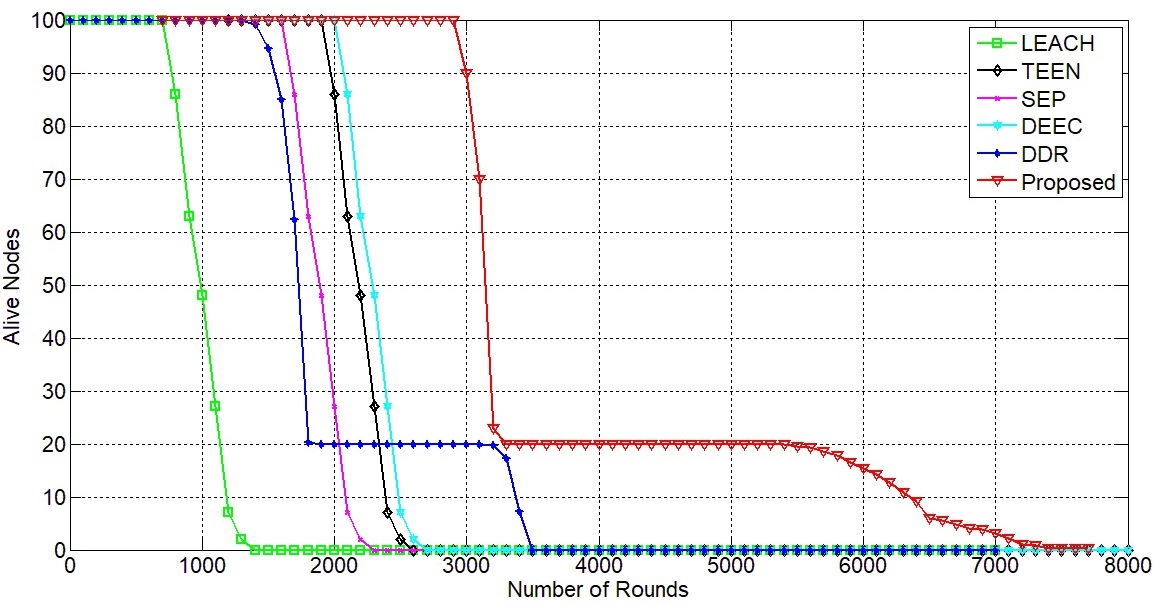}
	\caption{Stability Period}
	\label{fig:fig5}
\end{figure}
\begin{figure}[h!]
	\centering
	\includegraphics[width=1\linewidth]{./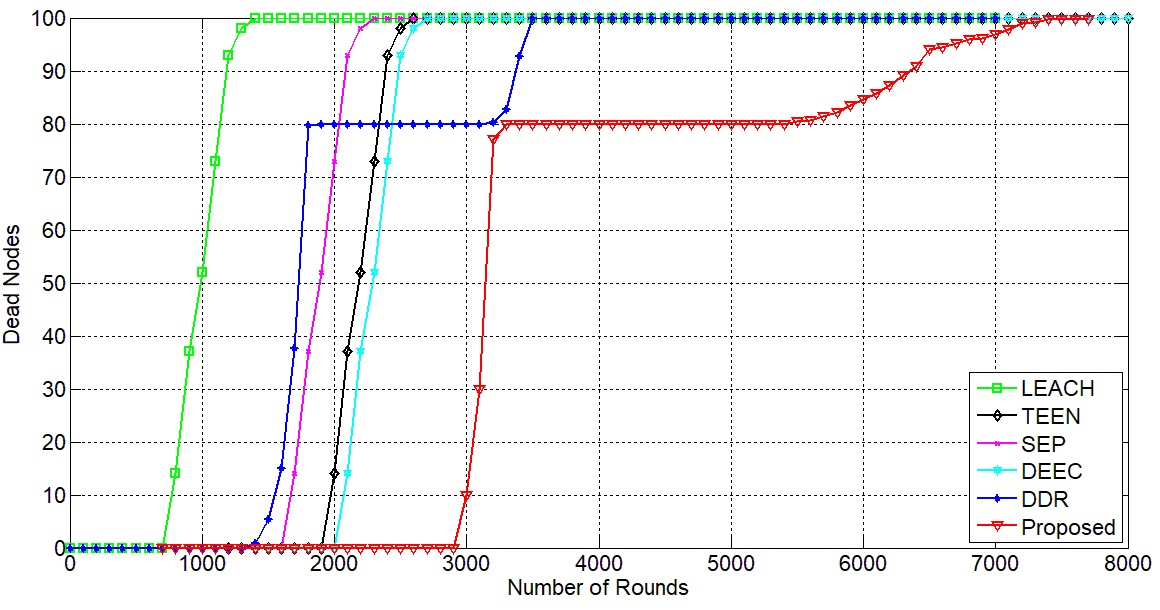}
	\caption{Instability Period}
	\label{fig:fig6}
\end{figure}
\begin{figure}[t!]
	\centering
	\includegraphics[width=1\linewidth]{./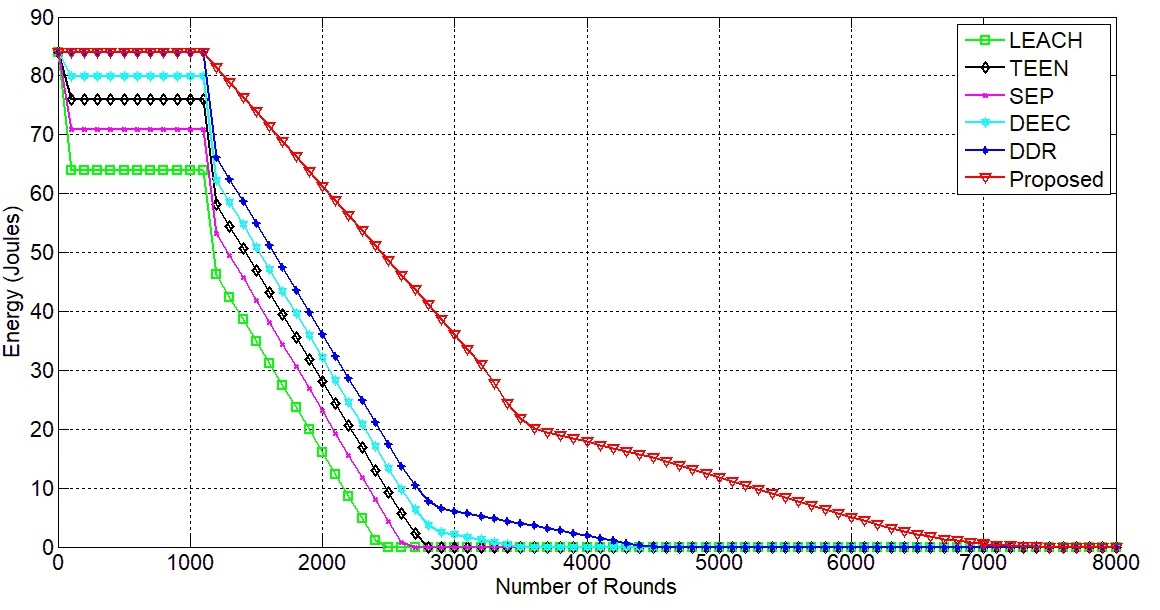}
	\caption{Energy Utilization Comparison}
	\label{fig:fig7}
\end{figure}
\begin{figure}[t!]
	\centering
	\includegraphics[width=1\linewidth]{./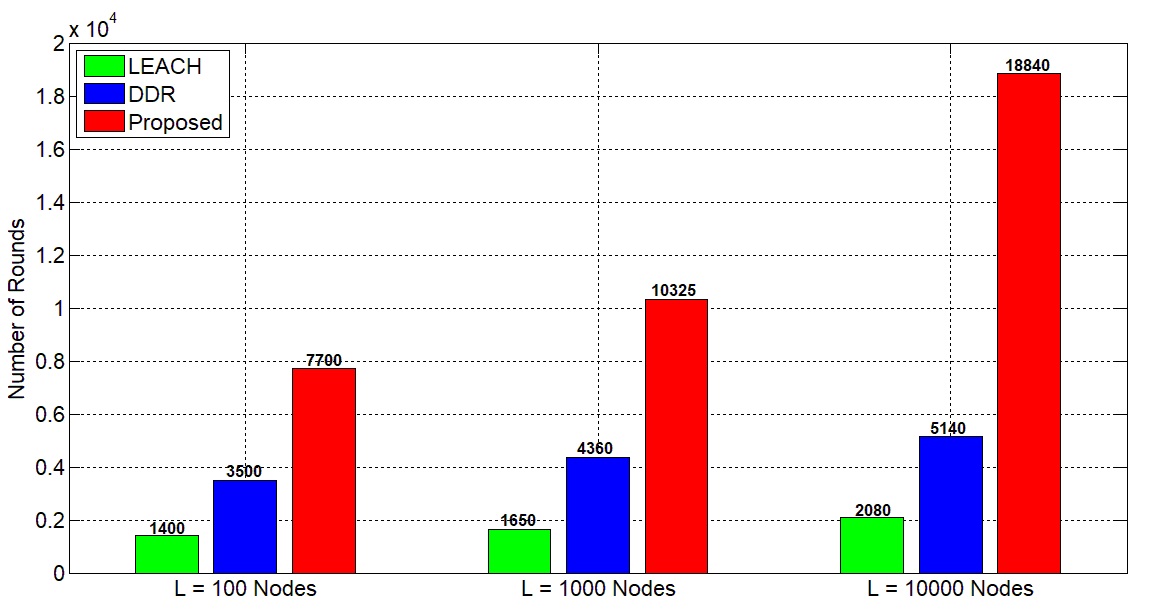}
	\caption{Network Lifetime}
	\label{fig:fig8}
\end{figure}
\begin{figure}[b!]
\centering
\includegraphics[width=1\linewidth]{./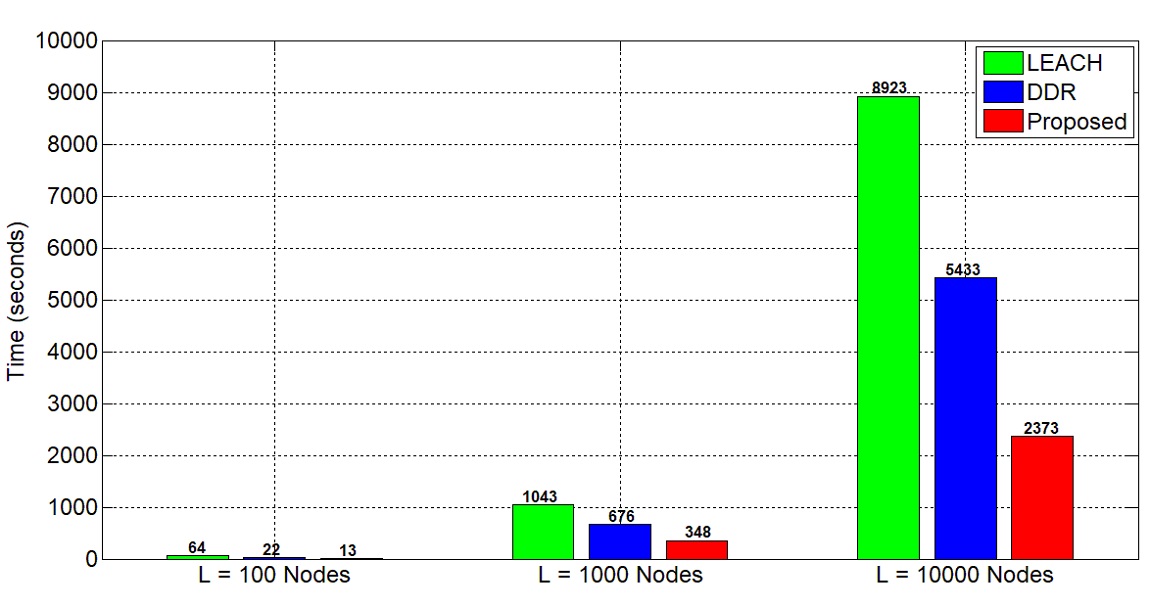}
\caption{Computational Overload Comparison}
\label{fig:fig9}
\end{figure}

A comparison of stability period for $L = 100$ is shown in Fig. \ref{fig:fig5}. This figure demonstrates the number of alive nodes over 8000 sensing rounds. It is evident from the figure that our proposed scheme significantly outperforms all the protocols, and shows promising results. The first node die time of our approach is around 2900, while that of LEACH, TEEN, SEP, DEEC and DDR is around 800, 1900, 1600, 2000, and 1400, respectively. Similarly, Fig. \ref{fig:fig6} illustrates the all node die time (ADT) of these protocols for $L =$ 100. It can be clearly seen that the ADT of our method is $\sim$6390, $\sim$5290, $\sim$5490, $\sim$5190 and $\sim$4290 better than LEACH, TEEN, SEP, DEEC and DDR, respectively. We show that our scheme provides the best ADT, and hence, is the most suitable candidate for practical applications.

We provide a comparison of energy efficient resource utilization in Fig. \ref{fig:fig7}. Here, we show that all protocols start with same energy levels. However, based on the optimized communication method, our scheme demonstrates outstanding results beating all the contestants. In Fig. \ref{fig:fig8}, we compare the network lifetime of our proposed method with LEACH and DDR for $\left.L = 100, 1000 \text{ and } 10000\right.$. It is validated that our protocol is equally competitive on large scale network scenarios outperforming each of the traditional methods.

The complexity of our approach is dominated by the communication yielding a convenient implementation of our method as compared with other protocols as shown in Fig. \ref{fig:fig9}. We compare the computational time consumed by the contestant methods using a $\left.\text{2.20 GHz Intel Core i7-3632QM}\right.$ machine for different number of nodes. This figure proves the robustness of our protocol by showing superior performance, hence, lending itself the most preferable choice for real-time applications.
\begin{table*}[t!]
	\caption{Comparison of Denoising Image Data Samples in Terms of PSNR/SSIM}
	\centering
			\begin{tabular}{|c|c|c|c|c|c|c|c|}\hline\hline
				\multicolumn{2}{|c|}{Noise Level $\sigma_n$} & 10 & 15 & 20 & 50 & 100 \\ \cline{1-7}
				\hline\hline
				\multirow{3}{*}{Lena}				
				& Noisy & 28.03/0.76 & 24.63/0.66 & 22.16/0.58 & 20.13/0.50 & 08.13/0.11 \\ \cline{2-7}
				& Denoised \cite{behzad2018AINA} & 32.64/0.90 & 30.44/0.86 & 28.77/0.82 & 23.97/0.62 & 20.50/0.48 \\ \cline{2-7}
				& Proposed & \textbf{35.54/0.96} & \textbf{33.65/0.94} & \textbf{32.65/0.91} & \textbf{26.21/0.76} & \textbf{23.01/0.61} \\ \hline
				
				\multirow{3}{*}{Barbara}
				& Noisy & 28.18/0.87 & 24.59/0.77 & 22.09/0.69 & 14.10/0.33 & 08.21/0.13 \\ \cline{2-7}
				& Denoised \cite{behzad2018AINA} & 31.88/0.94 & 29.56/0.91 & 27.93/0.88 & 22.75/0.68 & 20.11/0.50 \\ \cline{2-7}
				& Proposed & \textbf{35.36/0.97} & \textbf{32.34/0.93} & \textbf{29.09/0.91} & \textbf{25.67/0.79} & \textbf{23.21/0.61} \\ \hline
				
				\multirow{3}{*}{House}
				& Noisy & 28.07/0.51 & 24.57/0.44 & 22.02/0.38 & 14.03/0.19 & 08.09/0.07 \\ \cline{2-7}
				& Denoised \cite{behzad2018AINA} & 35.28/0.67 & 32.63/0.61 & 31.33/0.58 & 25.80/0.45 & 22.14/0.27 \\ \cline{2-7}
				& Proposed & \textbf{38.34/0.79} & \textbf{35.74/0.69} & \textbf{33.90/0.64} & \textbf{29.04/0.53} & \textbf{24.03/0.29} \\ \hline
				
				\multirow{3}{*}{Peppers}
				& Noisy & 28.08/0.81 & 24.72/0.72 & 22.13/0.63 & 14.17/0.33 & 08.16/0.13 \\ \cline{2-7}
				& Denoised \cite{behzad2018AINA} & 32.00/0.92 & 29.74/0.89 & 28.11/0.85 & 23.09/0.68 & 19.62/0.49 \\ \cline{2-7}
				& Proposed & \textbf{35.03/0.94} & \textbf{32.44/0.90} & \textbf{20.95/0.88} & \textbf{27.34/0.73} & \textbf{21.34/0.54} \\ \hline
				 
				\multirow{3}{*}{Boat}
				& Noisy & 28.08/0.75 & 24.59/0.63 & 22.05/0.53 & 14.17/0.24 & 08.10/0.09 \\ \cline{2-7}
				& Denoised \cite{behzad2018AINA} & 31.59/0.86 & 29.11/0.76 & 27.48/0.69 & 23.42/0.45 & 20.64/0.26 \\ \cline{2-7}
				& Proposed & \textbf{33.97/0.89} & \textbf{31.34/0.84} & \textbf{28.98/0.75} & \textbf{26.34/0.59} & \textbf{22.34/0.42} \\ \hline 
				
				\multirow{3}{*}{C-man}
				& Noisy & 28.07/0.53 & 24.56/0.45 & 22.09/0.40 & 14.13/0.21 & 08.18/0.10 \\ \cline{2-7}
				& Denoised \cite{behzad2018AINA} & 33.28/0.75 & 31.21/0.69 & 29.23/0.63 & 24.19/0.45 & 20.67/0.25 \\ \cline{2-7}
				& Proposed & \textbf{35.53/0.86} & \textbf{34.34/0.74} & \textbf{32.34/0.71} & \textbf{26.53/0.55} & \textbf{22.24/0.30} \\ \hline
								
				\multirow{3}{*}{Room}
				& Noisy & 28.21/0.80 & 24.62/0.68 & 22.07/0.58 & 14.19/0.25 & 08.10/0.09 \\ \cline{2-7}
				& Denoised \cite{behzad2018AINA} & 31.59/0.86 & 29.11/0.76 & 27.48/0.69 & 23.42/0.45 & 20.64/0.26 \\ \cline{2-7}
				& Proposed & \textbf{33.53/0.92} & \textbf{31.64/0.89} & \textbf{29.53/0.85} & \textbf{25.30/0.63} & \textbf{22.11/0.41} \\ \hline
						
				\multirow{3}{*}{Mandrill}
				& Noisy & 27.99/0.80 & 24.56/0.66 & 21.98/0.54 & 14.16/0.20 & 08.18/0.06 \\ \cline{2-7}
				& Denoised \cite{behzad2018AINA} & 30.88/0.85 & 28.51/0.75 & 27.09/0.67 & 24.17/0.47 & 21.30/0.28 \\ \cline{2-7}
				& Proposed & \textbf{34.87/0.91} & \textbf{31.51/0.76} & \textbf{29.93/0.73} & \textbf{26.54/0.53} & \textbf{23.87/0.30} \\ \hline\hline
			\end{tabular}
			\label{tab2}
\end{table*}


Finally, the detailed denoising results of various standard images are shown in Fig. \ref*{fig:denoised_start}-\ref*{fig:denoised_end}, and summarized in Table \ref{tab2}. We opt globally adopted PSNR and SSIM as evaluation metrics to prove that the denoising section of our proposed framework produces equally promising outcomes. The provided table summarizes denoising results of a number of images, as PSNR/SSIM, over a range of noise levels, i.e., $\left.\sigma_n = [10,15,20,50,100]\right.$. Similarly, we also present the extensive denoising results of color images in Fig. \ref*{fig:color_denoised}. As can be seen by in these figures and table, the recovered images are a very good approximation of original images thereby verifying the effectivenesses of our proposed framework.

For experimentation, we transmitted various images among deployed nodes and showed that the resultant images received at the receiver suffers from Gaussian noise. The PSNR and SSIM values of the corresponding received noisy images are shown in the table. In comparison with our denoised images, we show that a significant amount of improvement is achieved in terms of the noise being removed, and the actual data is recovered to a greater extent. Consequently, these results confirm that our proposed framework is indeed an effective and robust model for real-time scenarios in WSNs which outperforms many traditionally proposed routing protocols.

\newpage
\begin{figure*}[t]
	\centering
	\includegraphics[width=1\linewidth]{./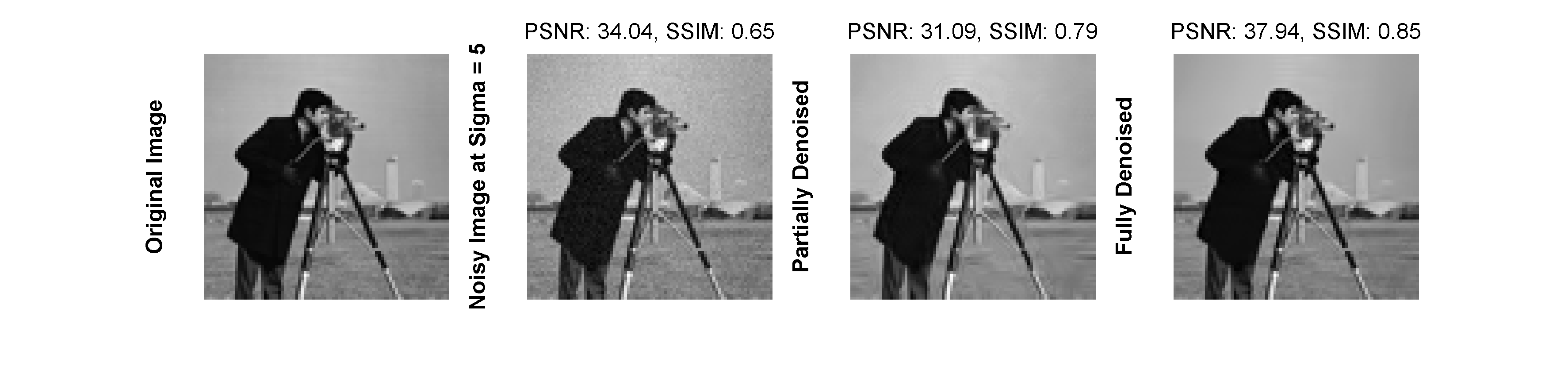}
	\includegraphics[width=1\linewidth]{./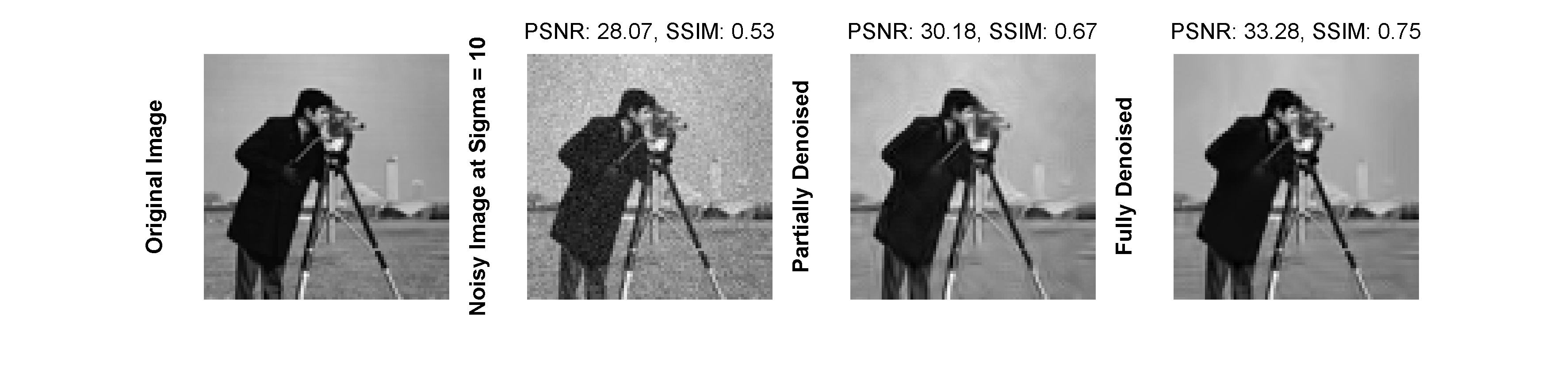}
	\includegraphics[width=1\linewidth]{./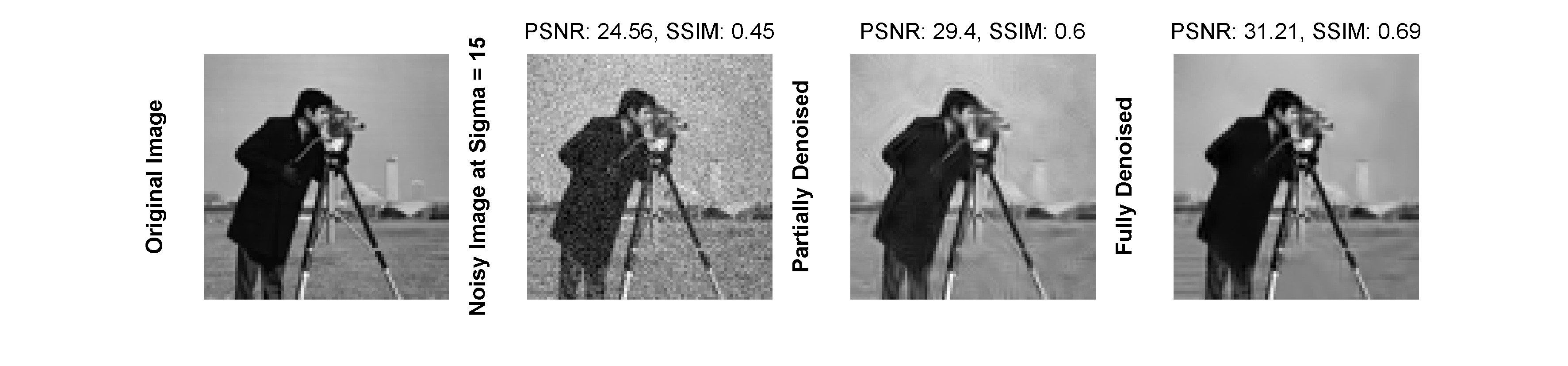}
	\includegraphics[width=1\linewidth]{./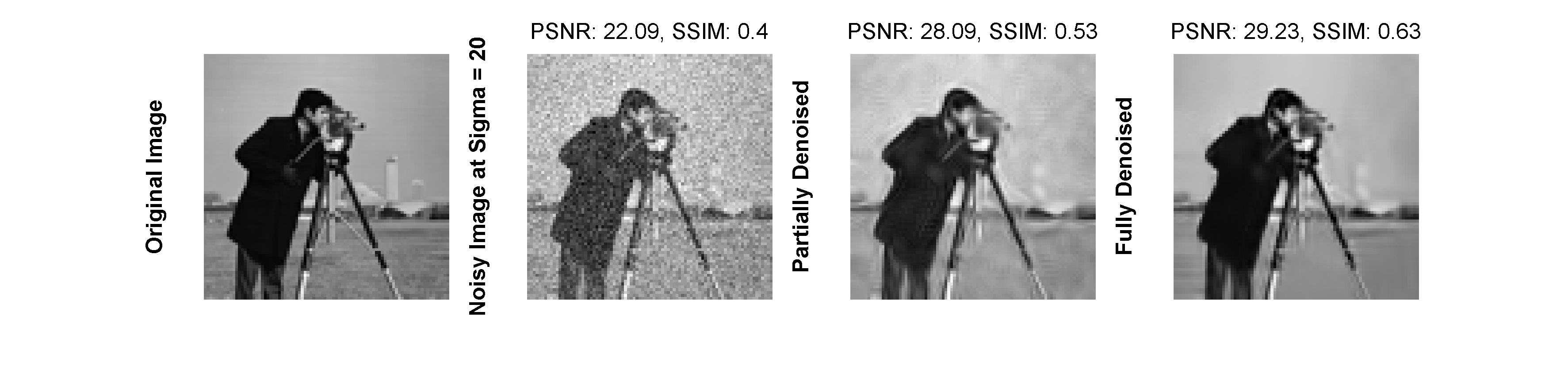}
	\includegraphics[width=1\linewidth]{./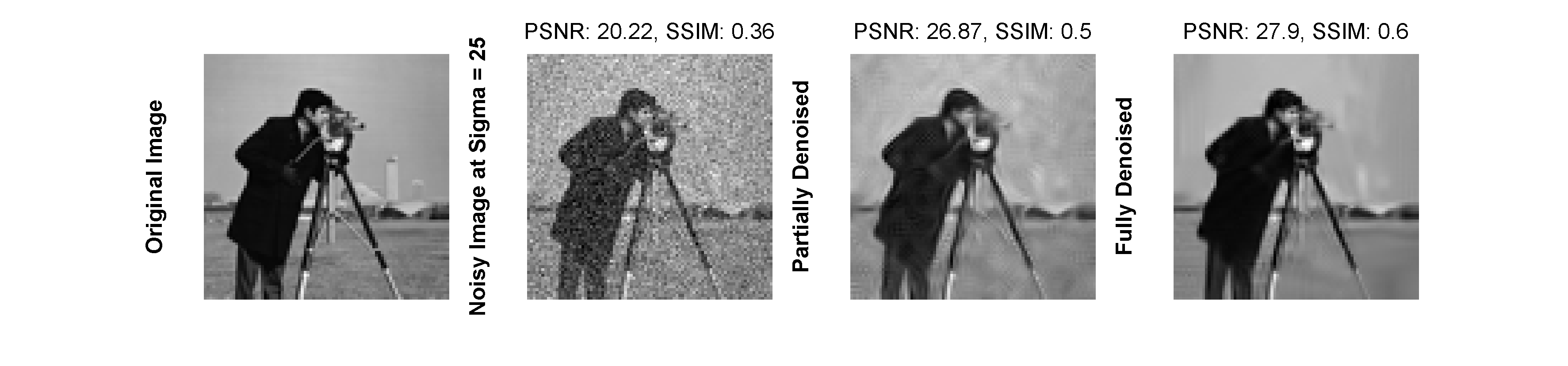}
\end{figure*}
\newpage
\begin{figure*}[t]
	\centering
	\includegraphics[width=1\linewidth]{./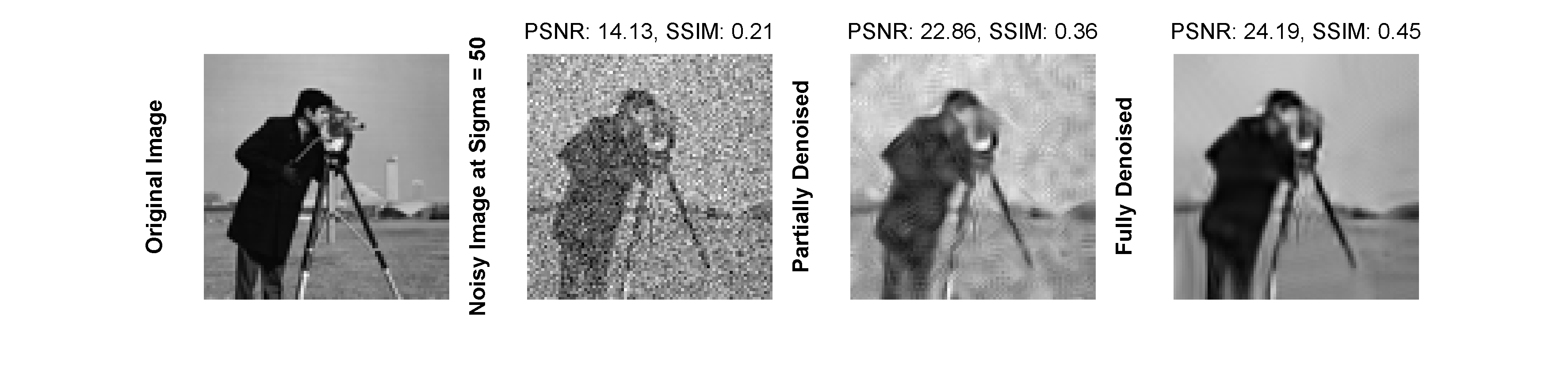}
	\includegraphics[width=1\linewidth]{./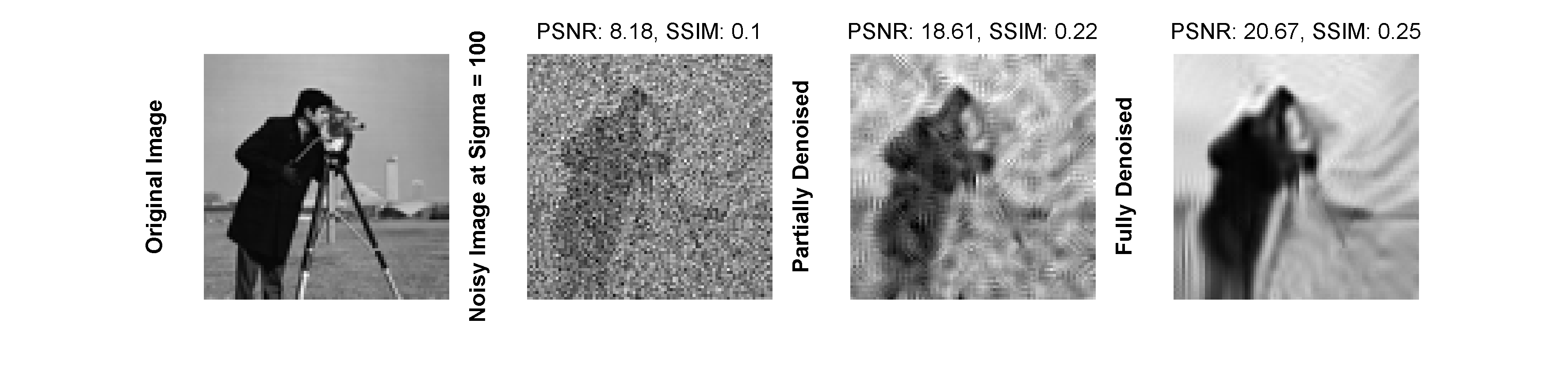}
	\includegraphics[width=1\linewidth]{./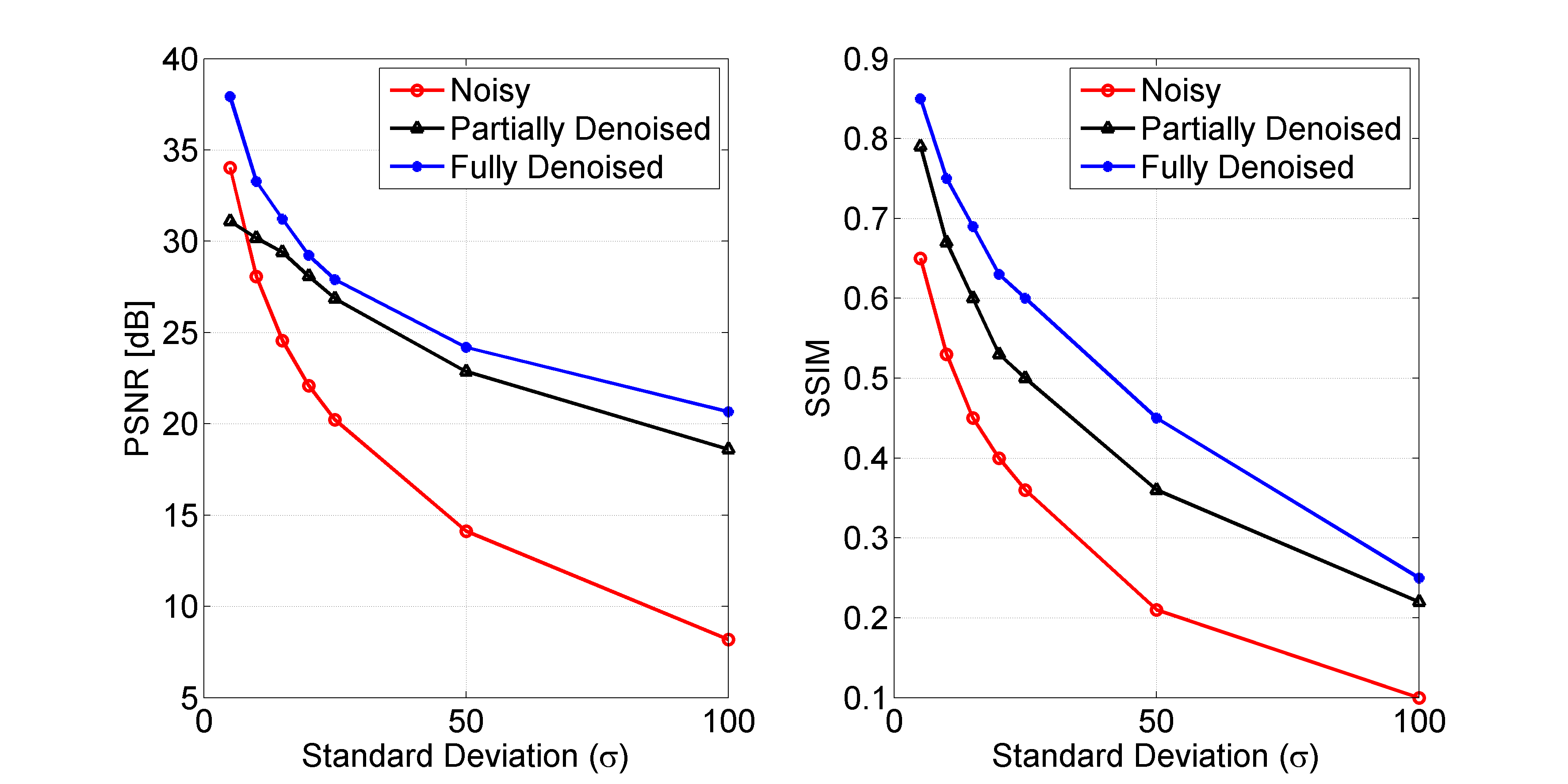}
	\caption{Denoising $256 \times256$ grayscale \textit{Cameraman} standard test data images over noise $\sigma =  [5,10,15,20,25,50,100]$ when received at a node $\mu_\alpha$. Each row represent an original image, a noisy image, a partially denoised, and a fully denoised image, respectively, corrupted by a specific level of additive white Gaussian noise (AWGN). The graphical results in the end show PSNR [dB] and SSIM results in the form of graphs.}
	\label{fig:denoised_start}
\end{figure*}

\newpage
\begin{figure*}[t]
	\centering
	\includegraphics[width=1\linewidth]{./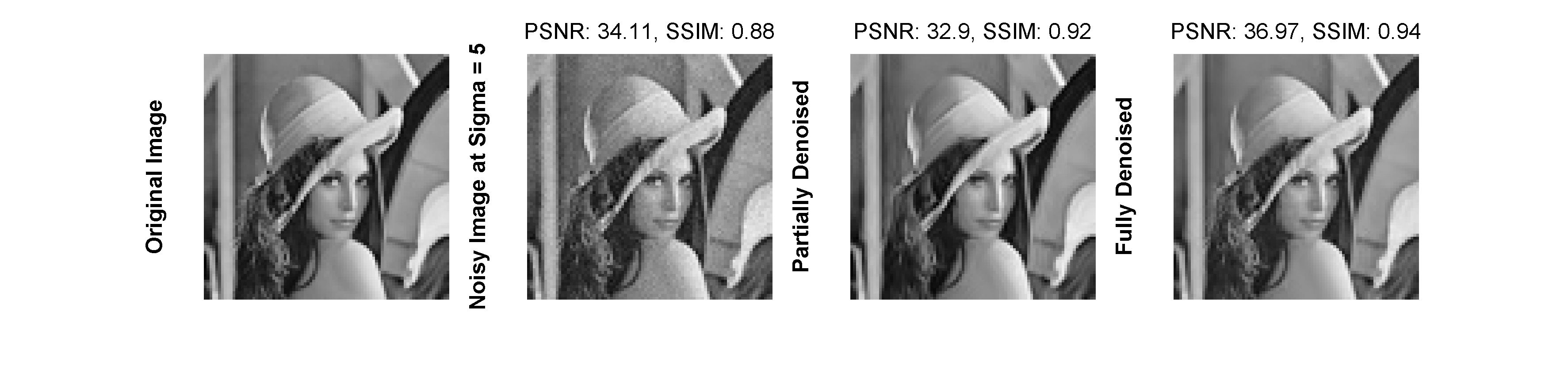}
	\includegraphics[width=1\linewidth]{./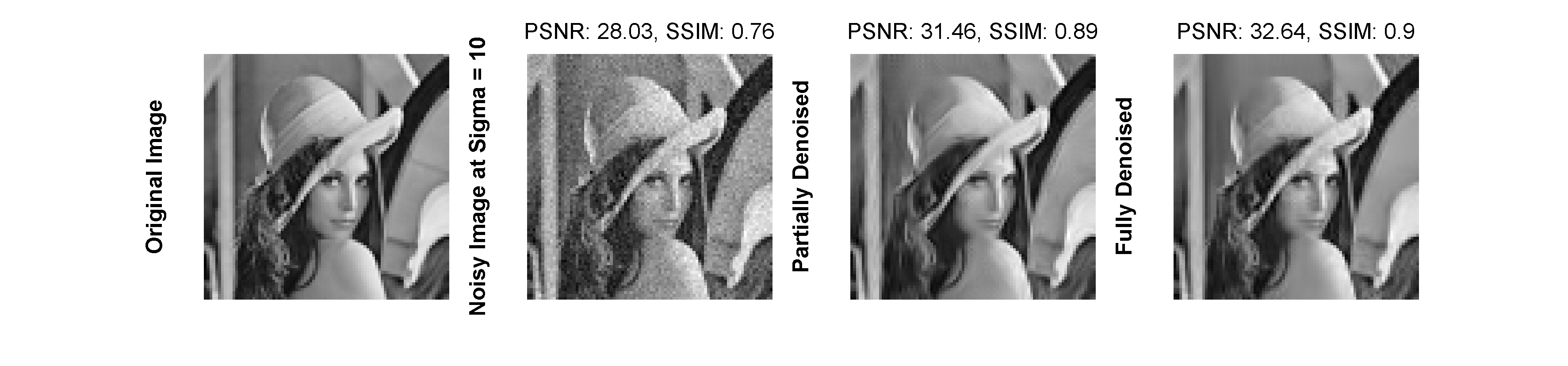}
	\includegraphics[width=1\linewidth]{./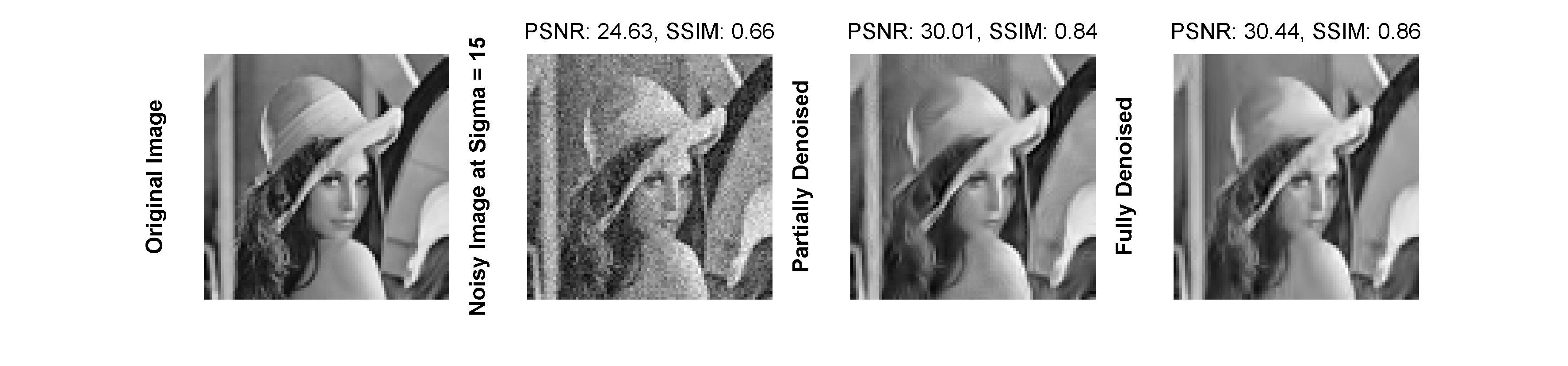}
	\includegraphics[width=1\linewidth]{./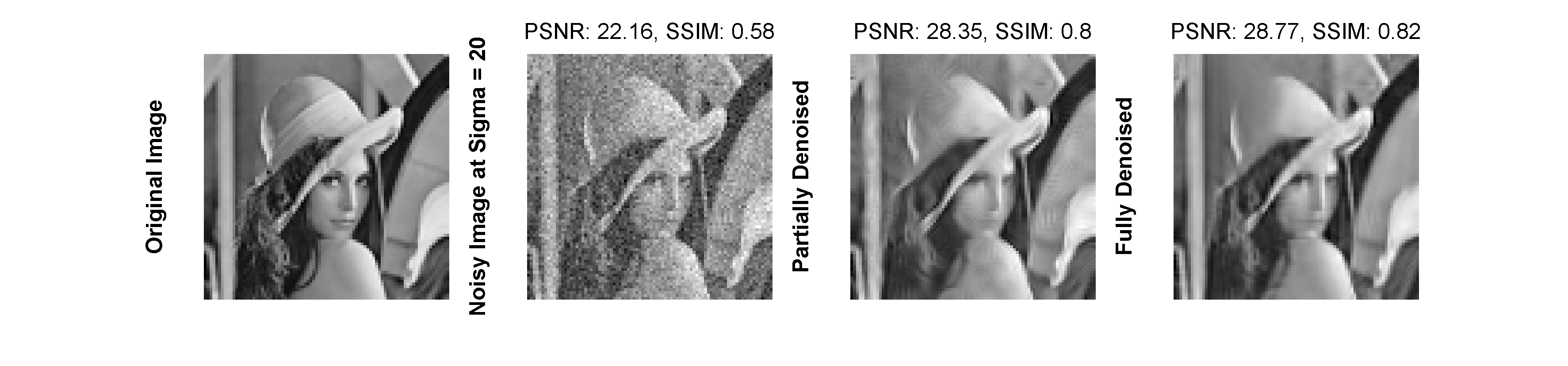}
	\includegraphics[width=1\linewidth]{./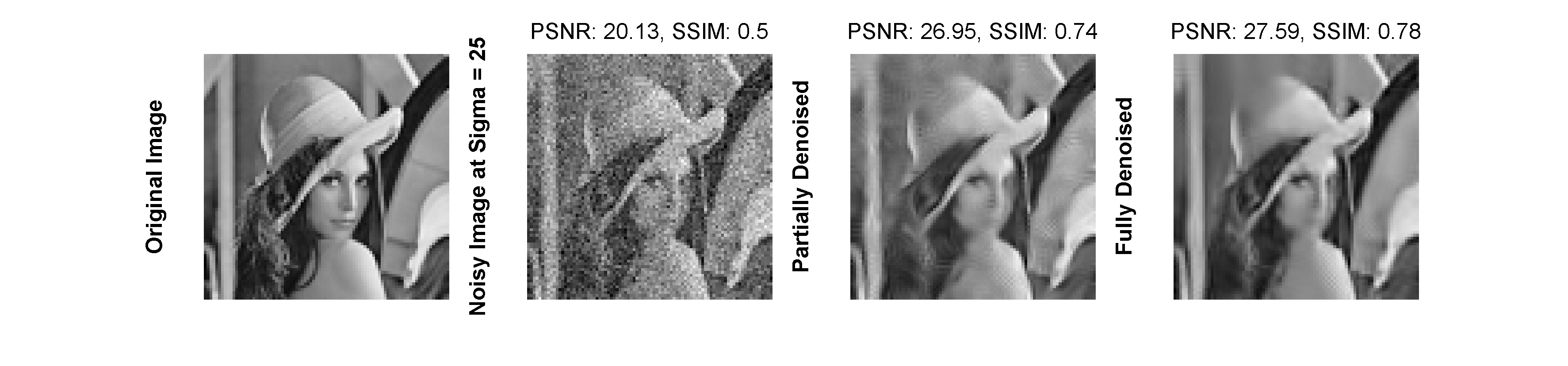}
\end{figure*}
\newpage
\begin{figure*}[t]
	\centering
	\includegraphics[width=1\linewidth]{./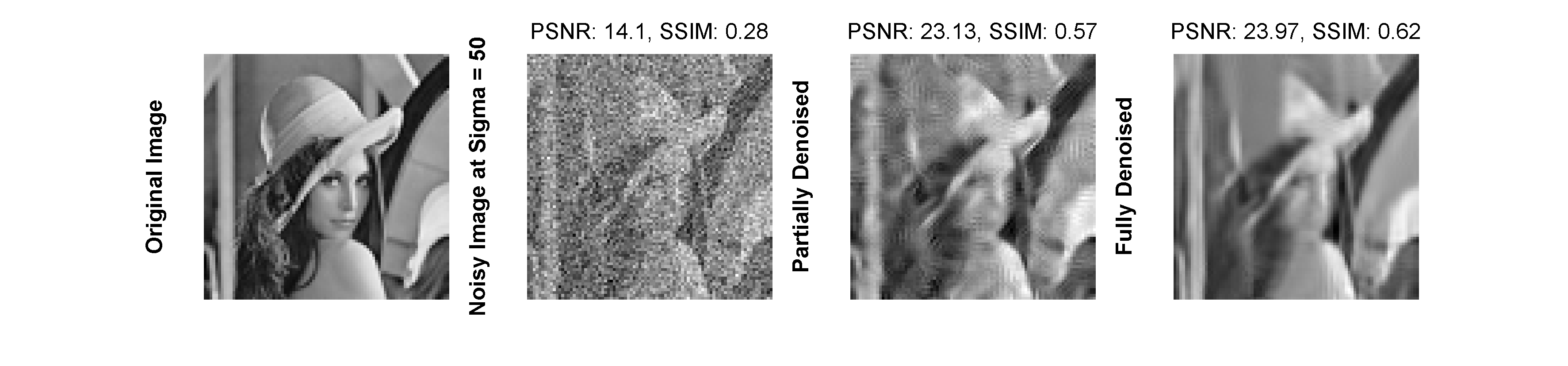}
	\includegraphics[width=1\linewidth]{./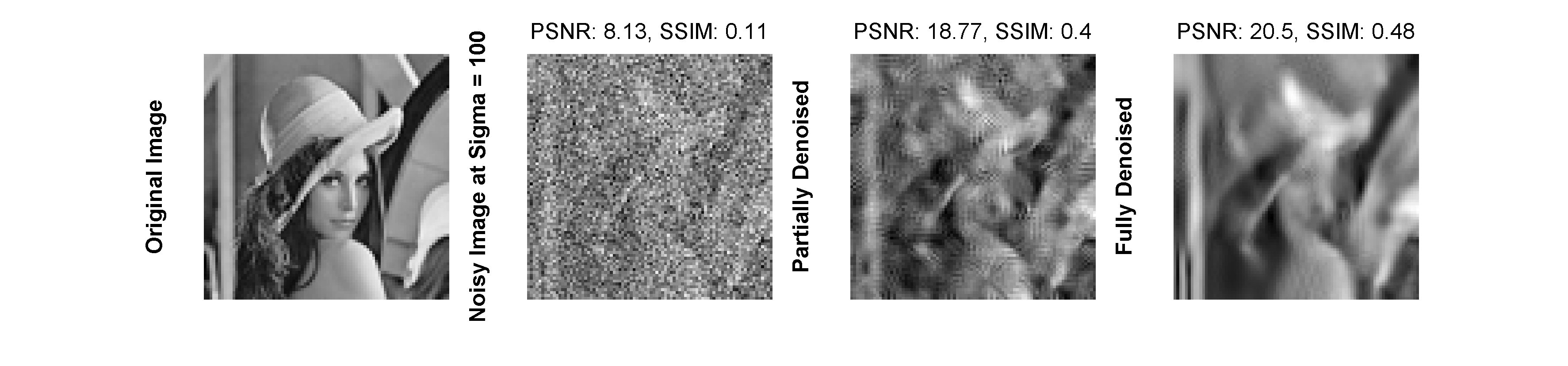}
	\includegraphics[width=1\linewidth]{./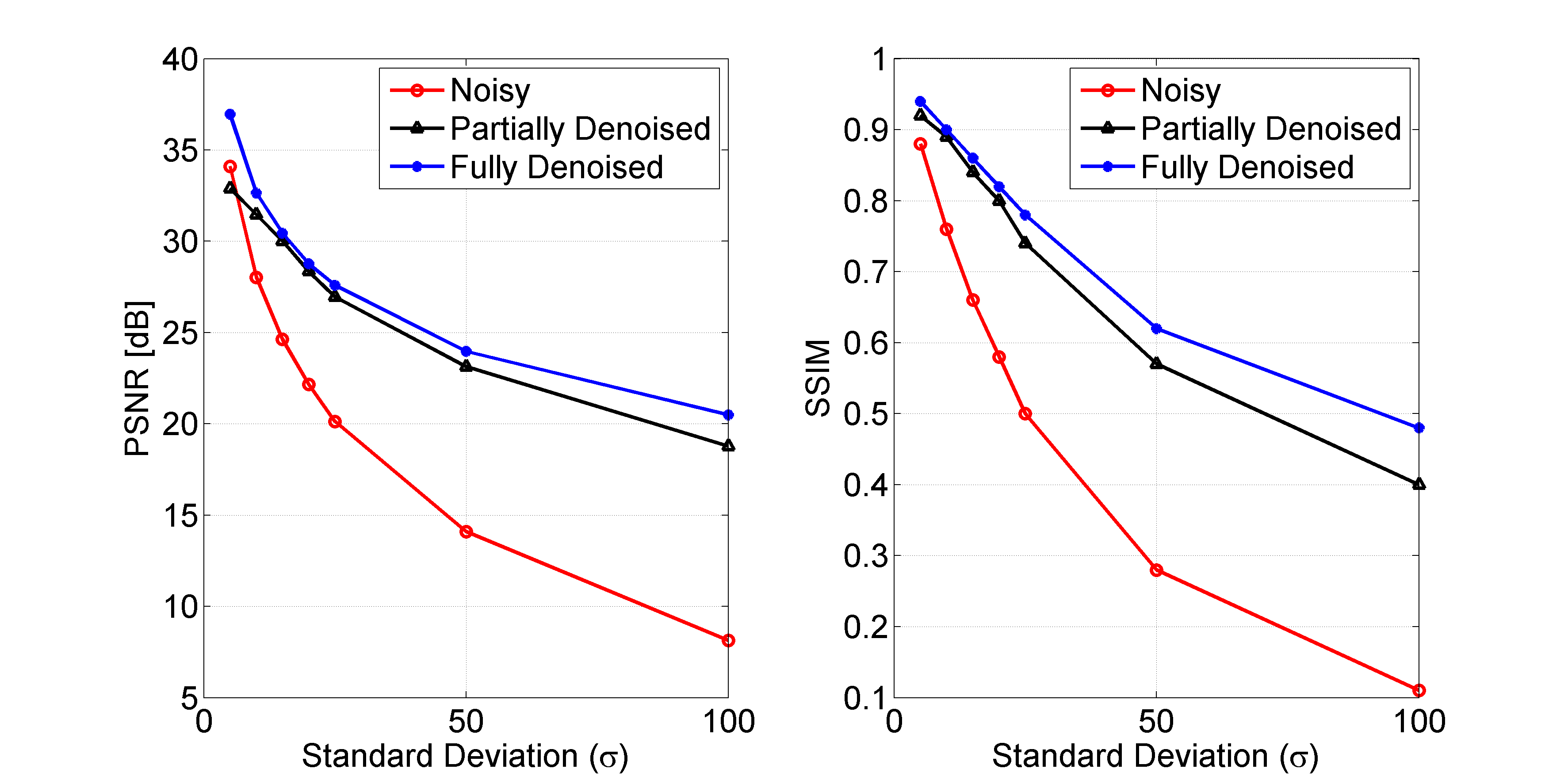}
	\caption{Denoising $256 \times256$ grayscale \textit{Lena} standard test data images over noise $\sigma =  [5,10,15,20,25,50,100]$ when received at a node $\mu_\alpha$. Each row represent an original image, a noisy image, a partially denoised, and a fully denoised image, respectively, corrupted by a specific level of additive white Gaussian noise (AWGN). The graphical results in the end show PSNR [dB] and SSIM results in the form of graphs.}
\end{figure*}

\newpage
\begin{figure*}[t]
	\centering
	\includegraphics[width=1\linewidth]{./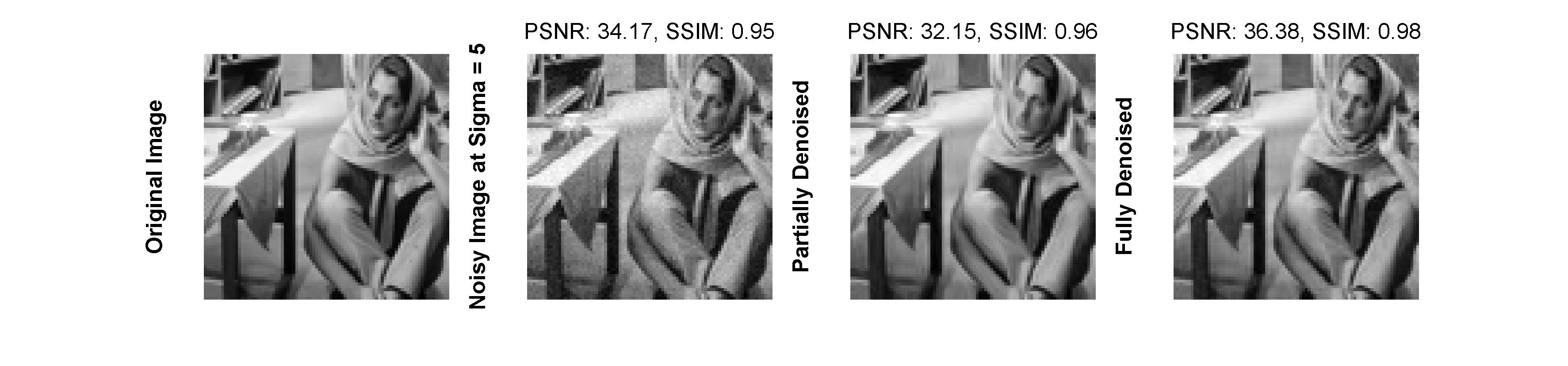}
	\includegraphics[width=1\linewidth]{./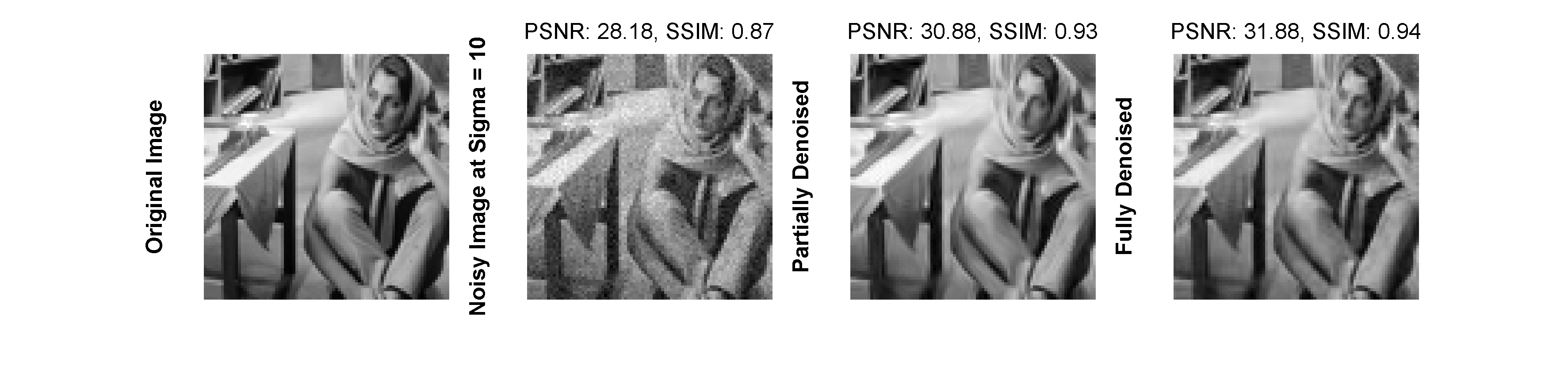}
	\includegraphics[width=1\linewidth]{./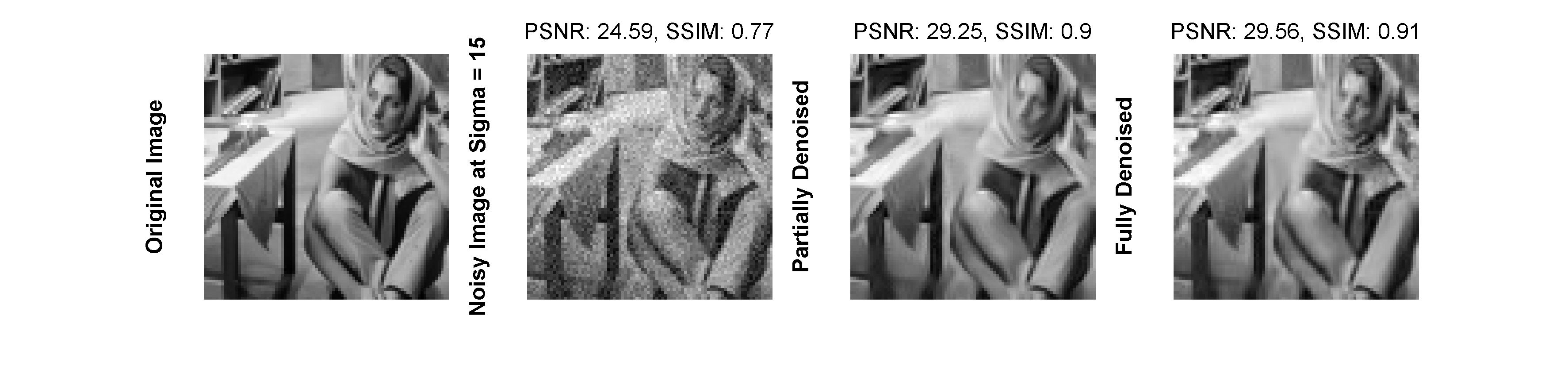}
	\includegraphics[width=1\linewidth]{./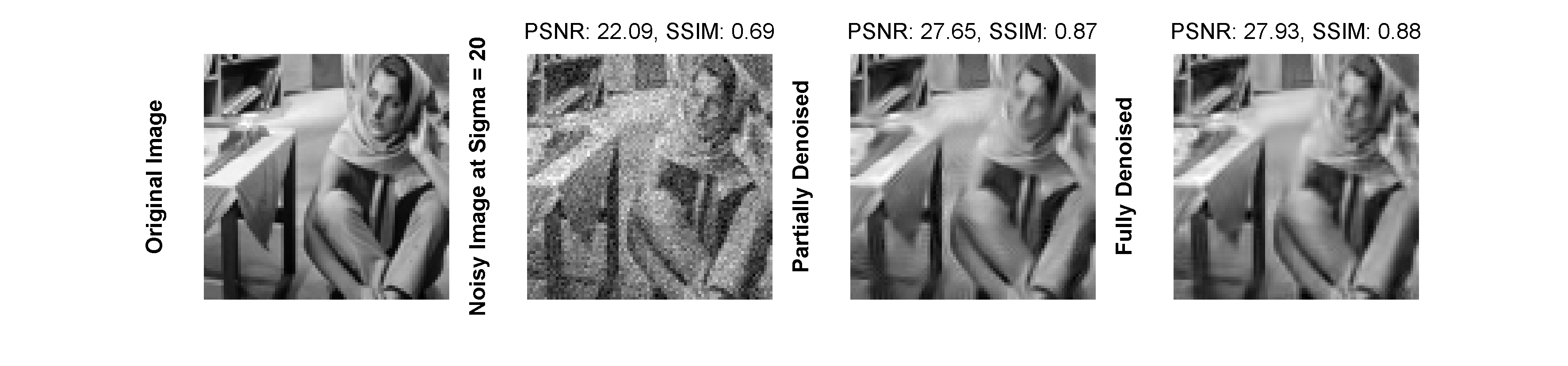}
	\includegraphics[width=1\linewidth]{./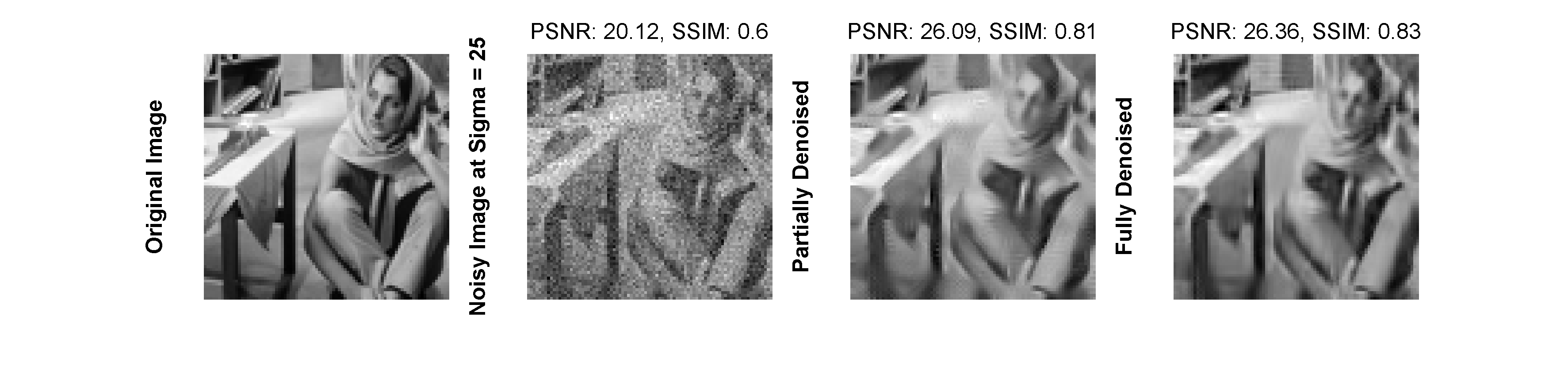}
\end{figure*}
\newpage
\begin{figure*}[t]
	\centering
	\includegraphics[width=1\linewidth]{./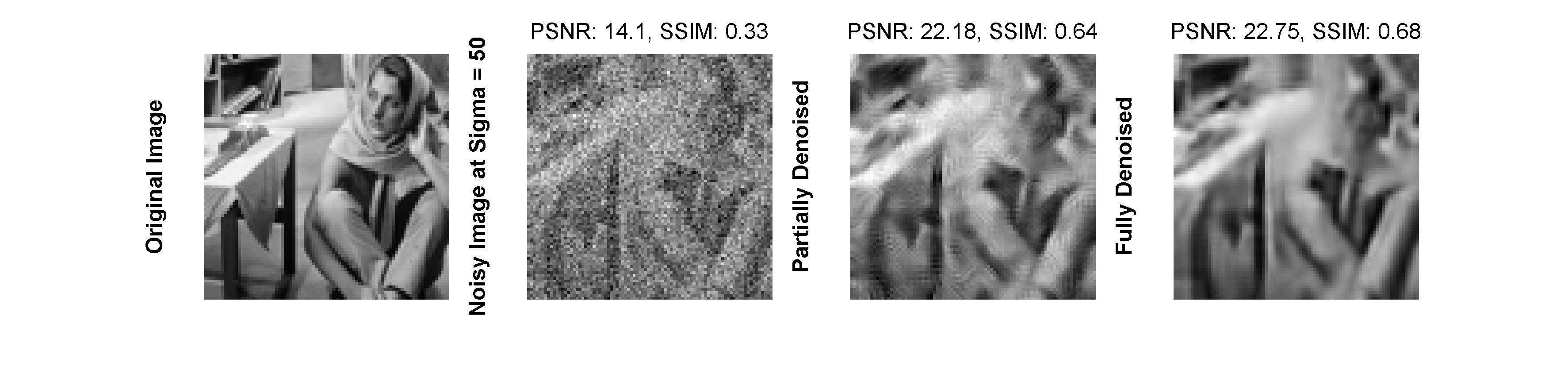}
	\includegraphics[width=1\linewidth]{./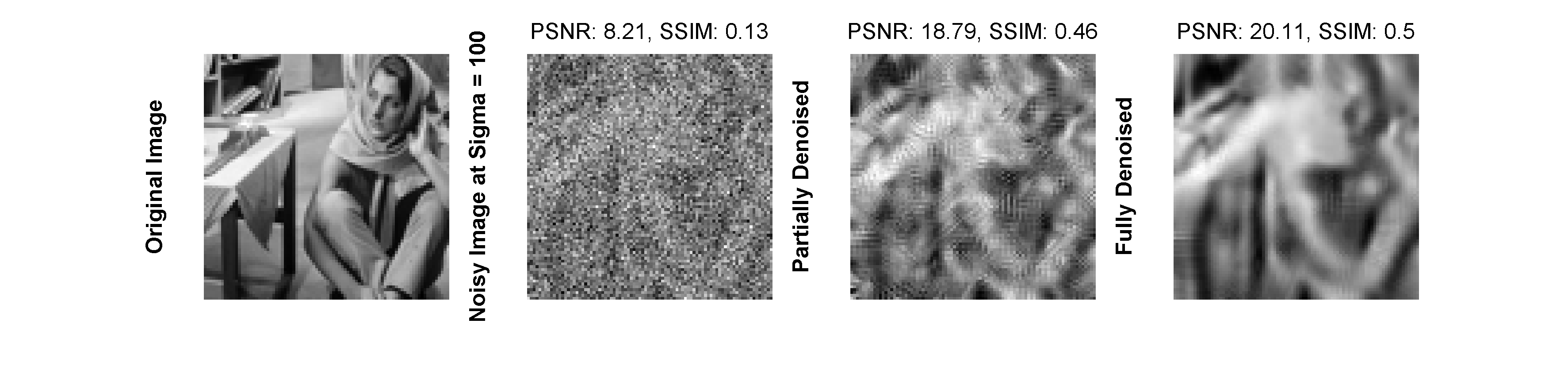}
	\includegraphics[width=1\linewidth]{./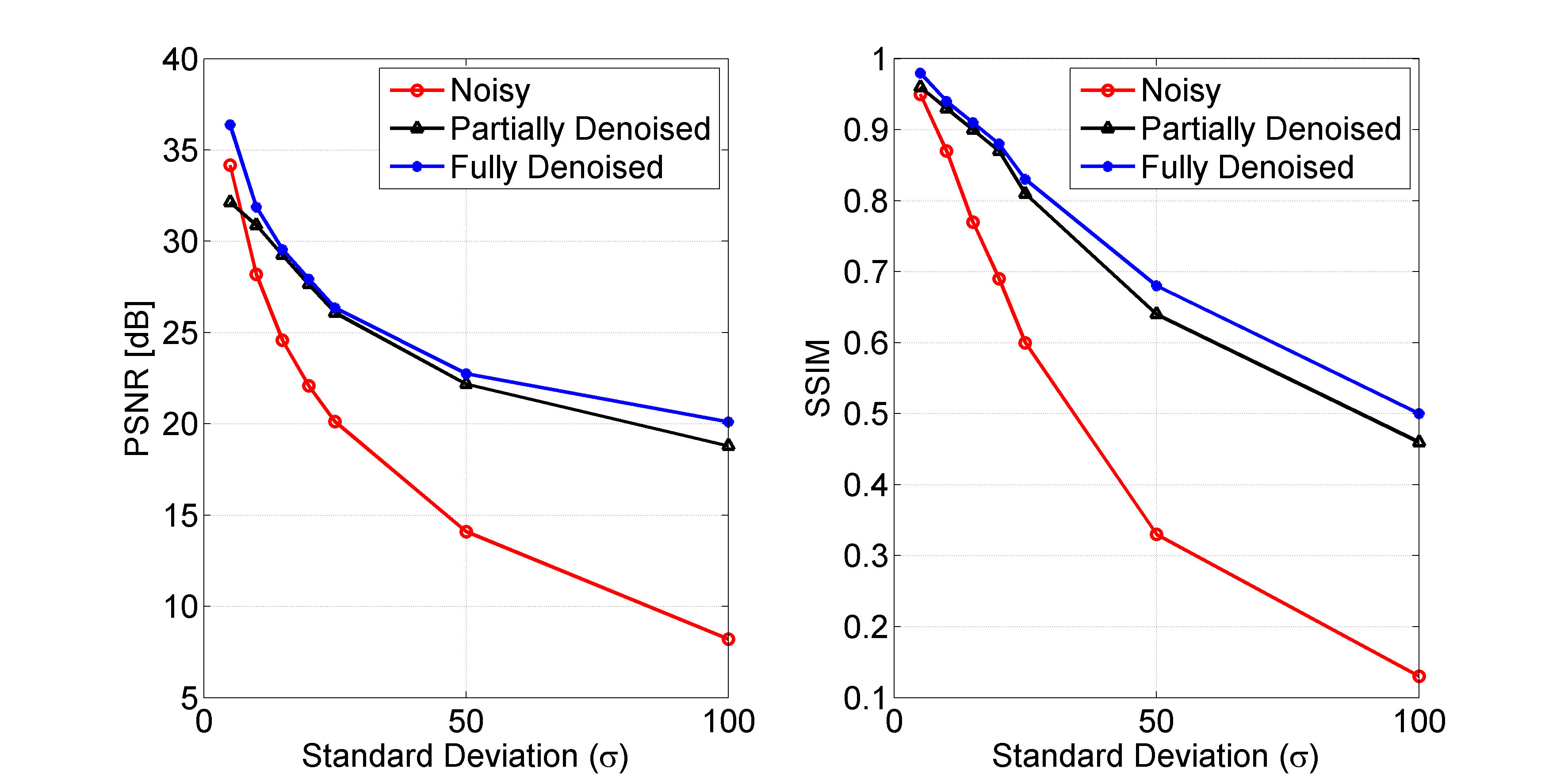}
	\caption{Denoising $256 \times256$ grayscale \textit{Barbara} standard test data images over noise $\sigma =  [5,10,15,20,25,50,100]$ when received at a node $\mu_\alpha$. Each row represent an original image, a noisy image, a partially denoised, and a fully denoised image, respectively, corrupted by a specific level of additive white Gaussian noise (AWGN). The graphical results in the end show PSNR [dB] and SSIM results in the form of graphs.}
\end{figure*}

\newpage
\begin{figure*}[t]
	\centering
	\includegraphics[width=1\linewidth]{./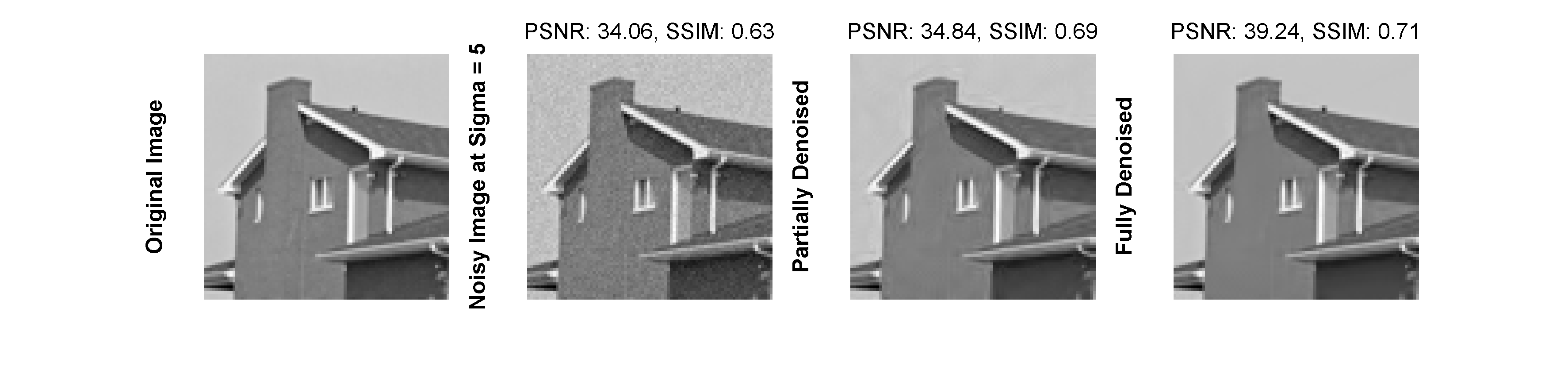}
	\includegraphics[width=1\linewidth]{./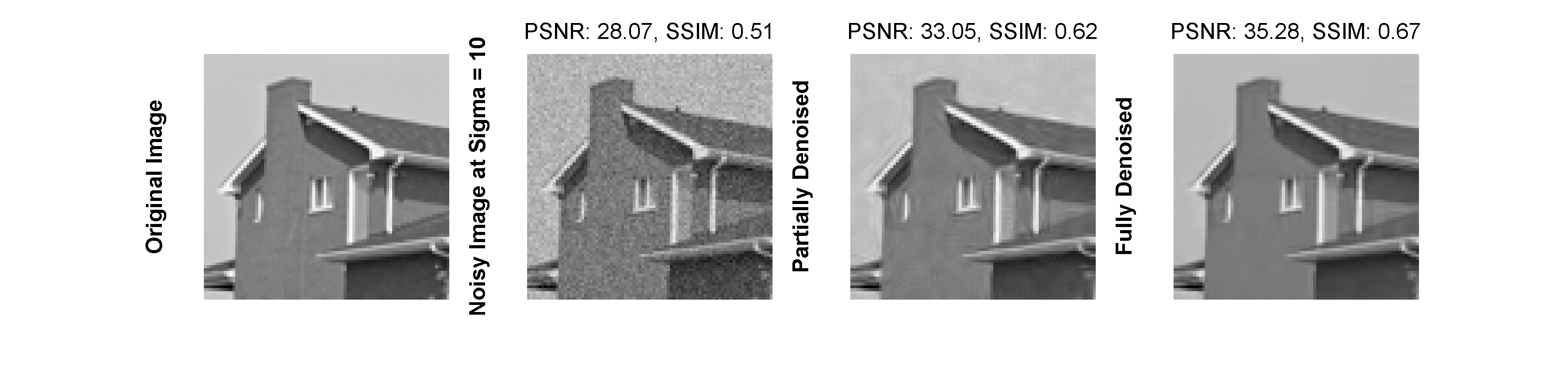}
	\includegraphics[width=1\linewidth]{./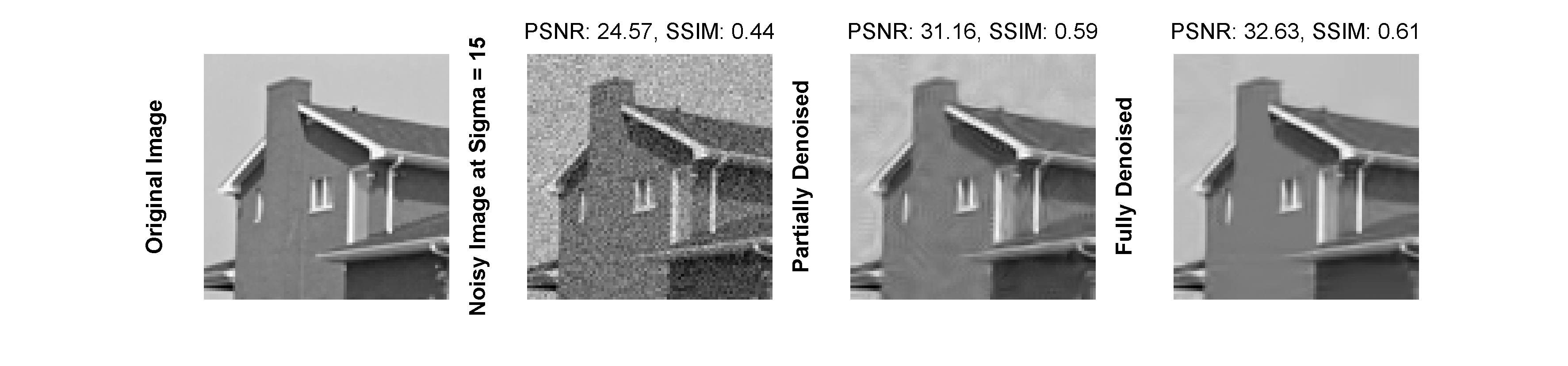}
	\includegraphics[width=1\linewidth]{./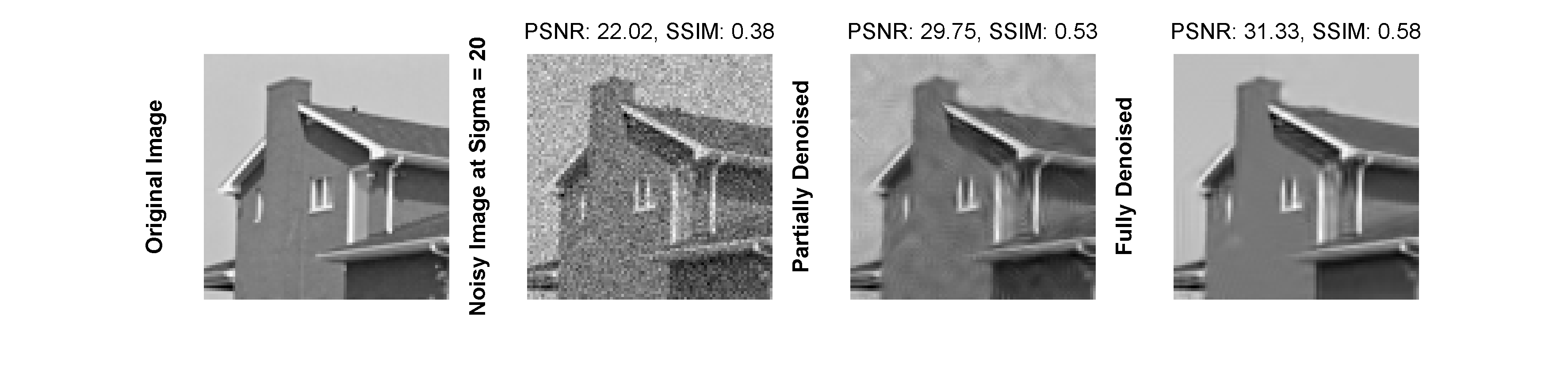}
	\includegraphics[width=1\linewidth]{./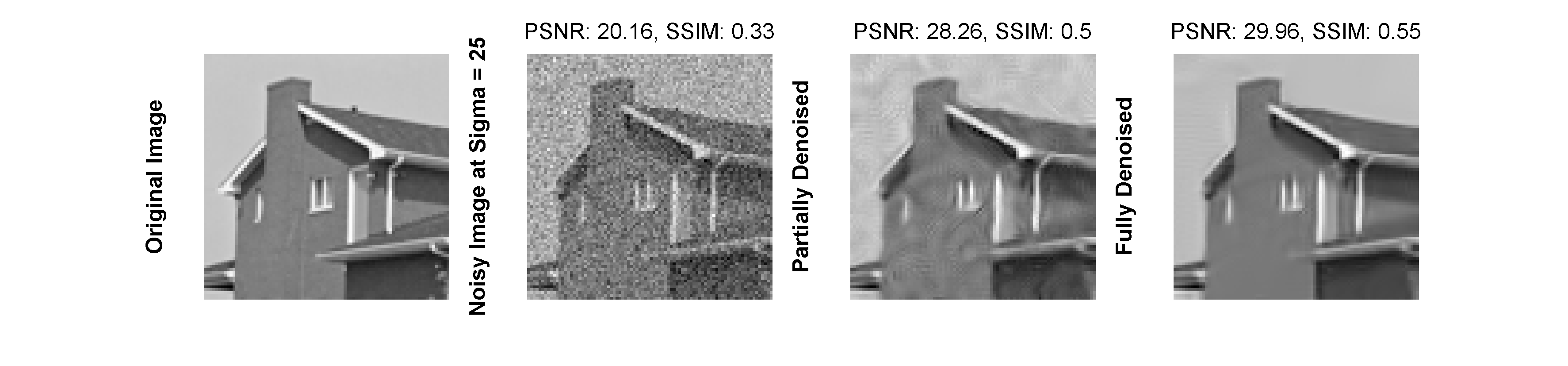}
\end{figure*}

\newpage
\begin{figure*}[t]
	\centering
	\includegraphics[width=1\linewidth]{./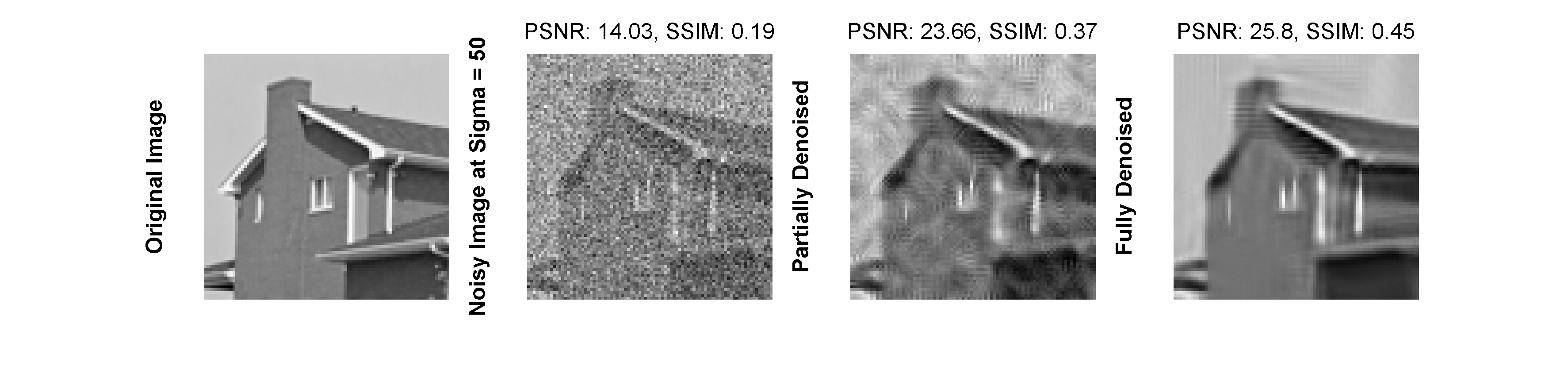}
	\includegraphics[width=1\linewidth]{./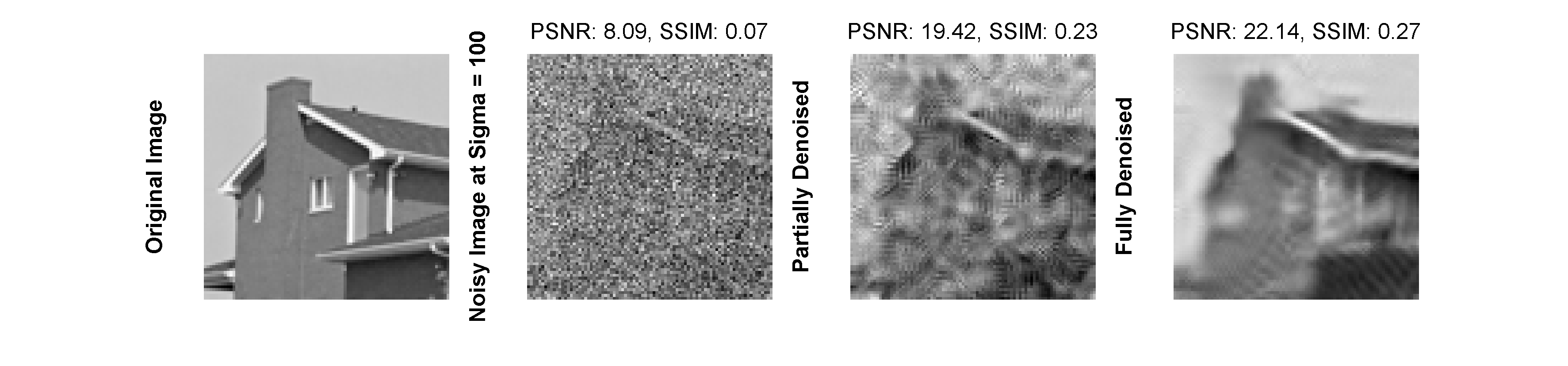}
	\includegraphics[width=1\linewidth]{./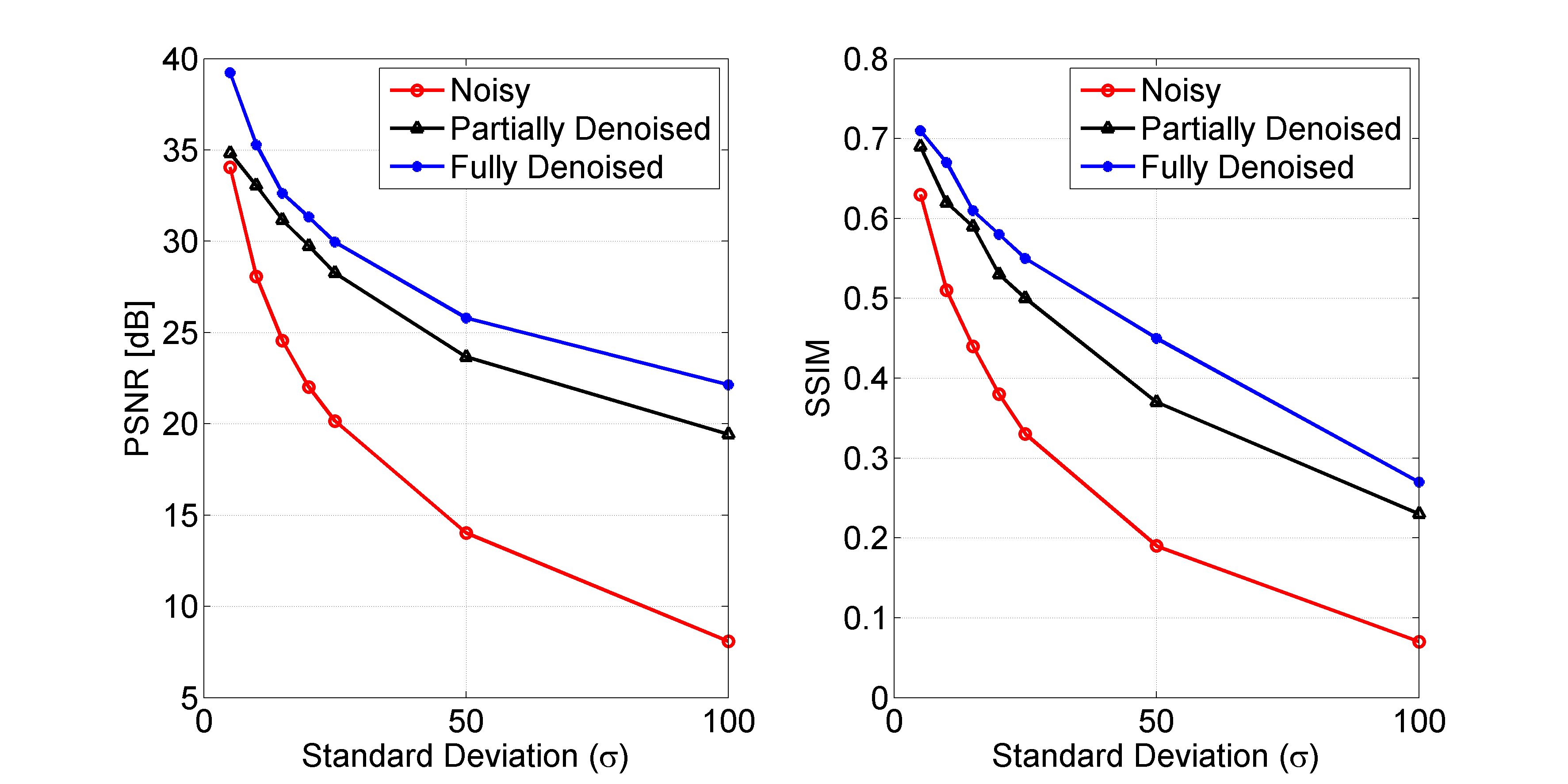}
	\caption{Denoising $256 \times256$ grayscale \textit{House} standard test data images over noise $\sigma =  [5,10,15,20,25,50,100]$ when received at a node $\mu_\alpha$. Each row represent an original image, a noisy image, a partially denoised, and a fully denoised image, respectively, corrupted by a specific level of additive white Gaussian noise (AWGN). The graphical results in the end show PSNR [dB] and SSIM results in the form of graphs.}
\end{figure*}

\newpage
\begin{figure*}[t]
	\centering
	\includegraphics[width=1\linewidth]{./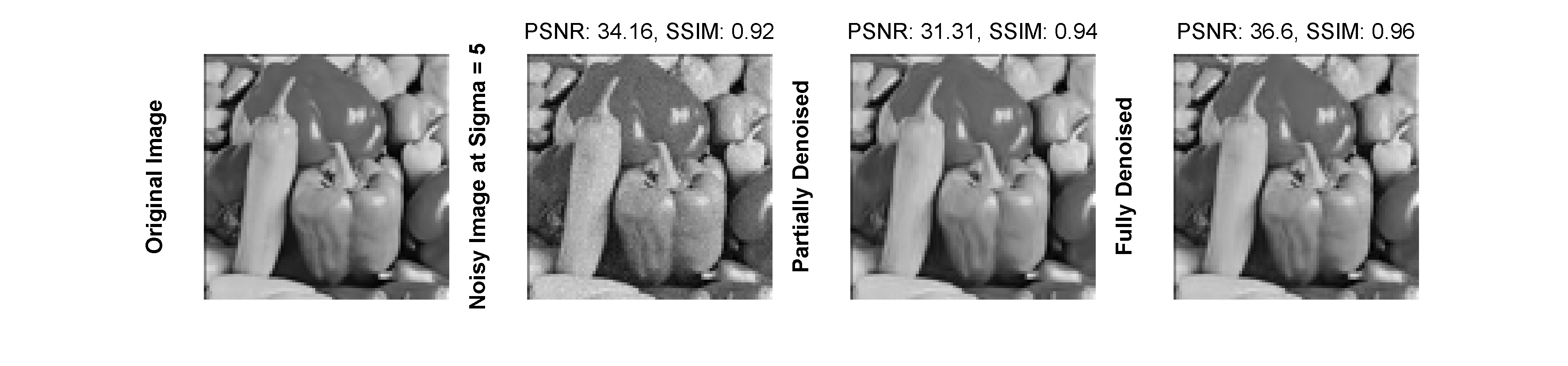}
	\includegraphics[width=1\linewidth]{./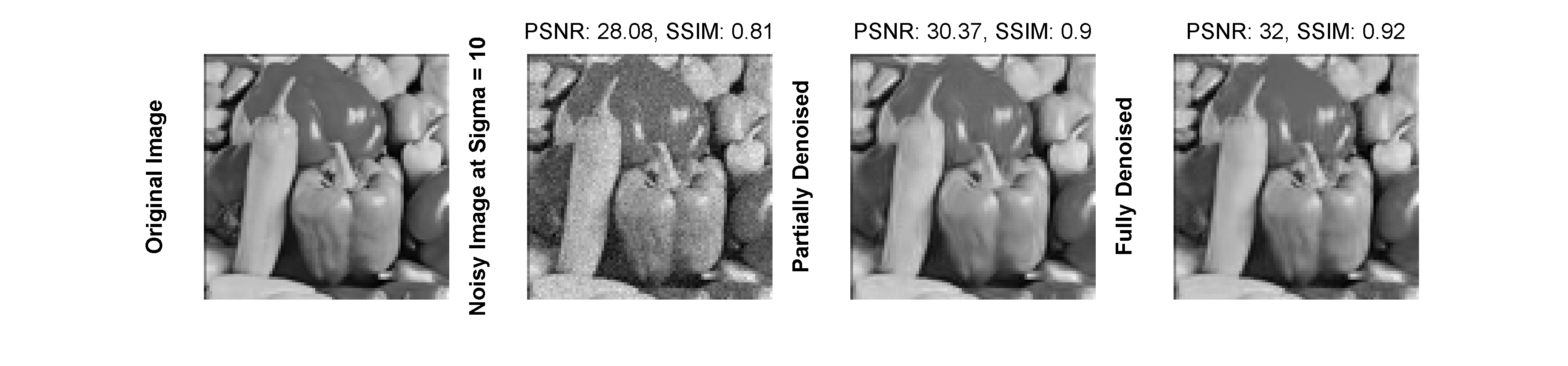}
	\includegraphics[width=1\linewidth]{./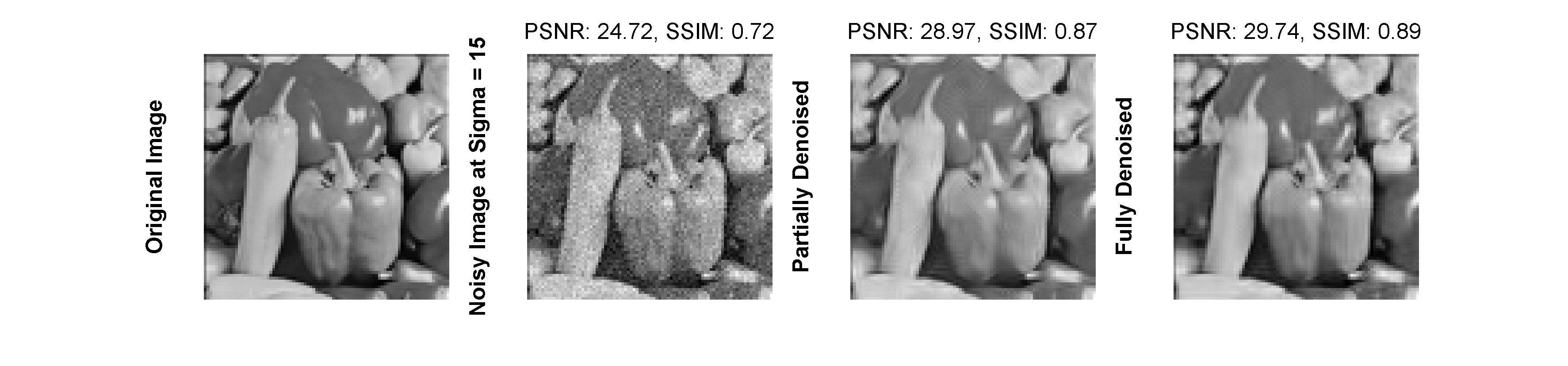}
	\includegraphics[width=1\linewidth]{./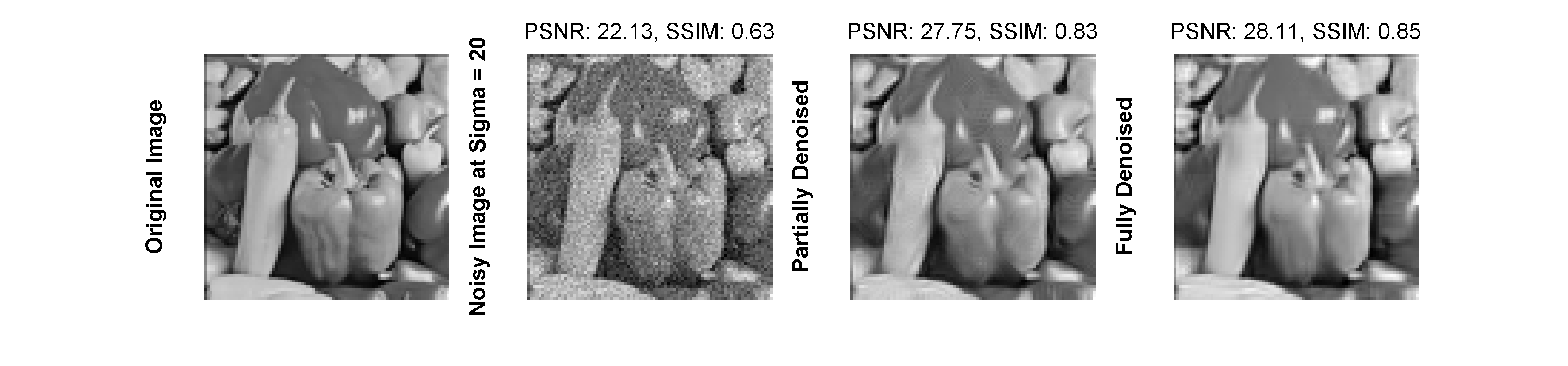}
	\includegraphics[width=1\linewidth]{./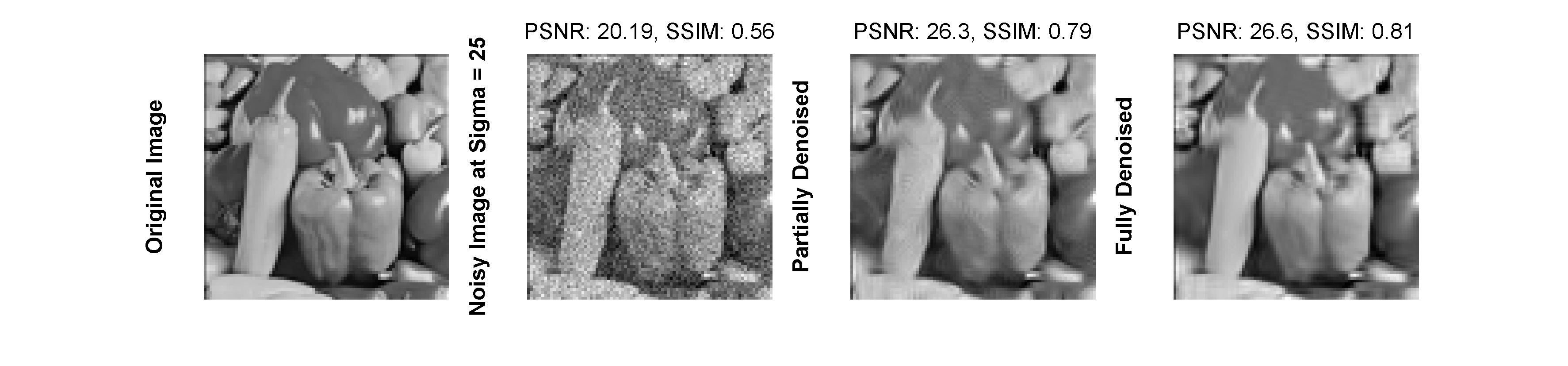}
\end{figure*}

\newpage
\begin{figure*}[t]
	\centering
	\includegraphics[width=1\linewidth]{./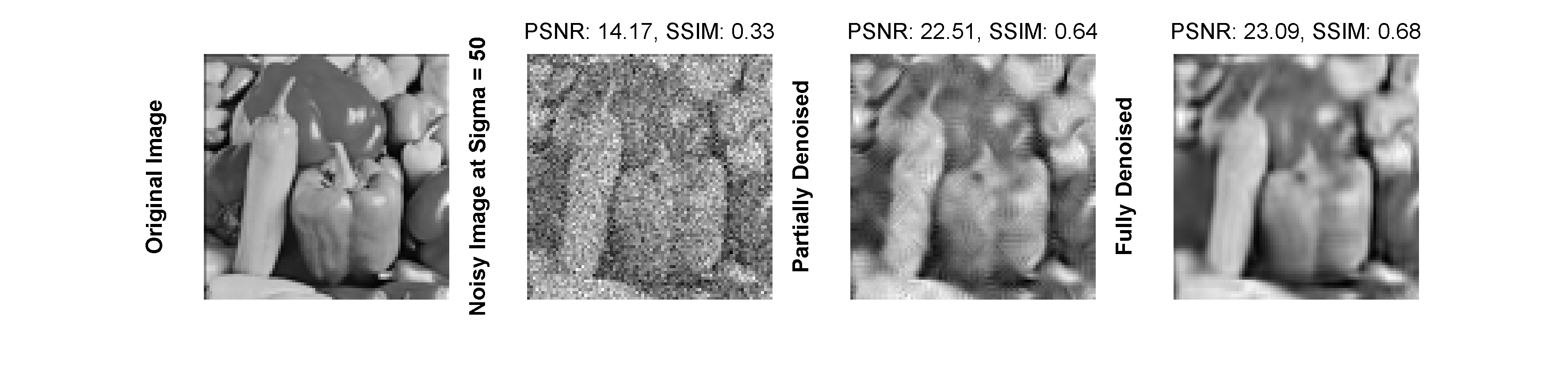}
	\includegraphics[width=1\linewidth]{./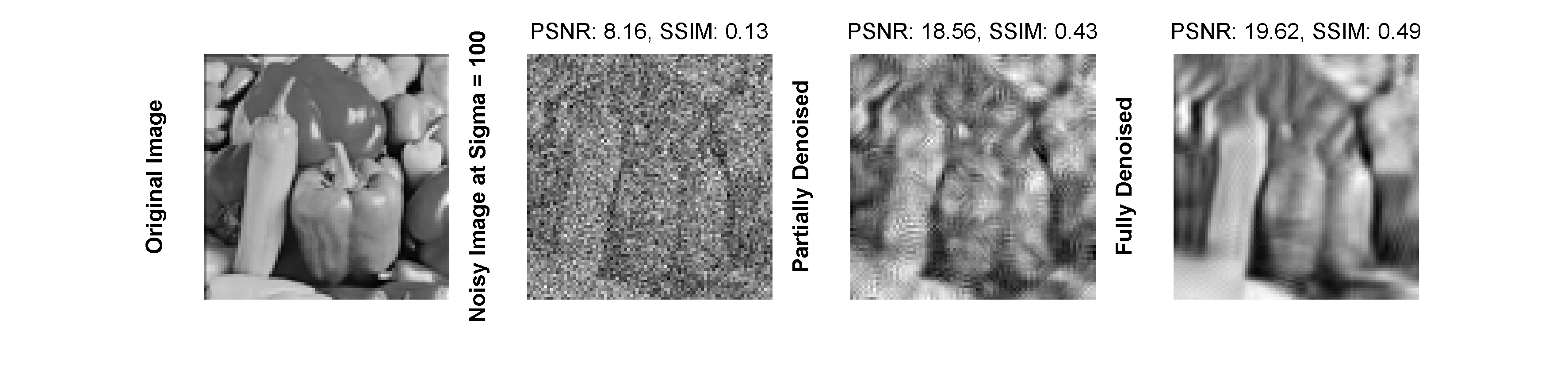}
	\includegraphics[width=1\linewidth]{./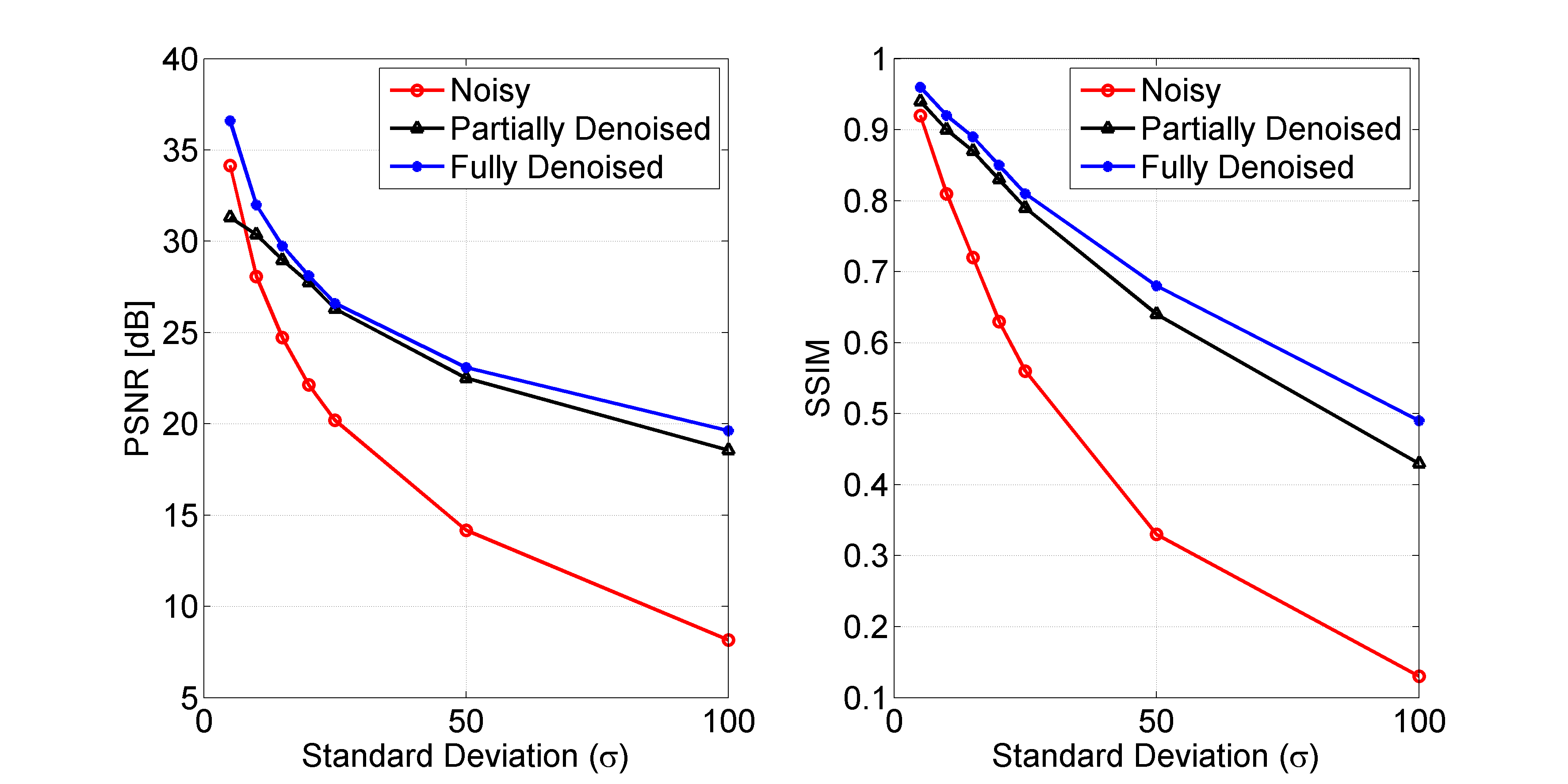}
	\caption{Denoising $256 \times256$ grayscale \textit{Peppers} standard test data images over noise $\sigma =  [5,10,15,20,25,50,100]$ when received at a node $\mu_\alpha$. Each row represent an original image, a noisy image, a partially denoised, and a fully denoised image, respectively, corrupted by a specific level of additive white Gaussian noise (AWGN). The graphical results in the end show PSNR [dB] and SSIM results in the form of graphs.}
\end{figure*}

\newpage
\begin{figure*}[t]
	\centering
	\includegraphics[width=1\linewidth]{./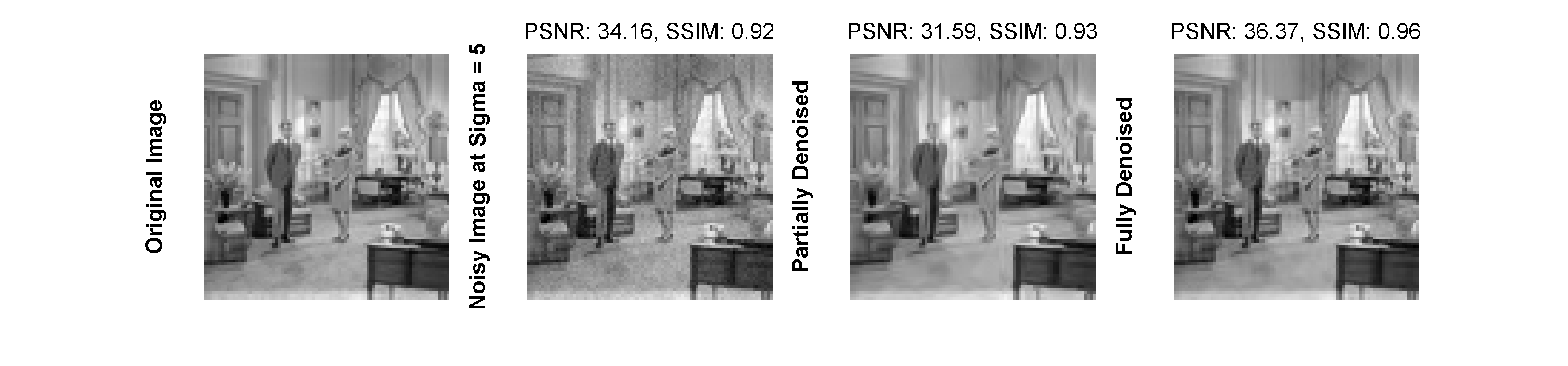}
	\includegraphics[width=1\linewidth]{./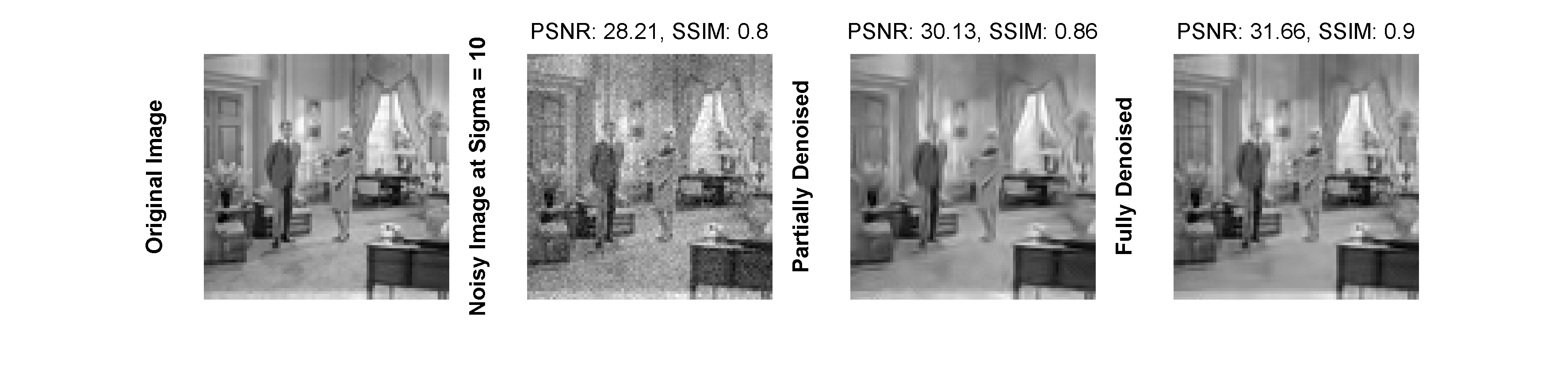}
	\includegraphics[width=1\linewidth]{./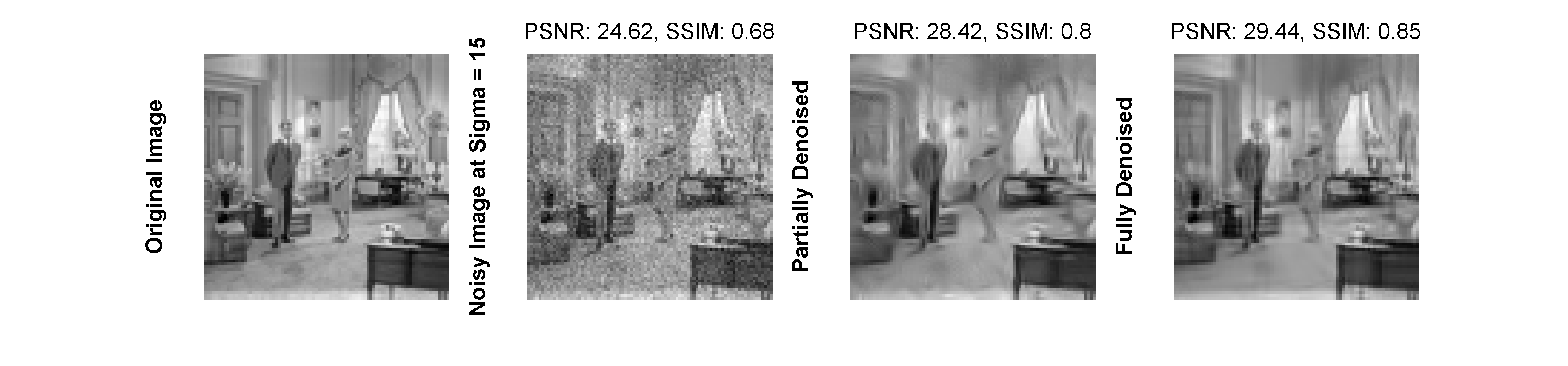}
	\includegraphics[width=1\linewidth]{./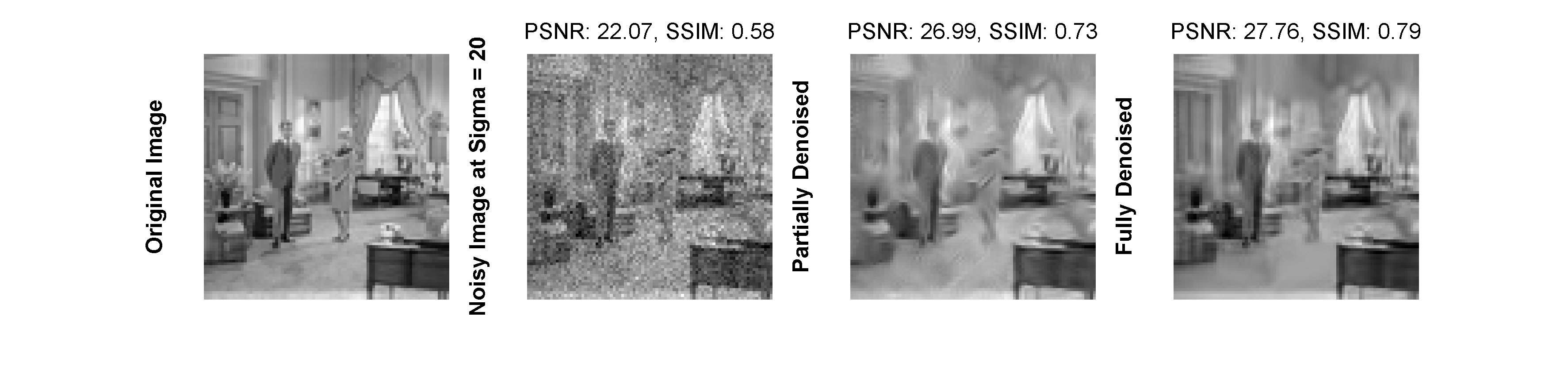}
	\includegraphics[width=1\linewidth]{./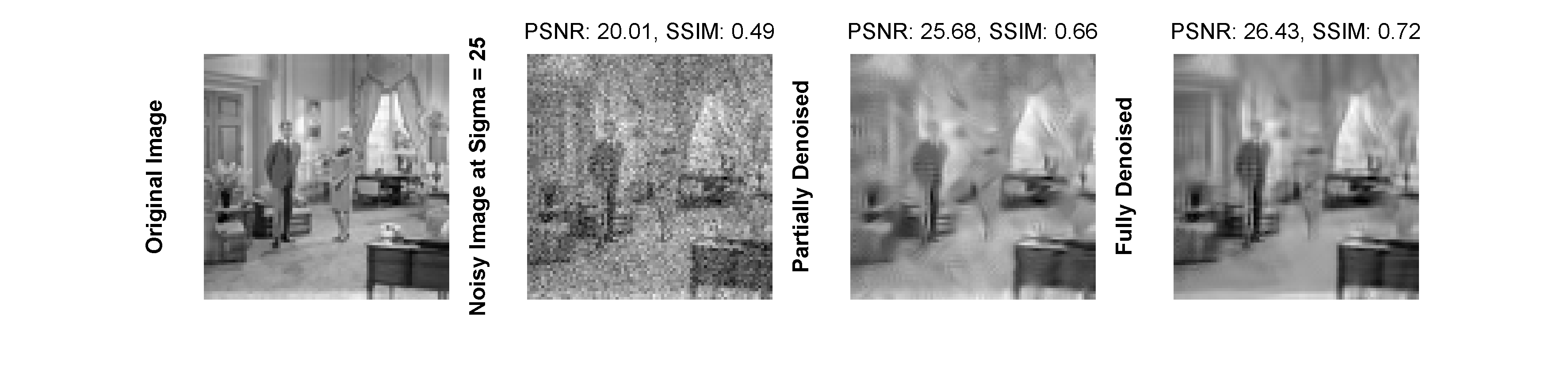}
\end{figure*}

\newpage
\begin{figure*}[t]
	\centering
	\includegraphics[width=1\linewidth]{./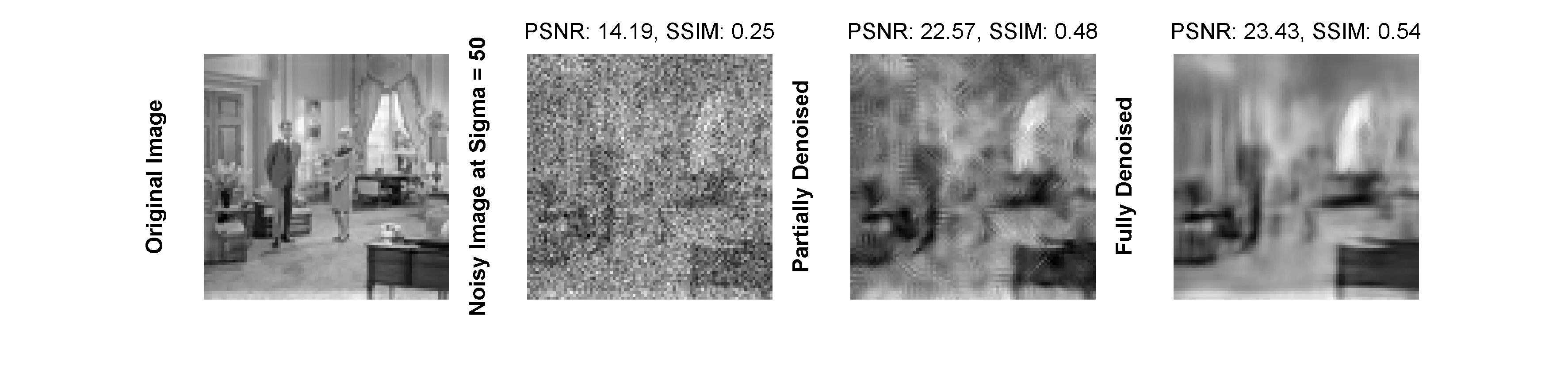}
	\includegraphics[width=1\linewidth]{./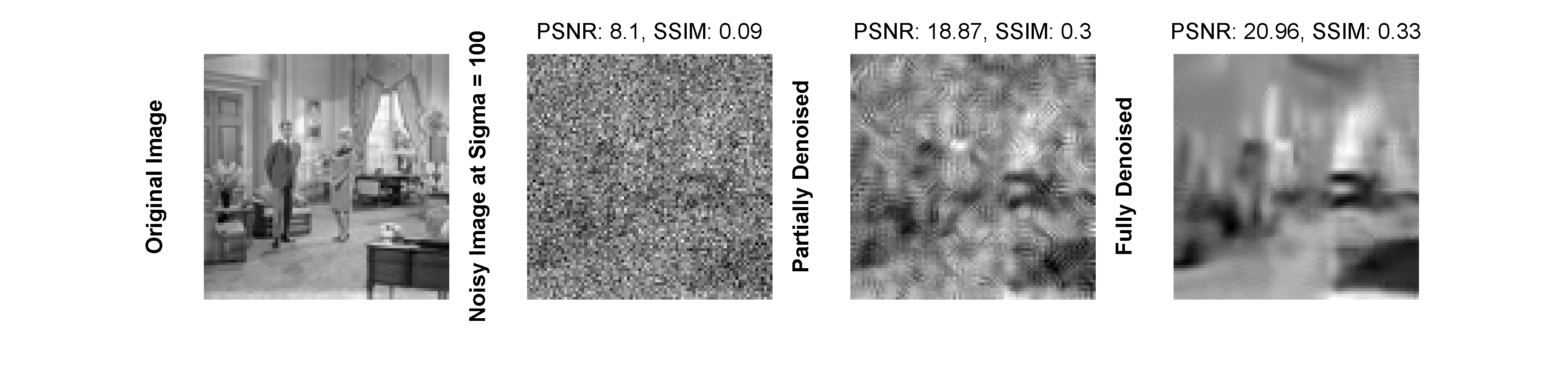}
	\includegraphics[width=1\linewidth]{./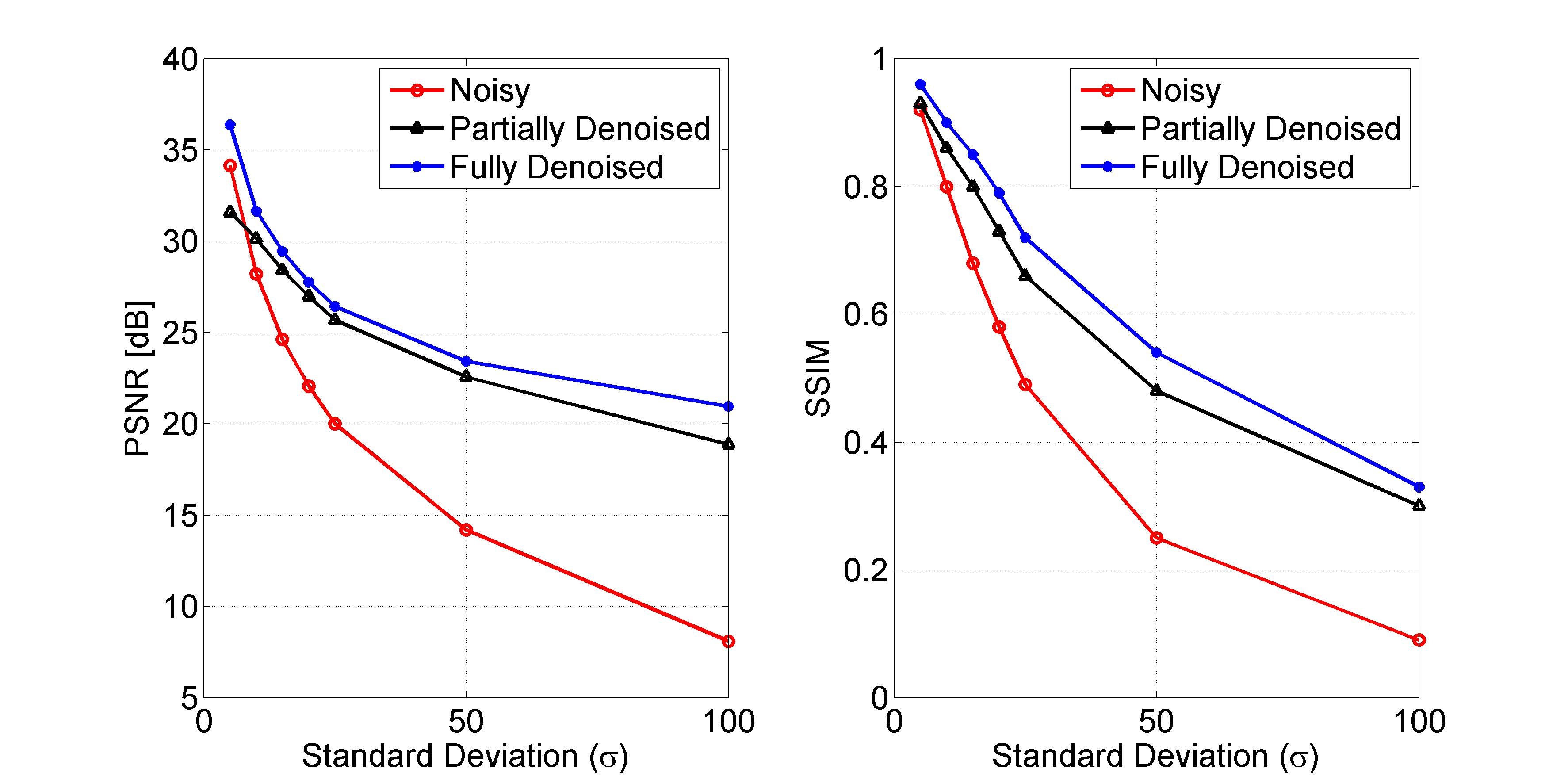}
	\caption{Denoising $256 \times256$ grayscale \textit{Living Room} standard test data images over noise $\sigma =  [5,10,15,20,25,50,100]$ when received at a node $\mu_\alpha$. Each row represent an original image, a noisy image, a partially denoised, and a fully denoised image, respectively, corrupted by a specific level of additive white Gaussian noise (AWGN). The graphical results in the end show PSNR [dB] and SSIM results in the form of graphs.}
\end{figure*}

\newpage
\begin{figure*}[t]
	\centering
	\includegraphics[width=1\linewidth]{./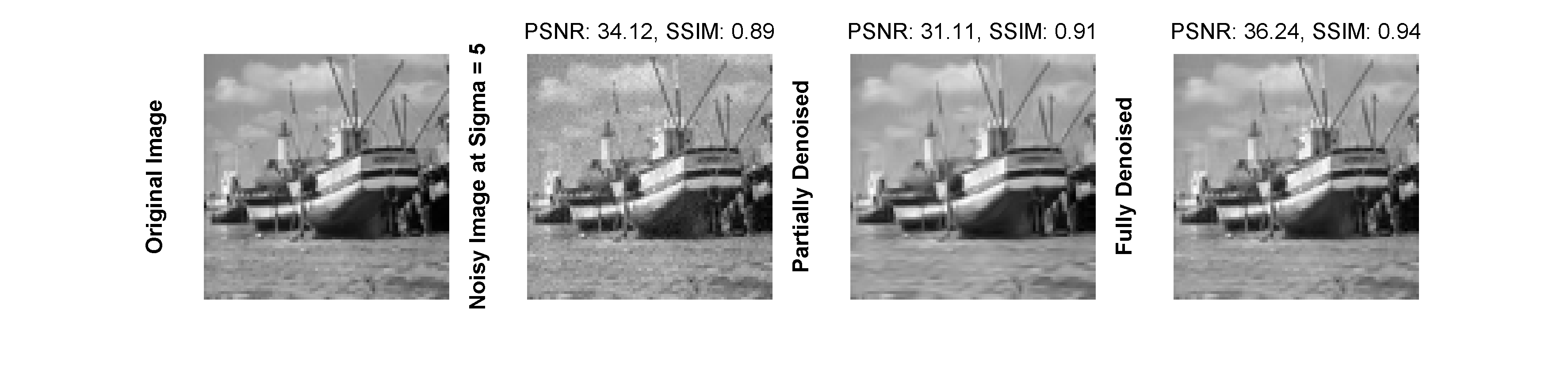}
	\includegraphics[width=1\linewidth]{./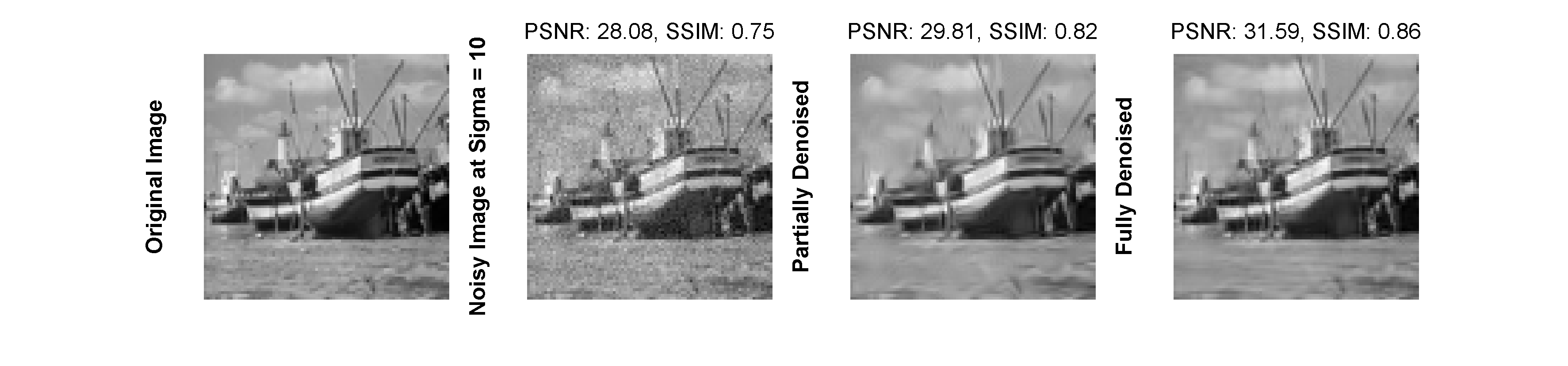}
	\includegraphics[width=1\linewidth]{./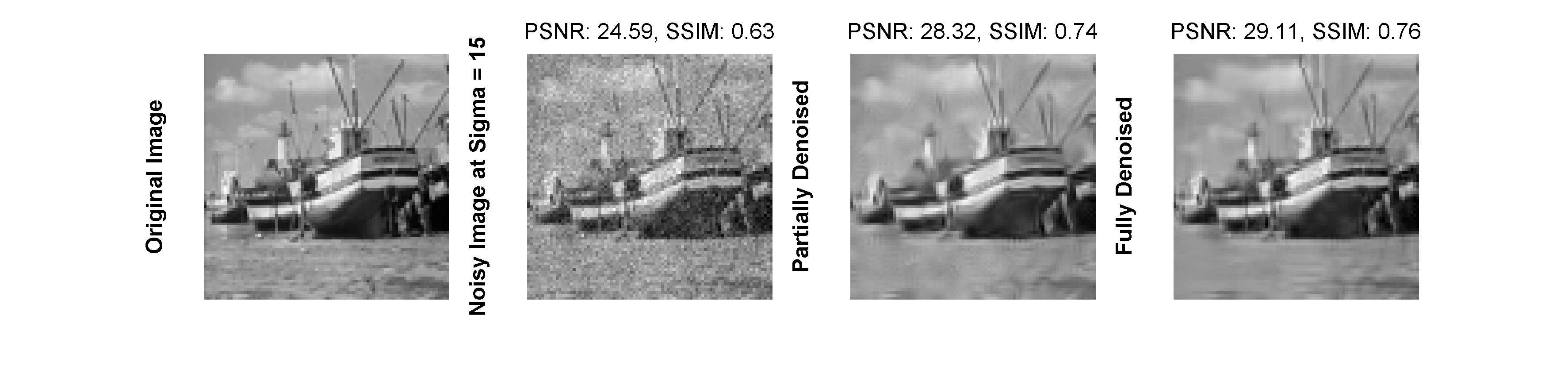}
	\includegraphics[width=1\linewidth]{./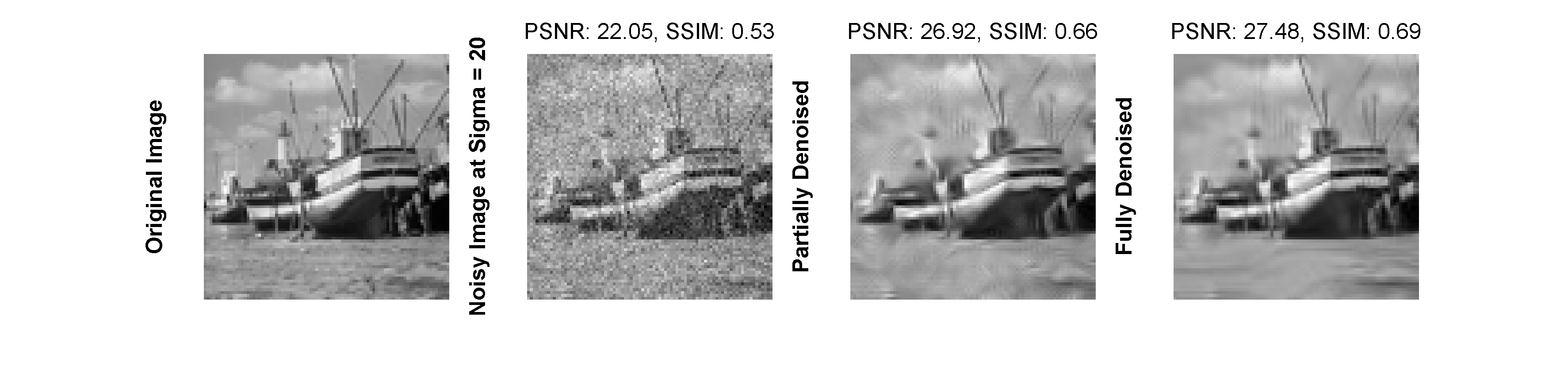}
	\includegraphics[width=1\linewidth]{./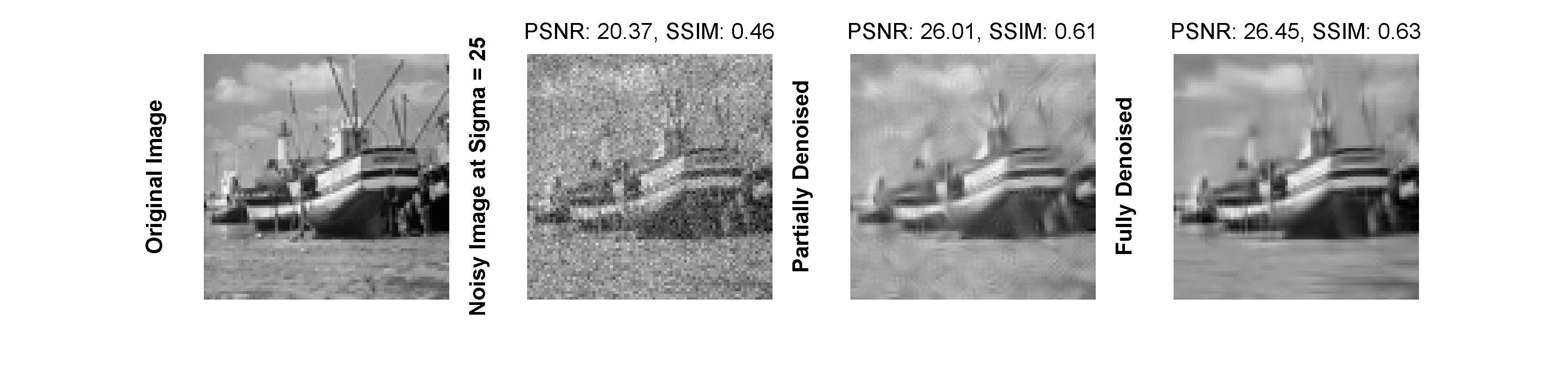}
\end{figure*}

\newpage
\begin{figure*}[t]
	\centering
	\includegraphics[width=1\linewidth]{./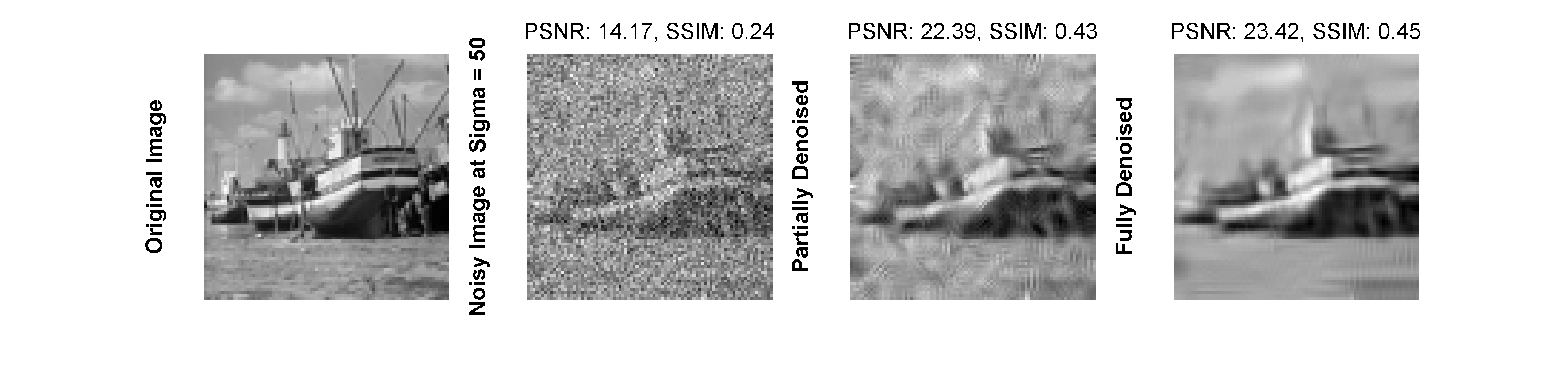}
	\includegraphics[width=1\linewidth]{./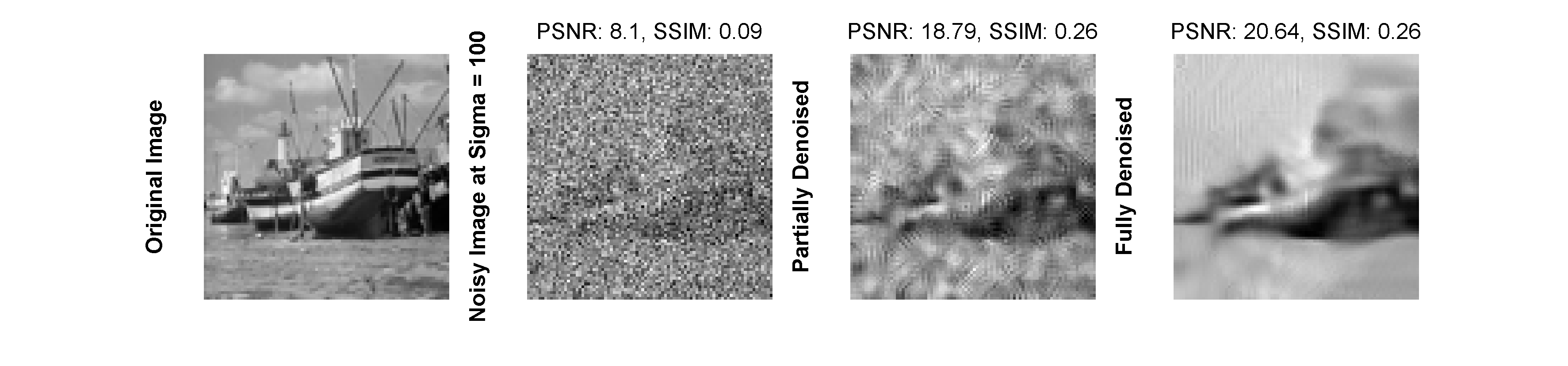}
	\includegraphics[width=1\linewidth]{./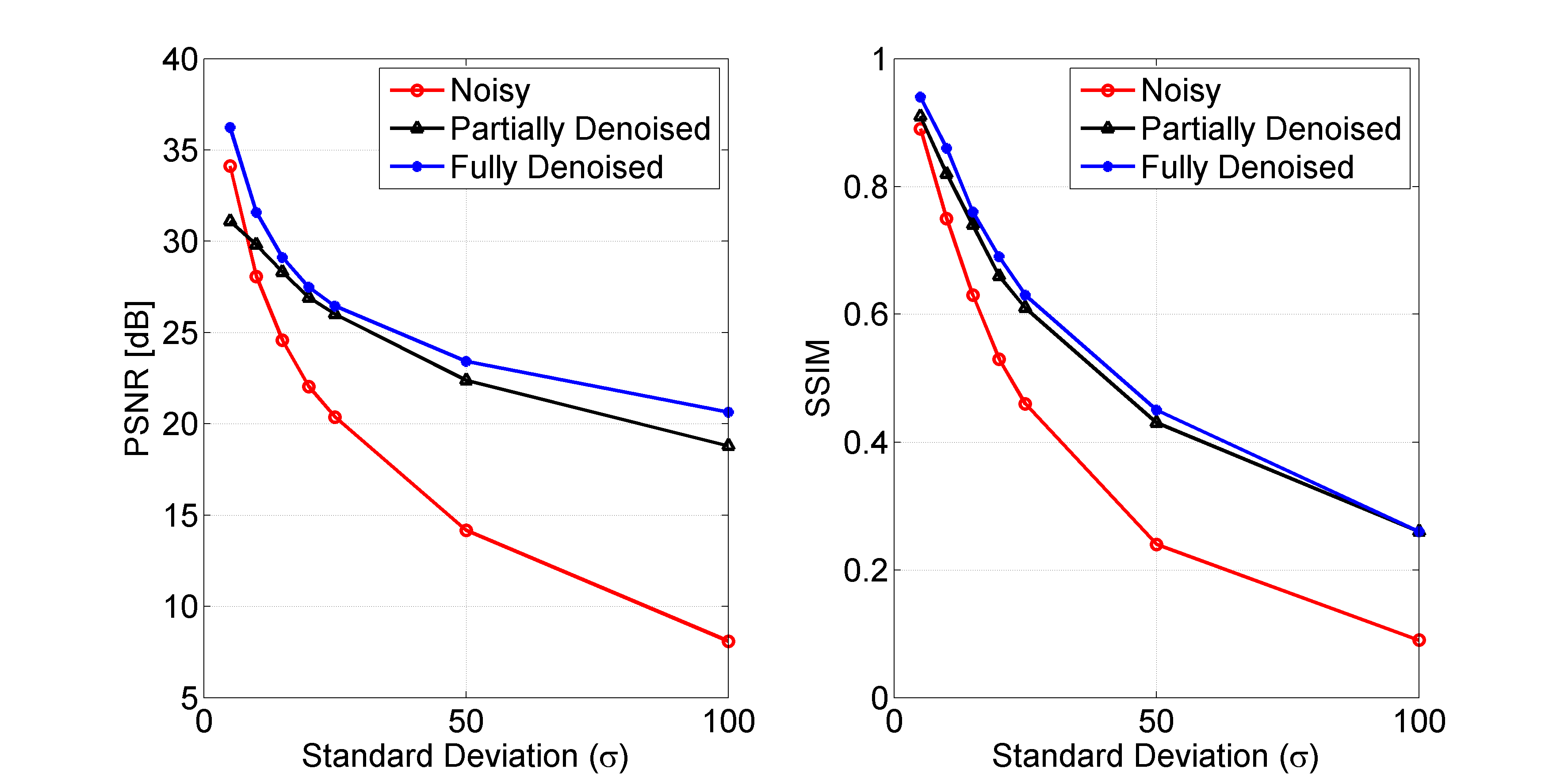}
	\caption{Denoising $256 \times256$ grayscale \textit{Boat} standard test data images over noise $\sigma =  [5,10,15,20,25,50,100]$ when received at a node $\mu_\alpha$. Each row represent an original image, a noisy image, a partially denoised, and a fully denoised image, respectively, corrupted by a specific level of additive white Gaussian noise (AWGN). The graphical results in the end show PSNR [dB] and SSIM results in the form of graphs.}
\end{figure*}

\newpage
\begin{figure*}[t]
	\centering
	\includegraphics[width=1\linewidth]{./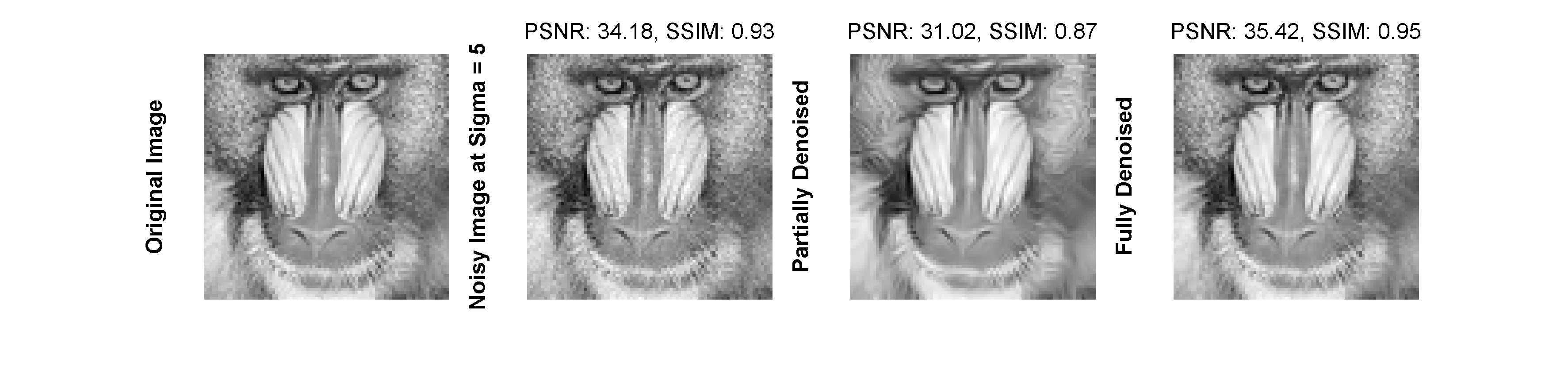}
	\includegraphics[width=1\linewidth]{./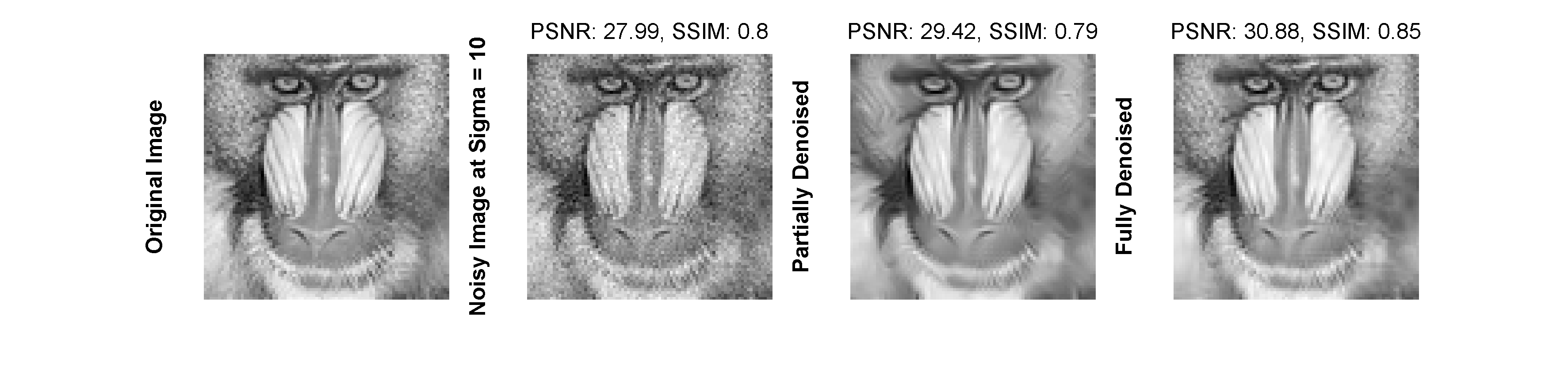}
	\includegraphics[width=1\linewidth]{./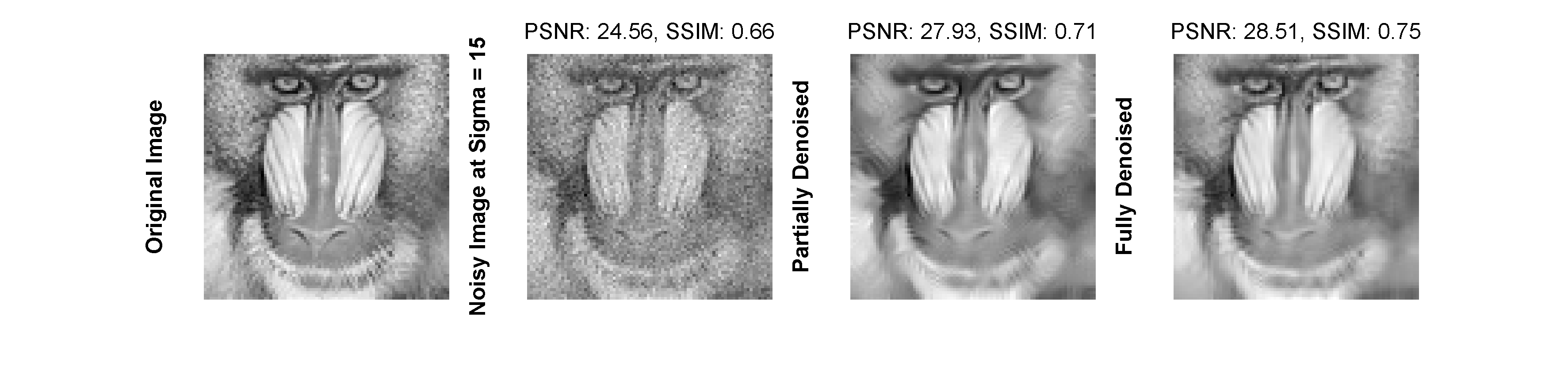}
	\includegraphics[width=1\linewidth]{./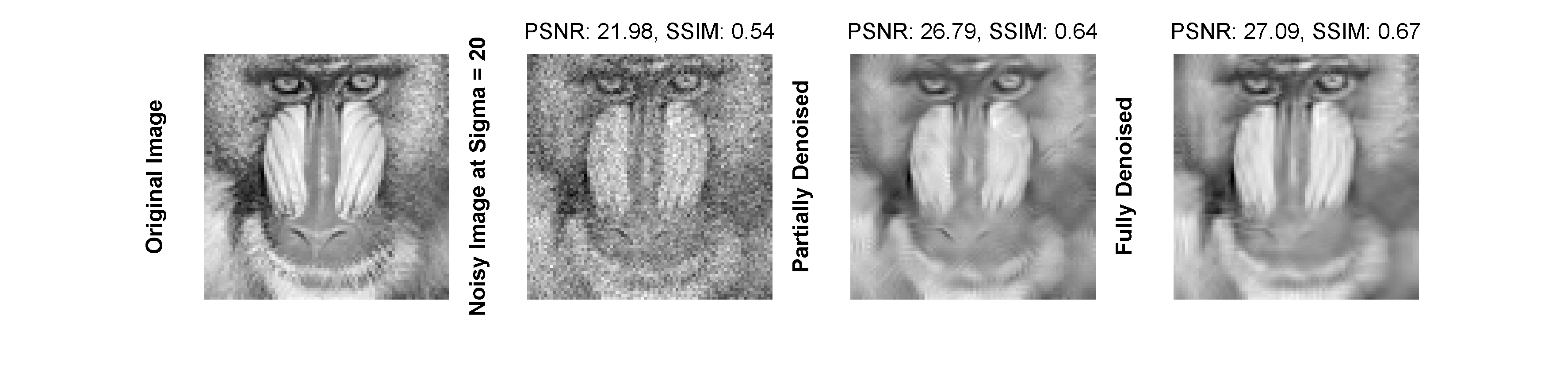}
	\includegraphics[width=1\linewidth]{./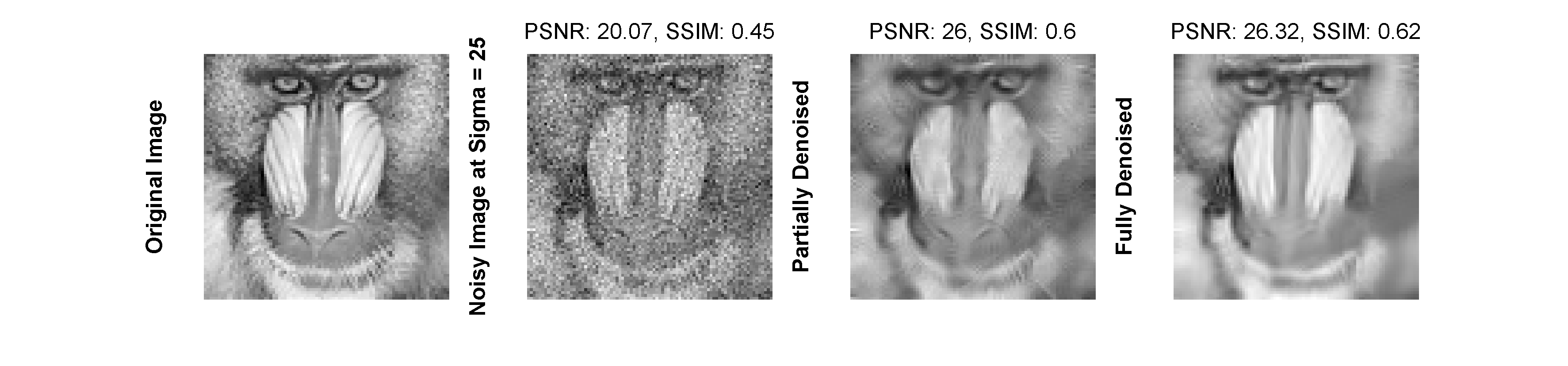}
\end{figure*}

\newpage
\begin{figure*}[t]
	\centering
	\includegraphics[width=1\linewidth]{./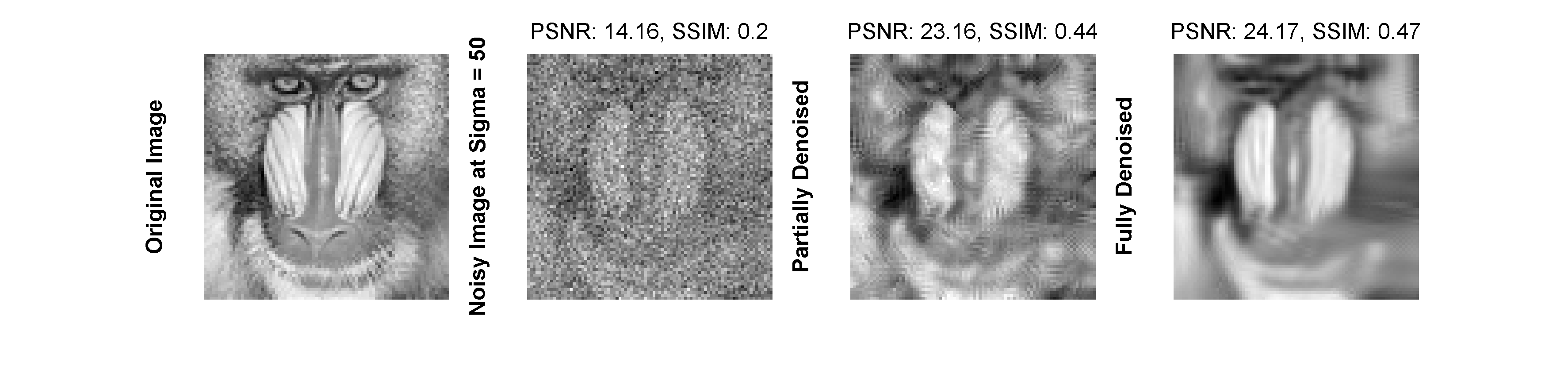}
	\includegraphics[width=1\linewidth]{./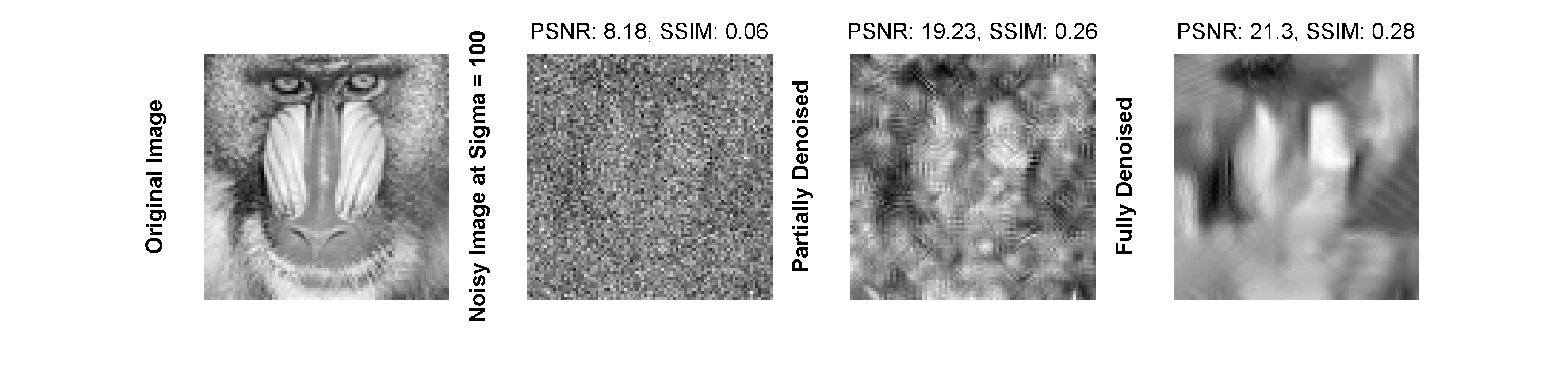}
	\includegraphics[width=1\linewidth]{./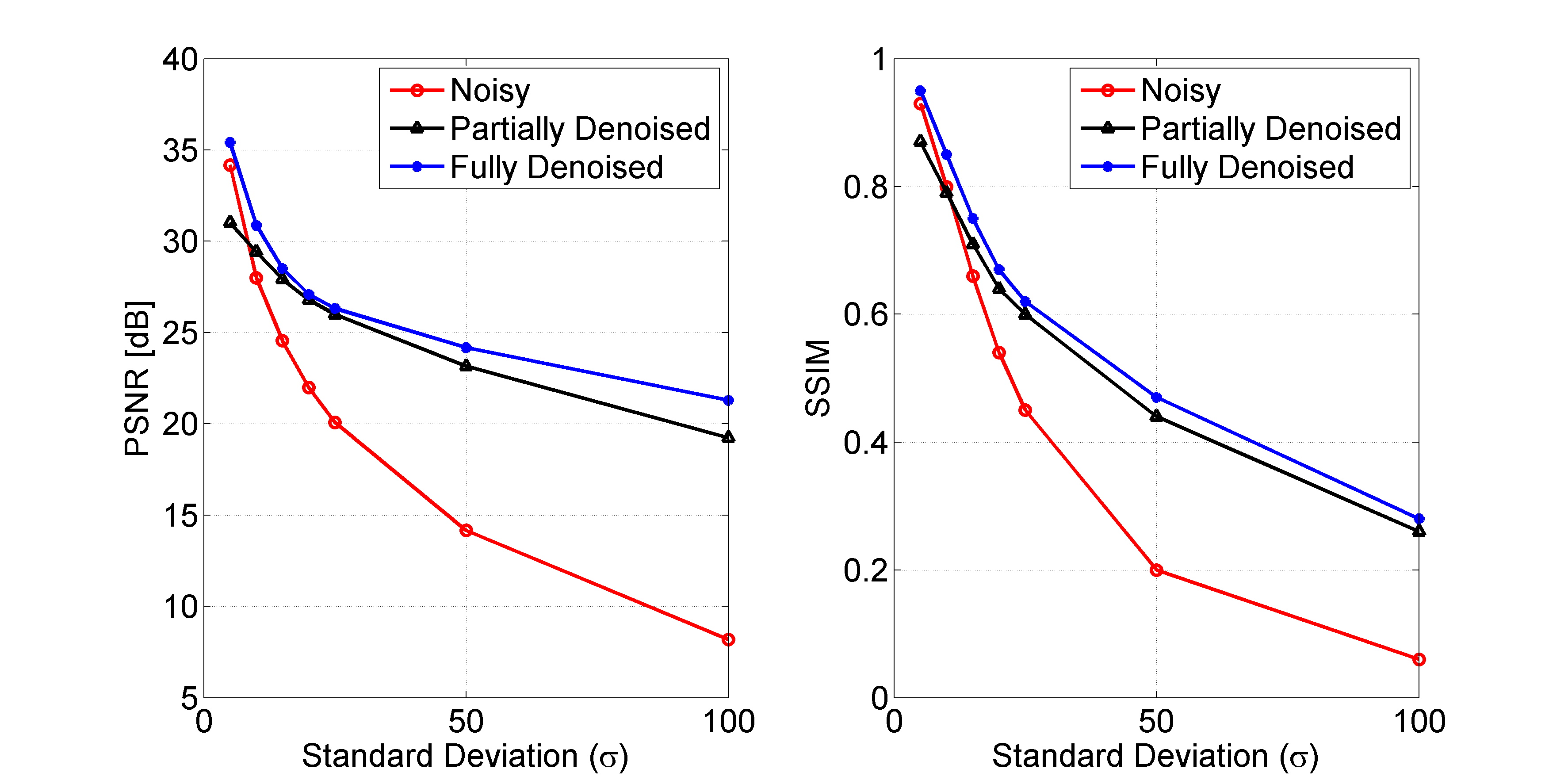}
	\caption{Denoising $256 \times256$ grayscale \textit{Mandrill} standard test data images over noise $\sigma =  [5,10,15,20,25,50,100]$ when received at a node $\mu_\alpha$. Each row represent an original image, a noisy image, a partially denoised, and a fully denoised image, respectively, corrupted by a specific level of additive white Gaussian noise (AWGN). The graphical results in the end show PSNR [dB] and SSIM results in the form of graphs.}
	\label{fig:denoised_end}
\end{figure*}

\newpage
\begin{landscape}
\begin{figure*}[t]
	\centering
	\includegraphics[width=1\linewidth]{./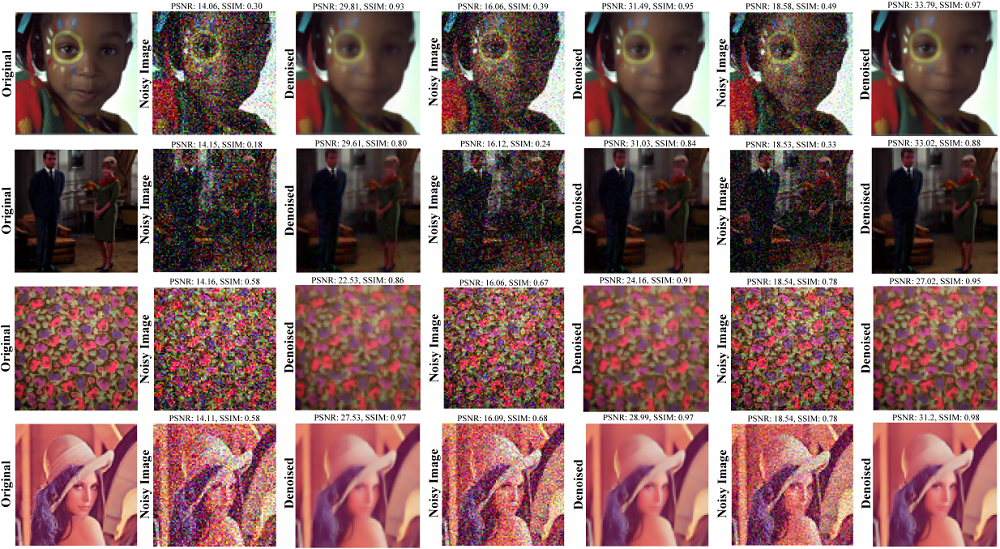}
	\caption{Denoising color images by the proposed color denoising method. 1st column: original images, 2nd and 3rd columns: noisy and denoised images at $\mathcal{N} (0,50)$, 4th and 5th columns: noisy and denoised images at $\mathcal{N} (0,40)$, 6th and 7th columns: noisy and denoised images at $\mathcal{N} (0,30)$,}
	\label{fig:color_denoised}
\end{figure*}
\end{landscape}
\clearpage

\newpage
\section{Conclusions}
\label{Conclusion}
In this work, we discussed our proposed framework that ensures energy-efficiency and data-denoising in a wireless sensor network. Our system is enriched with a layer-adaptive method that uses a 3-tier communication mechanism for effective and energy-efficient communication among the nodes ultimately minimizing the energy holes. Our presented mathematical coverage model effectively dealt with the formation of coverage holes thereby yielding a robust network. For combating noise in the data, we proposed a collaborative transform-domain based denoising algorithm to take care of the unwanted components. As shown with the help of many simulation results, our framework outperformed traditional algorithms by a significant margin, and provided a computationally-desirable algorithm for real-time applications.

\bibliographystyle{unsrt}
\bibliography{AINA_2018_Journal}

\begin{wrapfigure}{l}{25mm}
	\includegraphics[width=1in,height=1.25in,clip]{./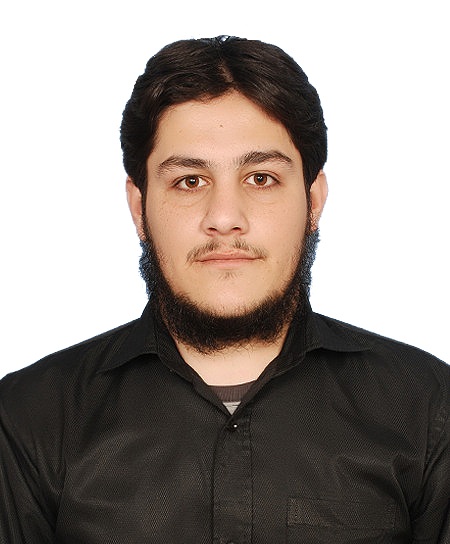}
\end{wrapfigure}\textbf{Muzammil Behzad} received his partially-funded B.S. degree with distinctions (double medalist and valedictorian) from COMSATS University Islamabad (CUI), Pakistan, and his fully-funded M.S. degree from King Fahd University of Petroleum and Minerals (KFUPM), Saudi Arabia, both in Electrical Engineering. Presently, he is serving as Research Associate at Pukyong National University (PKNU), South Korea since May 2018. Before joining PKNU, he served as Research Associate as well as Pioneer Member of Office of Hybrid Learning at CUI since 2013. His research interests are oriented around signal and image processing, wireless sensor networks, and their applications. He is a lifetime member of Pakistan Engineering Council (PEC), student member of Institute of Electrical and Electronics Engineers (IEEE), and member of Society for Industrial and Applied Mathematics (SIAM). He currently holds more than 6 years of experience in teaching and research. He is also the recipient of employee of the year and the best supervised project award from CUI in the fiscal year 2013-14.\\

\begin{wrapfigure}{l}{25mm} 
	\includegraphics[width=1in,height=1.25in,clip]{./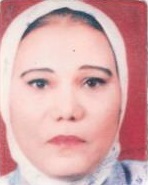}
\end{wrapfigure}\textbf{Manal Abdullah} received her Ph.D. in Computers and Systems Engineering from Ain Shams University, Egypt in 2002. She has vast experience in industrial computer networks and embedded systems. Currently, she is serving as associate professor in the faculty of computing and information technology at King Abdulaziz University, Saudi Arabia. Dr. Manal has published more than 75 research papers in various international journals and conferences. Her research interests include computer networks, performance evaluation and analysis of wireless sensor networks, network management, artificial intelligence, big data analysis, and pattern recognition.\\

\begin{wrapfigure}{l}{25mm}
	\includegraphics[width=1in,height=1.25in,clip]{./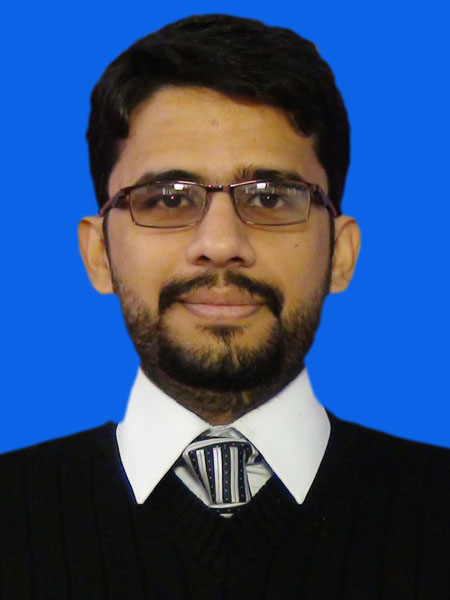}
\end{wrapfigure}\textbf{Muhammad Talal Hassan} received his B.S. degree from Riphah International University, Pakistan, and his M.S. degree from National University of Sciences and Technology (NUST), Pakistan, both in Electrical Engineering, in 2009 and 2013, respectively. Currently, he is working as Lecturer in COMSATS University Islamabad (CUI), Pakistan. He has worked as Research Assistant on the projects ``Efficient Routing in Multi-radio Dynamic-spectrum Access" and ``Network Performance Monitoring for PERN" funded by Higher Education Commission (HEC) of Pakistan. His current research interests include wireless networks, cognitive radio networks, computer networks and wireless communication.\vspace{0.5cm}

\begin{wrapfigure}{l}{25mm} 
	\includegraphics[width=1in,height=1.25in,clip]{./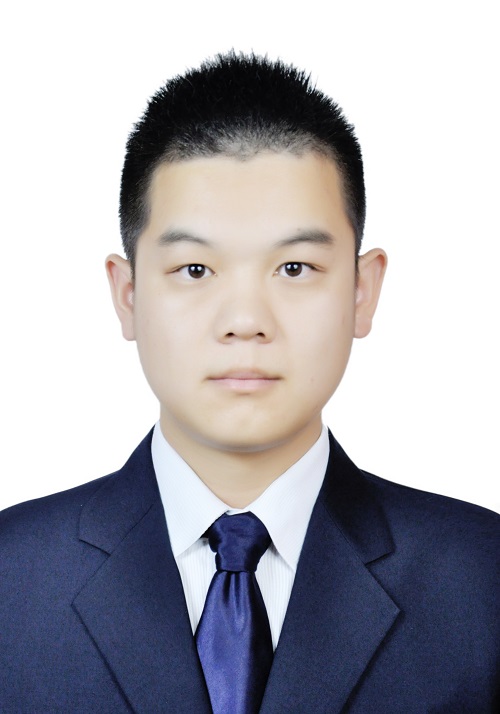}
\end{wrapfigure}\textbf{Yao Ge} received his B.Eng. degree in Electronics and Information Engineering, and his M.Eng. degree (research) in Communication and Information System both from Northwestern Polytechnical University, China, in 2013 and 2016, respectively. He is currently working towards his Ph.D. degree in Electrical Engineering at the Chinese University of Hong Kong (CUHK), Hong Kong. From October 2015 to March 2016, he was a visiting student with the Department of Electrical and Computer Systems Engineering, Monash University, Melbourne, Australia. From April 2016 to August 2016, he was a visiting student with Computer, Electrical and Mathematical Sciences \& Engineering (CEMSE) Division, King Abdullah University of Science and Technology, Saudi Arabia. His current research interests include wireless communication, information theory and statistical signal processing. Mr. Ge was awarded the 2013 Outstanding Bachelor Thesis, the 2016 Outstanding Master Thesis of Northwestern Polytechnical University.\vspace{0.5cm}

\begin{wrapfigure}{l}{25mm} 
	\includegraphics[width=1in,height=1.25in,clip]{./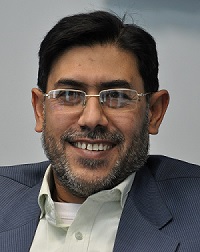}
\end{wrapfigure}\textbf{Mahmood Ashraf Khan} is working as Director in the Center of Advanced Studies in Telecommunication (CAST) at COMSATS University Islamabad (CUI), Pakistan. He has received his B.S. degree in Electrical Engineering from University of Engineering and Technology (UET), Pakistan, and his Ph.D. from Aston University, United Kingdom (UK) in 1992. He has several publications in international conferences and journal of international repute. He is also author of several books. He has executed several funded research projects in collaboration of Japanese researchers, and was also a visiting researcher at Osaka University, Japan. He has completed several community projects in the field of information and communications technology under the umbrella of ITU and APT. His areas of interest are broadband networks, wireless networks and multimedia network architecture.

\end{document}